\definecolor{ForestGreen}{rgb}{0.1333,0.5451,0.1333}
\crefname{equation}{}{}
\DeclareMathOperator*{\argmin}{arg\,min}
\newcommand\remove[1]{}
\newtheorem{theorem}{Theorem}
\newtheorem{lemma}{Lemma}[section]
\newtheorem*{lemma*}{Lemma}
\newtheorem{corollary}[lemma]{Corollary}
\newtheorem*{corollary*}{Corollary}
\theoremstyle{definition}
\newtheorem*{theorem*}{Theorem}
\newtheorem{definition}[lemma]{Definition}
\newtheorem*{rem*}{Remark}
\newtheorem{obs}{Observation}
\newcommand\R{\mathbb{R}}
\newcommand\E{\mathbb{E}}
\newcommand{\eps}{\varepsilon}
\renewcommand{\O}{\widetilde{O}}
\newcommand{\pe}{\preceq}
\newcommand{\se}{\succeq}
\newcommand{\hf}{\widehat{f}}
\newcommand{\assign}{\leftarrow}
\newcommand{\otilde}{\O}
\renewcommand{\forall}{\mathrm{\text{ for all }}}
\renewcommand{\d}{\delta}
\newcommand{\D}{\Delta}
\newcommand{\wpe}{w^+_e}
\newcommand{\wme}{w^-_e}
\newcommand{\npe}{\nu^+_e}
\newcommand{\nme}{\nu^-_e}
\newcommand{\upe}{u^+_e}
\newcommand{\ume}{u^-_e}
\newcommand{\mpe}{\mu^+_e}
\newcommand{\mme}{\mu^-_e}
\newcommand{\spe}{s^+_e}
\newcommand{\sme}{s^-_e}
\newcommand{\cpe}{c^+_e}
\newcommand{\cme}{c^-_e}
\newcommand{\g}{\nabla}
\renewcommand{\E}{D^V}
\newcommand{\new}{\mathrm{new}}
\newcommand{\up}{u^+}
\newcommand{\um}{u^-}
\renewcommand{\wp}{w^+}
\newcommand{\wm}{w^-}
\newcommand{\val}{\mathrm{val}}
\newcommand{\Maxflow}{\textsc{Maxflow}}
\newcommand{\X}{\mathcal{X}}
\newcommand{\ReduceToP}{\textsc{ReduceTo2P}}
\newcommand{\OracleToP}{\textsc{Oracle2P}}
\newcommand{\tE}{\widetilde{D^V_w}}
\newcommand{\tval}{\widetilde{\val}}
\newcommand{\tphi}{\widetilde{\phi}}
\newcommand{\vallp}{\val_{g,r,s}}
\newcommand{\tD}{\widehat{\D}}
\newcommand{\tlog}{\widetilde{\log}}
\newcommand{\tB}{\widetilde{\B}}
\renewcommand{\phi}{D}
\newcommand{\B}{\mathcal{B}}
\newcommand{\tEv}{\widetilde{D^V_{w+\mu}}}
\newcommand{\Augment}{\textsc{Augment}}
\newif\ifrandom
\newcommand{\maxflowruntime}{m^{4/3+o(1)}U^{1/3}}
\newcommand{\defeq}{\stackrel{\mathrm{\scriptscriptstyle def}}{=}}
\newcommand{\poly}{{\mathrm{poly}}}
\newcommand{\err}{\frac{1}{2^{\poly(\log m)}}}
\newcommand{\todolater}[1]{}
\crefname{algocf}{Algorithm}{Algorithms}
\author{
Yang P. Liu \\
Stanford University \\
\texttt{yangpliu@stanford.edu}
\thanks{Research supported by the U.S.
Department of Defense via an NDSEG fellowship.} 
\and
Aaron Sidford \\
Stanford University \\
\texttt{sidford@stanford.edu}
\thanks{Research supported by NSF CAREER Award CCF-1844855.}
}
\begin{document}

\title{Faster Divergence Maximization for Faster Maximum Flow}

\begin{titlepage}
\clearpage\maketitle
\thispagestyle{empty}

\begin{abstract}
In this paper we provide an algorithm which given any $m$-edge $n$-vertex directed graph with integer capacities at most $U$ computes a maximum $s$-$t$ flow for any vertices $s$ and $t$ in $\maxflowruntime$ time. This improves upon the previous best running times of $m^{11/8+o(1)}U^{1/4}$ (Liu Sidford 2019), $\otilde(m \sqrt{n} \log U)$ (Lee Sidford 2014), and $O(mn)$ (Orlin 2013) when the graph is not too dense or has large capacities.

To achieve the results this paper we build upon previous algorithmic approaches to maximum flow based on interior point methods (IPMs). In particular, we overcome a key bottleneck of previous advances in IPMs for maxflow (M\k{a}dry 2013, M\k{a}dry 2016, Liu Sidford 2019), which make progress by maximizing the energy of local $\ell_2$ norm minimizing electric flows. We generalize this approach and instead maximize the divergence of flows which minimize the Bregman divergence distance with respect to the weighted logarithmic barrier. This allows our algorithm to avoid dependencies on the $\ell_4$ norm that appear in other IPM frameworks (e.g. Cohen M\k{a}dry Sankowski Vladu 2017, Axiotis M\k{a}dry Vladu 2020). Further, we show that smoothed $\ell_2$-$\ell_p$ flows (Kyng, Peng, Sachdeva, Wang 2019), which we previously used to efficiently maximize energy (Liu Sidford 2019), can also be used to efficiently maximize divergence, thereby yielding our desired runtimes. We believe both this generalized view of energy maximization and generalized flow solvers we develop may be of further interest.
\end{abstract}

\end{titlepage}

\newpage

\section{Introduction}
\label{sec:intro}

In this paper, we consider the classic, well-studied, \emph{maxflow} problem of computing a maximum $a$-$b$ flow for vertices $a$ and $b$ in an $m$-edge, $n$-vertex capacitated directed graph with integer capacities at most $U$ (see \cref{sec:prelim} for a precise specification). This problem has been the subject of decades of extensive research, encompasses prominent problems like minimum $a$-$b$ cut and maximum matching in bipartite graphs, and in recent years has served as a prominent proving ground for algorithmic advances in optimization \cite{CKMST11,KLOS14,Sherman13,LS14,Madry13,Madry16,Peng16,ST18,Sherman17}. 

The main result of this paper is a deterministic $\maxflowruntime$ time algorithm for solving maxflow. This runtime improves upon the previous best running times of $m^{11/8+o(1)}U^{1/4}$ \cite{LS19}, $\otilde(m \sqrt{n} \log U)$ \cite{LS14}, and $O(mn)$ \cite{Orlin13} when the graph is not too dense and doesn't have too large capacities.\footnote{Here and throughout the paper we use $\otilde(\cdot)$ to hide $\poly\log(n,m,U)$ factors.} To obtain this result, we provide a new perspective on previous interior point method (IPM) based approaches to $o(m^{3/2})$ runtimes for maxflow \cite{Madry13,Madry16,LS19} and obtain a generalized framework that leads to faster methods (see \cref{sec:overview}). Further, we show how to implement this framework by generalizing previous iterative refinement based algorithms to solve a new class of undirected flow problems to high precision in almost linear time (see \cref{sec:overviewrefine}).

We believe each of these advances is of independent interest and may broadly serve as the basis for faster algorithms for flow problems. We give potential further directions in \cref{sec:conclusion}.

\subsection{History, Context, and Significance}
\label{sec:significance}

\paragraph{Sparse, Unit-capacity Maxflow and Matching.}
To explain the significance of our result, for the remainder of \cref{sec:significance} we restrict our attention to the simplified problem of computing maxflow on unit capacity sparse graphs, i.e. the maxflow problem where $U = 1$ and $m = \O(n)$. This problem is equivalent, up to nearly linear time reductions, to the problems of computing a maximum cardinality set of disjoint $a$-$b$ paths in sparse graph \cite{LRS13} and computing a maximum cardinality matching in a sparse bipartite graph \cite{Lin09}.

Despite the simple nature of this problem, obtaining running time improvements for it is notoriously difficult. In the 1970s it was established that this problem can be solved in $\O(n^{3/2})$ \cite{Karzanov73,ET75} time  and no improvement was found until the 2010s. Even for the easier problem of computing an $\eps$-approximate maxflow, i.e. a feasible flow of value at least $(1-\eps)$ times the optimum, in an undirected graph, no running time improvement was found until the same time period.\footnote{Improvements were known for dense graphs and graphs with bounded maxflow value \cite{Karger97,Karger98a,Karger98b,Karger99}.} 

\paragraph{Energy Maximization Based Breakthroughs.}
This longstanding barrier for algorithmic efficiency was first broken by a breakthrough result of Christiano, Kelner, M\k{a}dry, Spielman and Teng in 2011 \cite{CKMST11}. This work provided an algorithm which computed an $\eps$-approximate maxflow in unit-capacity sparse graphs in time $\O(n^{4/3} \poly(1/\eps))$. Interestingly, this work leveraged a seminal result of Spielman and Teng in 2004 \cite{ST04}, that a broad class of linear systems associated with undirected graphs, known as Laplacian systems, could be solved in nearly linear time. 

Solving Laplacian systems was known to be equivalent to computing (to high precision) \emph{$a$-$b$ electric flows}, the flow $f \in \R^E$ that sends one unit of flow and minimizes \emph{energy}, $\mathcal{E}(f) \defeq \sum_{e \in E} r_e f_e^2$ for input edge resistances $r \in \R^E_{>0}$. \cite{CKMST11} demonstrated that this fact combined with an iterative method, known as the multiplicative weights update (MWU), immediately yields an $\O(n^{3/2} \poly(1/\eps))$-time maxflow algorithm. They then introduced a new technique of \emph{energy maximization} to obtain their improved runtime. More precisely, they noted that when MWU converged slowly, a small number of edges could be removed to dramatically increase the energy of the electric flows considered. By trading off the loss in approximation quality from edge removal with possible energy increase, this paper achieved its breakthrough $\O(n^{4/3} \poly(1/\eps))$ runtime.

This result also immediately created a tantalizing opportunity to obtain comparable improvements for solving the problem exactly on directed graphs. In 2008, Daitch and Spielman \cite{DS08} had already shown that another powerful class of continuous optimization methods, IPMs, reduce the problem of solving maxflow exactly on directed graphs to solving $\O(n^{1/2})$ Laplacian systems. Consequently, this raised a natural question of whether energy maximization techniques of \cite{CKMST11} could be combined with IPMs to achieve faster running times.

In 2013, M\k{a}dry \cite{Madry13} provided a breakthrough result that this was indeed possible and obtained an $\O(n^{10/7}) = \O(n^{3/2 - 1/14})$ time algorithm for directed maxflow on sparse unit capacity graphs. This result was an impressive tour de force that involved multiple careful modifications to standard IPM analysis. Leveraging energy maximization techniques is intrinsically more difficult in IPMs than in MWU, where there is a type of monotonicity that does not occur naturally in IPMs. Additionally, several aspects of IPMs are somewhat brittle and tailored to $\ell_2$ and $\ell_4$ norms, rather than $\ell_\infty$ as in maxflow. Consequently, \cite{Madry13} performed multiple modifications to carefully track how both energy and IPM invariants changed during the IPM. While this sufficed to achieve the first $n^{3/2 - \Omega(1)}$ time maxflow algorithm, the aforementioned difficulties caused the ultimate running time to be slower than the $\O(n^{4/3})$ runtime for approximate maxflow on undirected graphs.

\vspace{-5 pt}
\paragraph{Beyond Energy Maximization.} Since \cite{CKMST11}, energy maximization has been applied to a host of other problems \cite{KMP12,CMMP13,AKPS19}. Further, \cite{Madry13} has been simplified and improved \cite{Madry16}, and applied to related graph problems \cite{CMSV16}. Very recently the runtime was improved by the authors to $n^{11/8+o(1)} = n^{3/2-1/8+o(1)}$ \cite{LS19} and this in turn lead to faster algorithms for mincost flow and shortest paths with negative edge lengths \cite{AMV20}. These works address IPM energy monotonicity issues in a novel way, but do not run in time $m^{4/3}$ due to issues of maintaining IPM invariants and working with $\ell_4$, rather than $\ell_\infty$ (see \cref{sec:overview}).\footnote{Technically, in \cite{LS19} and \cite{AMV20}, weight changes are computed to reduce the $\ell_\infty$ norm of congestion of an electric flow vector. However, the centrality depends on the $\ell_4$ norm, and we see this as why previous works achieved worse than a $m^{4/3}$ runtime. Since the initial version of this paper was released, \cite{AMV20} was updated to leverage the techniques of this paper and achieved a $m^{4/3+o(1)}$ runtime for a broader range of problems.}

In light of this past decade of advances to maxflow, this paper makes several contributions. First, we obtain an $n^{4/3 + o(1)}$ maxflow algorithm on sparse, unit-capacity graphs, thereby closing the longstanding gap between the runtime achieved for approximately solving maxflow using Laplacian system solvers and the best runtime known for solving maxflow on directed graphs.\footnote{Since \cite{CKMST11}, faster running times for approximate maxflow on undirected graphs have been achieved \cite{LRS13,Sherman13,KLOS14,Peng16,Sherman17,ST18} and the problem is now solvable in nearly linear time. It is unclear whether these improvements should translate to faster directed maxflow runtimes, though this paper and \cite{LS19} both leverage results from a related line of work on solving undirected $\ell_p$-flow problems to high precision \cite{AKPS19,KPSW19,AS20}.}

Second, we shed light on the energy maximization framework that underlay all previous $n^{3/2-\Omega(1)}$ results for exact directed maxflow and  depart from it to achieve our bounds. Energy maximization, though straightforward to analyze, is somewhat mysterious from an optimization perspective -- it is unclear what (if any) standard optimization technique it corresponds to.\footnote{The recent, simultaneous paper of \cite{AMV20} gives an alternative perspective on energy maximization as regularized Newton steps.} In this paper, we show that energy maximization arises naturally when locally optimizing the second order Taylor approximation of a Bregman divergence with respect to a logarithmic barrier. We then show that by optimizing the entire function, instead of its second order Taylor approximation, we can obtain improved convergence. We believe this  \emph{divergence maximization technique} is of intrinsic interest.

Finally, to achieve our result, we show that divergence maximization can be performed efficiently for graphs. Whereas in prior work we showed that energy maximization could be performed efficiently by solving smoothed $\ell_2$-$\ell_p$ flows of \cite{KPSW19,AS20}, here we need to solve problems not immediately representable this way. Nevertheless, we show previous solvers can be applied to solve a quadratic extension of the divergence maximization problem, which suffices for our algorithms. More generally, we open up the algorithms which underlie $\ell_2$-$\ell_p$ flow solvers and show that a range of undirected flow optimization problems, including divergence maximization, can be solved efficiently. We hope this generalized flow solving routine may find further use as well.

\subsection{Our Results}
\label{sec:results}

The main result of this paper is the following theorem.

\begin{theorem}[Maximum Flow] \label{thm:main}
There is an algorithm which solves the maximum flow problem in $m$-edge, integer capacitated graphs with maximum capacity $U$ in $\maxflowruntime$ time.
\end{theorem}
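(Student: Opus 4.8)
The plan is to follow the interior point method (IPM) template for exact directed maxflow developed in \cite{Madry13,Madry16,LS19}, but to replace its energy-maximization core with the divergence-maximization technique sketched in \cref{sec:overview}. First I would reduce to the IPM setting: by adding a preconditioning arc and binary-searching on the flow value, it suffices to, given a point near the central path of the weighted logarithmic-barrier formulation of maxflow, advance the value parameter while restoring centrality. The key observation is that a centering step which locally minimizes the \emph{second-order Taylor expansion} of the Bregman divergence $D$ of the barrier is exactly an energy-minimizing electric flow; this reinterpretation of \cite{Madry13,LS19} is what opens the door to working with the full divergence.

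The new analytic core is to measure and create progress through $D$ itself rather than its quadratic model. I would (i) show that a single IPM step can advance the value by an amount controlled by $D$ evaluated at the divergence-minimizing flow, with a bound that is essentially $\ell_\infty$-flavored and in particular free of the $\ell_4$-norm terms that pin previous analyses at $m^{3/2-1/8}$-type runtimes; (ii) prove a divergence-boosting lemma — whenever the IPM stalls (the divergence is too small to yield the target progress), there is a weight vector differing from the current one on few edges, with small total weight change, for which the divergence-minimizing flow has substantially larger divergence — the analogue of the edge-removal/energy-increase trade-off of \cite{CKMST11,Madry13}; and (iii) combine (i) and (ii) with an amortization argument over a global weight-change budget to bound the number of IPM steps by $m^{1/3+o(1)}U^{1/3}$.

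To make each step almost-linear time I would show the divergence-maximization subproblem — jointly find a good weight vector and the corresponding divergence-minimizing flow — reduces to an undirected flow optimization that, while not literally a smoothed $\ell_2$-$\ell_p$ flow of \cite{KPSW19,AS20}, is captured by a quadratically-extended version of it (cf. \cref{sec:overviewrefine}). I would then open up the iterative-refinement machinery underlying those solvers and argue it in fact handles this broader family of undirected flow problems, returning high-accuracy (inverse-quasipolynomial error) solutions in $m^{1+o(1)}$ time. Assembling the pieces gives $m^{1/3+o(1)}U^{1/3}$ IPM steps, each costing $m^{1+o(1)}$ for the generalized flow solve plus bookkeeping and weight recomputation, producing a near-feasible near-optimal flow; a final $\otilde(m)$ rounding/repair step converts this into an exact integral maximum flow, for a total of $\maxflowruntime$.

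The main obstacle I expect is step (ii) interacting with the iteration count in (iii): as in every $n^{3/2-\Omega(1)}$ maxflow result, the delicate balance is to simultaneously guarantee that a cheap weight change genuinely increases the divergence of the minimizing flow, control how that weight change perturbs the centrality invariant so the IPM can be restabilized by a bounded amount of correction, and keep the cumulative weight perturbation within a budget that forces termination in $m^{1/3+o(1)}U^{1/3}$ steps. Making the exponents meet at $4/3$ crucially relies on the divergence-based progress bound of (i) being truly $\ell_\infty$-type — reintroducing even a polylog-free $\ell_4$ factor there collapses the bound back to the $m^{3/2-1/8}$ regime of prior work.
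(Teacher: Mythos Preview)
Your high-level plan is broadly aligned with the paper, but two concrete points diverge from what the paper actually does, and one of them is a genuine gap.

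First, your step (ii) is phrased as a reactive, M\k{a}dry-style boosting lemma: when the IPM stalls, exhibit a sparse small weight perturbation that raises the divergence. The paper does not do this. Following \cite{LS19}, but for divergence rather than energy, it formulates in \emph{every} iteration a single convex minimization $\min_{B^Tf=\delta\chi}\tval(f)$, where $\tval$ is the (quadratically extended) divergence plus a weighted $\ell_p$-norm penalty; by Sion's theorem this equals the max over weight increases (within an $\ell_q$ budget) of the minimum divergence. Solving this one problem yields both the augmenting flow $\hat f$ and, by reading off the dual variables, the weight change $\mu$. There is no ``stall then boost'' branch; the augmenting flow that comes out is already guaranteed to have $|\hat f_e|\le c_e/20$ (\cref{lemma:congbound}), and adding it lands \emph{exactly} on the next central-path point $f^*_{t+\delta,w+\mu}$, so there is no separate centering step to ``restore centrality'' either.

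Second, and this is the real gap: the dual weight change $\mu$ that the optimization hands back satisfies only $\|\mu\|_1\le m/2$ per iteration (\cref{lemma:divmin}). Over $m^{1/2-\eta}$ iterations that is far too much weight to keep $\|w\|_1=O(m)$, and your amortization in (iii) would not close at the $4/3$ exponent. The paper's fix (\cref{lemma:finalweight}) is a weight-\emph{reduction} step performed after moving to $f^*_{t+\delta,w+\mu}$: the centrality condition \cref{eq:centralpath} at the new point depends only on the difference $\frac{\mu^+_e}{c^+_e-\hat f_e}-\frac{\mu^-_e}{c^-_e+\hat f_e}$, so $\mu$ can be replaced by a $\nu$ with the same value of this difference but $\nu^+_e\nu^-_e=0$ on every edge. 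Because $\mu^+_e/c^+_e=\mu^-_e/c^-_e$ by design and $|\hat f_e|/c_e\le 1/20$, this cancellation shrinks the per-step weight change to $\|\nu\|_1\le m^{4\eta+o(1)}U$. This reduction, absent from your plan, is precisely what makes the weight budget meet the iteration count at $\eta=\tfrac16-\tfrac13\log_m U$, i.e.\ at $m^{4/3+o(1)}U^{1/3}$.
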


\cref{thm:main} yields an exact maxflow runtime matching the $\O(mn^{1/3}\eps^{-11/3})$ runtime of \cite{CKMST11} for $(1-\eps)$-approximate undirected maxflow on sparse graphs. Further, this improves on the recent $m^{11/8+o(1)}U^{1/4}$ time algorithm of the authors \cite{LS19} as long as $U \le m^{1/2-\eps}$ for some $\eps > 0$. When $U \ge m^{1/2}$, the result of \cref{thm:main} and all the algorithms of \cite{Madry13,Madry16,LS19} have runtime $\O(m^{3/2})$, which is already known through \cite{GR98}. Hence, we assume $U \le \sqrt{m}$ throughout the paper. An immediate corollary of \cref{thm:main} is the following result on bipartite matching.
\begin{corollary}[Bipartite Matching] \label{cor:matching}
There is an algorithm which given a bipartite graph with $m$ edges computes a maximum matching in time $m^{4/3+o(1)}.$
\end{corollary}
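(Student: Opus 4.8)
The plan is to obtain \cref{cor:matching} from \cref{thm:main} via the textbook reduction from bipartite matching to unit-capacity $a$-$b$ maximum flow \cite{Lin09}. Given a bipartite graph $H = (L \cup R, E_H)$ with $m$ edges, I would first delete isolated vertices, which changes neither the maximum matching nor the edge count; since every remaining vertex in $L$ (resp.\ $R$) is incident to at least one of the $m$ edges and each edge has exactly one endpoint on each side, this leaves $|L|, |R| \le m$. Next, construct a directed graph $G$ by adding a source $a$ with a unit-capacity arc to each vertex of $L$, a sink $b$ with a unit-capacity arc from each vertex of $R$, and orienting every edge of $E_H$ from its $L$-endpoint to its $R$-endpoint with capacity $1$. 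Then $G$ has $|E_H| + |L| + |R| = O(m)$ edges, $|L| + |R| + 2 = O(m)$ vertices, and maximum capacity $U = 1$.

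The second step is to invoke the standard correspondence between integral $a$-$b$ flows in $G$ and matchings in $H$. By flow decomposition, any integral feasible $a$-$b$ flow of value $k$ decomposes into $k$ arc-disjoint paths of the form $a \to \ell \to r \to b$ with $\ell \in L$ and $r \in R$; because the arcs incident to $a$ and to $b$ have capacity $1$, the vertices $\ell$ (and likewise $r$) appearing in distinct paths are distinct, so the middle edges $\{\ell, r\}$ form a matching of size $k$ in $H$. Conversely, a matching of size $k$ routes $k$ units of flow along the corresponding paths. Hence the maximum matching size of $H$ equals the value of a maximum $a$-$b$ flow in $G$, and an integral maximum flow in $G$ directly exhibits a maximum matching of $H$.

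Finally, apply \cref{thm:main} to $G$: this computes a maximum $a$-$b$ flow in $(O(m))^{4/3 + o(1)} \cdot 1^{1/3} = m^{4/3 + o(1)}$ time. Since the capacities are integral, the returned flow may be taken to be integral (either because the algorithm of \cref{thm:main} returns an integral optimum, or by rounding an optimal flow to an integral optimal flow in nearly linear additional time through standard cycle-cancellation along the residual graph). Reading off the edges of $E_H$ carrying one unit of flow then yields a maximum matching, so the total runtime is $m^{4/3 + o(1)}$. The only points needing care are the bookkeeping that keeps $|V(G)| = O(m)$ (so that no separate dependence on the number of vertices survives) and the routine extraction of an integral optimum; neither constitutes a genuine obstacle, and the corollary is otherwise immediate from \cref{thm:main}.
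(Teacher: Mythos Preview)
Your proposal is correct and matches the paper's approach: the paper does not spell out a proof of \cref{cor:matching} at all, simply stating it as ``an immediate corollary of \cref{thm:main}'' via the standard reduction from bipartite matching to unit-capacity maxflow (citing \cite{Lin09}), which is precisely the construction you describe. Your bookkeeping on $|V(G)|, |E(G)| = O(m)$ and on extracting an integral optimum is sound (the algorithm of \cref{thm:main} in fact finishes by rounding to an integral flow and running augmenting paths), so nothing further is needed.
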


We note that \cref{thm:main,cor:matching} are deterministic. This is due to the application \cite{CGLNPS19}, which recently showed that algorithms for many flow algorithms, including Laplacian system solvers \cite{ST04}, smoothed $\ell_2$-$\ell_p$ flow computation methods \cite{KPSW19,AS20} and some maxflow IPMs \cite{Madry13,Madry16}, may be derandomized with a $m^{o(1)}$ runtime increase.

Additionally, we show in \cref{sec:opt} that flow problems that are combinations a quadratic and $\ell_p$ norm of functions with stable Hessians may be solved to high accuracy in almost linear time. This encompasses problems such as computing electric and $\ell_p$ norm minimizing flows, smoothed $\ell_2$-$\ell_p$ flows, and the divergence maximizing flows our algorithm must compute.

\begin{theorem}
\label{thm:mainopt}
For graph $G=(V,E)$ and all $e \in E$, let $0 \le a_e \le 2^{\poly(\log m)}$ be constants, $q_e:\R\to\R$ be functions with $|q_e(0)|, |q_e'(0)| \le 2^{\poly(\log m)}$ and $a_e/4 \le q_e''(x) \le 4a_e$ for all $x\in\R$, and $h_e:\R\to\R$ be functions with $h_e(0) = h_e'(0) = 0$ and $1/4 \le h_e''(x) \le 4$ for all $x\in\R.$ For demand $d$ with entries bounded by $2^{\poly(\log m)}$, even integer $p \in (\omega(1), (\log n)^{2/3-o(1)})$, and all flows $f$ define
\[ \val(f) \defeq \sum_{e\in E} q_e(f_e) + \left(\sum_{e\in E} h_e(f_e)^p\right)^{1/p} \enspace \text{ and } \enspace OPT \defeq \min_{B^Tf=d} \val(f). \] We can compute in time $m^{1+o(1)}$ a flow $f'$ with $B^Tf'=d$ and $\val(f') \le OPT + \err.$
\end{theorem}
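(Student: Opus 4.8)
The plan is to reduce this problem to an instance that can be handled by the iterative-refinement machinery for smoothed $\ell_2$-$\ell_p$ flows (cf. \cite{KPSW19,AS20}), by exhibiting an \emph{iterative refinement} scheme for $\val$. The obstacle is that $\val$ is not itself a smoothed $\ell_2$-$\ell_p$ objective: the quadratic-like part $\sum_e q_e(f_e)$ has an arbitrary stable Hessian $a_e/4 \le q_e'' \le 4a_e$ rather than being exactly quadratic, the $\ell_p$ part is built from nonlinear $h_e$ with stable second derivatives rather than from linear forms, and these two pieces are combined (the $\ell_p$ term is a $p$-th root of a sum, not separable). So the first step is to establish that $\val$ admits a good quadratic-plus-$\ell_p$ \emph{residual problem}: given a current feasible $f$, I will Taylor-expand each $q_e$ and each $h_e$ around $f_e$ and show that, up to a $\mathrm{poly}(\log m)$ multiplicative factor, the change $\val(f+\delta)-\val(f)$ for a circulation $\delta$ is sandwiched between $\lambda$ and $\lambda \cdot \mathrm{poly}(\log m)$ times a model function of the form $\langle g, \delta\rangle + \sum_e r_e \delta_e^2 + \left(\sum_e s_e^p |\delta_e|^p\right)^{1/p}$ (or a variant with an additive constant inside the $p$-norm), where $g,r,s$ depend on the local derivatives of $q_e,h_e$ at $f_e$. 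The stability hypotheses $a_e/4 \le q_e'' \le 4a_e$ and $1/4 \le h_e'' \le 4$ are exactly what make this sandwiching hold globally in $\delta$, not just infinitesimally; this is the analogue of the $\ell_p$ iterative-refinement lemma of \cite{AKPS19,KPSW19} extended to handle (i) a stable-Hessian quadratic part and (ii) a nonlinear inner function $h_e$ inside the $p$-norm via the chain rule and the mean value theorem.

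Given such a residual problem, the second step is the standard iterative-refinement outer loop: starting from any feasible $f^{(0)}$ with $B^T f^{(0)}=d$, repeatedly (approximately) minimize the model function over circulations $\delta$ (i.e., over $\{\delta : B^T\delta = 0\}$), and update $f \leftarrow f + \delta$. Each such step multiplies the optimality gap $\val(f)-OPT$ by a factor $1 - 1/\mathrm{poly}(\log m)$, so after $\mathrm{poly}(\log m) \cdot \log(\text{initial gap}/\err) = \mathrm{poly}(\log m)$ iterations the gap drops below $\err$; here I use that the initial gap is at most $2^{\mathrm{poly}(\log m)}$ because $d$, the $a_e$, $q_e(0)$, $q_e'(0)$ all are, and $p = O(\mathrm{poly}(\log m))$, so $\log$ of the gap is $\mathrm{poly}(\log m)$. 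I will also need an initial feasible flow, which can be routed along a spanning tree (or along an arbitrary flow decomposition of $d$) in nearly linear time, with entries bounded by $2^{\mathrm{poly}(\log m)}$ since $d$ is.

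The third step is to solve each residual subproblem in $m^{1+o(1)}$ time. The model function $\langle g,\delta\rangle + \sum_e r_e \delta_e^2 + (\sum_e s_e^p|\delta_e|^p)^{1/p}$ is — after the usual trick of guessing the optimal value $\tau$ of the $p$-norm term by binary search and replacing it with the constraint $\sum_e s_e^p |\delta_e|^p \le \tau^p$, or equivalently folding it into a weighted-$\ell_p$ penalty — precisely a smoothed $\ell_2$-$\ell_p$ flow problem of the type solved in almost-linear time by \cite{KPSW19,AS20}. Here I expect to either invoke those results essentially as a black box, or, as the last sentence of the excerpt's introduction suggests (``we open up the algorithms which underlie $\ell_2$-$\ell_p$ flow solvers''), to invoke a generalized solver proved elsewhere in the paper for exactly this quadratic-plus-$\ell_p$ combined objective; the linear term $\langle g,\delta\rangle$ is harmless and can be absorbed by completing the square against the quadratic part, or handled directly by those solvers. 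Composing: $\mathrm{poly}(\log m)$ outer iterations, each costing $m^{1+o(1)}$, gives total time $m^{1+o(1)}$, and the returned $f'$ satisfies $B^T f' = d$ and $\val(f') \le OPT + \err$.

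The main obstacle is the first step — proving the iterative-refinement sandwich for the \emph{combined} objective with nonlinear $h_e$ inside the $p$-norm and a non-quadratic stable-Hessian $q_e$. The subtlety is that the second-order expansion of $\left(\sum_e h_e(f_e+\delta_e)^p\right)^{1/p}$ mixes a ``gradient'' term, a term that looks like a weighted quadratic form (coming from $h_e'$ and from the convexity of $x\mapsto x^{1/p}$), and a weighted $\ell_p$ term (coming from the bulk $p$-th powers); one must verify that the cross terms and the $h_e''$ contributions are all dominated, up to $\mathrm{poly}(\log m)$ losses (including $\mathrm{poly}(p)$ losses, which are acceptable since $p = \mathrm{poly}(\log m)$), by the model function, uniformly in $\delta$. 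I would prove this by reducing to the single-coordinate statement $h_e''\in[1/4,4]$, $h_e(0)=h_e'(0)=0 \Rightarrow h_e(x) = \Theta(x^2)$ and $h_e(x+y) $ is controlled by $h_e(x) + (h_e'(x))\cdot\text{stuff} + y^2 + |y|^p$-type bounds, then summing and applying the known scalar $\ell_p$ refinement inequalities of \cite{AKPS19} to the resulting sequence; the $q_e$ part is the analogous but easier scalar fact that a function with Hessian in $[a_e/4,4a_e]$ is $2$-approximated in the iterative-refinement sense by the quadratic $\frac{1}{2}q_e''(f_e)\delta_e^2$ plus its gradient.
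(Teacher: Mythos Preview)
Your high-level ingredients are the right ones (binary search to handle the $1/p$ root, iterative refinement, and the \cite{KPSW19} solver), but the order you propose does not work, and this is exactly the ``main obstacle'' you flag in step one. You propose to iteratively refine $\val$ directly, claiming that the Bregman residual $\val(f+\delta)-\val(f)-\langle\nabla\val(f),\delta\rangle$ is sandwiched within $\poly(\log m)$ factors by a model of the form $\sum_e r_e\delta_e^2 + \bigl(\sum_e s_e^p|\delta_e|^p\bigr)^{1/p}$. This fails. Consider the $p$-norm piece alone, say $H(f)=(h_1(f_1)^p+h_2(f_2)^p)^{1/p}$ with $h_i(x)=x^2$, at the point $f=(1,0)$ in the direction $\delta=(0,t)$. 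The residual is $(1+t^{2p})^{1/p}-1$, which is $\Theta(t^{2p}/p)$ for $|t|\le 1$ and $\Theta(t^2)$ for $|t|\ge 1$. Any model built from terms of degree at most $2$ in $\delta$ (and your model is: the quadratic is degree $2$ and the $p$-norm with the $1/p$ root is degree $1$) cannot be below a degree-$2p$ quantity near $0$ unless its coefficients vanish in that direction, in which case it cannot be above the degree-$2$ quantity for large $t$. No choice of $r_2,s_2$ gives a two-sided bound, so the refinement lemma you need is simply false with the $1/p$ root present.

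The paper's proof reverses the two reductions. It \emph{first} removes the $1/p$ root by a Lagrangian/binary-search argument (\cref{lemma:reduce1}, invoking Lemma~B.3 of \cite{LS19}): solving $\min_{B^Tf=d}\sum_e q_e(f_e)+(\sum_e h_e(f_e)^p)^{1/p}$ reduces in $\otilde(1)$ calls to solving $\min_{B^Tf=d}\sum_e q_e(f_e)+W\sum_e h_e(f_e)^p$ for various scalars $W$. Only \emph{then} does it iteratively refine (\cref{lemma:reduce2}): for the now-separable objective $\sum_e q_e + W\sum_e h_e^p$, the key scalar estimate is \cref{lemma:iterh}, which shows $h_e(x+\Delta)^p - h_e(x)^p - p\,h_e(x)^{p-1}h_e'(x)\Delta$ is sandwiched within $2^{O(p)}$ by $x^{2p-2}\Delta^2+\Delta^{2p}$. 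Summing gives a model of the form $g^T\Delta+\sum_e r_e\Delta_e^2+\sum_e \Delta_e^{2p}$, i.e.\ a smoothed $\ell_2$-$\ell_{2p}$ flow (note: $\ell_{2p}^{2p}$, not $\ell_p$-norm), which is exactly what \cref{thm:smoothflow} solves. The progress ratio per refinement step is $1-2^{-O(p)}$, so $\otilde(2^{O(p)})=m^{o(1)}$ steps suffice. In short: your binary-search-inside-refinement ordering cannot produce a valid residual sandwich; the paper's refinement-inside-binary-search ordering does.
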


\subsection{Comparison to Previous Work}
\label{sec:previouswork}

For brevity, we cover the lines of research most relevant to our work, including IPM based maxflow algorithms, and smoothed $\ell_2$-$\ell_p$ flow algorithms. (See Section 1.3 of \cite{LS19} for further references.)

\paragraph{IPMs for Maxflow:} In directed sparse graphs, the only improvements over the classic \\ $\O( \min \{ m^{3/2}, m n^{2/3}  \} )$ running time of Karzanov and Even-Tarjan \cite{Karzanov73,ET75} are based on IPMs. These are the $\O(m\sqrt{n})$ time algorithm of \cite{LS14} and the $m^{11/8+o(1)}$ algorithm of \cite{LS19}, which improved on the breakthrough $\O(m^{10/7})$ algorithm of \cite{Madry13,Madry16}. Our major departure from these prior works is to analyze the quality of weight changes through divergence distances rather than electric energy, and then directly use the resulting flows. Simultaneously and independent of our work, \cite{AMV20} built on \cite{LS19} and showed that one can directly augment with smoothed $\ell_2$-$\ell_p$ flows. However, their running time was still $m^{11/8+o(1)}$ as these flows resulted from considering energy maximization and regularizing electric flow objectives. We hope that our methods open the door to the design of new optimization methods for solving maxflow beyond interior point methods, and the use of other undirected flow primitives other than electric flows such as smoothed $\ell_2$-$\ell_p$ flows to solve exact maxflow.

\paragraph{Iterative Refinement and Smoothed $\ell_2$-$\ell_p$ Flows:} Our improvements leverage advances in a recent line of work on obtaining faster runtimes solving $\ell_p$-regression to high precision \cite{BCLL18,AKPS19,APS19,AS20}. In particular, as was the case in \cite{LS19}, we leverage recent work of \cite{KPSW19} which showed that for any graph $G = (V,E)$ with incidence matrix $B$, vector $g\in \R^E$, resistances $r \in \R_{\ge0}^E$, demand vector $d$, and $p \in \left[\omega(1), o((\log n)^{2/3})\right]$, there is an algorithm that computes a high accuracy solution to $\min_{B^Tf=d} g^Tf + \sum_e r_ef_e^2 + \|f\|_p^p$ in $m^{1+o(1)}$ time, which we call a \emph{smoothed $\ell_2$-$\ell_p$ flow}. \cite{KPSW19} achieved this result from the \emph{iterative refinement} framework \cite{AKPS19}, which says that one can solve a smoothed $\ell_2$-$\ell_p$ flow to high accuracy through solving $\O(2^{O(p)})$ other smoothed $\ell_2$-$\ell_p$ flow instances approximately. The $m^{11/8+o(1)}$ algorithm of \cite{LS19} used this primitive to compute weight changes to reduce the congestion of the electric flow to make progress. In this work, in \cref{sec:overviewrefine,sec:efficient,sec:refine},  we open up the iterative refinement framework of \cite{AKPS19,KPSW19} to solve the more complex problems that arise from working with objectives beyond electric flows. 

\subsection{Paper Organization}
\label{sec:organization}
In \cref{sec:prelim} we give the preliminaries. In \cref{sec:overview} we give a high level overview of our algorithm, in \cref{sec:detail} describe various pieces in more detail, and our main algorithm is presented in \cref{sec:energymax}.

Missing proofs are deferred to \cref{sec:proofs}, and necessary lemmas for iterative refinement of our objectives are given in \cref{sec:refine}. In \cref{sec:optprelim}  we give additional convex optimization preliminaries and in \cref{sec:opt} we prove \cref{thm:mainopt}, which shows that a large class of flow problems on graphs may be efficiently solved by reduction to smoothed $\ell_2$-$\ell_p$ flows.
\section{Preliminaries}
\label{sec:prelim}

\paragraph{General notation}
We let $\R_{\ge\alpha}^m$ denote the set of $m$-dimensional real vectors which are entrywise at least $\alpha$. For $v \in \R^m$ and real $p \ge 1$ we define $\|v\|_p$, the $\ell_p$-norm of $v$, as $\|v\|_p \defeq \left(\sum_{i=1}^m |v_i|^p \right)^{1/p}$, and  $\|v\|_\infty \defeq \max_{i=1}^m |v_i|.$ For $n \times n$ positive semidefinite matrices $M_1, M_2$ we write $M_1 \approx_r M_2$ for $r \ge 1$ if $r^{-1}x^TM_1x \le x^TM_2x \le rx^TM_1x$ for all $x \in \R^n$. For a differentiable function $f:\R^n\to\R$ we define its induced \emph{Bregman divergence} as $D_f(x\|y) \defeq f(x)-f(y)-\g f(y)^T(y-x)$ for all $x,y \in \R^n$.

\paragraph{Graphs} Throughout this paper, in the graph problems we consider, we suppose that there are both upper and lower capacity bounds on all edges. We let $G$ be a graph with vertex set $V$, edge set $E$, and upper and lower capacities $\up_e \ge 0$ and $\um_e \ge 0$ respectively on edge $e$. We use $U$ to denote the maximum capacity of any edge, so that $\max\{\up_e, \um_e\} \le U$ for all edges $e$. We let $n$ denote the number of vertices $|V|$, and let $m$ denote the number of edges $|E|$. Further we view undirected graphs as directed graphs with $\upe = \ume$ by arbitrarily orienting its edges.

\paragraph{The Maximum Flow Problem} 
Given a graph $G = (V,E)$ we call any assignment of real values to the edges of $E$, i.e. $f \in \R^E$, a \emph{flow}. For a flow $f \in \R^E$, we view $f_e$ as the amount of flow on edge $e$. If $f_e > 0$ we interpret this as sending $f_e$ units in the direction of the edge orientation and if $f_e < 0$ we interpret this as sending $|f_e|$ units in the direction opposite the edge orientation. 

In this paper we consider $ab$-flows, where $a \in V$ is called the  source, and $b \in V$ is called the sink. An $ab$-flow is a flow which routes $t$ units of flow from $a$ to $b$ for some real number $t \ge 0$. Define the unit demand vector $\chi_{ab} = 1_b - 1_a$, a vector with a $1$ in position $a$ and $-1$ in position $b$. When $a$ and $b$ are implicit, we write $\chi = \chi_{ab}$. In this way, we also refer to an $ab$-flow which routes $t$ units from $a$ to $b$ as a $t\chi$-flow. The \emph{incidence matrix} for a graph $G$ is an $m \times n$ matrix $B$, where the row corresponding to edge $e = (u,v)$ has a $1$ (respectively $-1$) in the column corresponding to $v$ (respectively $u$). Note that $f \in \R^E$ is a $t\chi$-flow if and only if $B^Tf=t\chi$. More broadly, we call any $d \in \R^V$ a demand vector if $d \perp 1$ and we say $f \in \R^E$ routes $d$ if $B^T f = d$. 

We say that a $t\chi$-flow $f$ is \emph{feasible} in $G$ if $-\ume \le f_e \le \upe \forall e \in E,$ so that $f$ satisfies the capacity constraints. We define the \emph{maximum flow problem} as the problem of given a graph $G$ with upper and lower capacities $\up$ and $\um$, and source and sink vertices $a$ and $b$, to compute a maximum feasible $ab$-flow. We denote the maximum value as $t^*$. For a real number $t \le t^*$, we let $F_t \defeq t^*-t$ denote the remaining amount of flow to be routed.
\section{Algorithm Derivation and Motivation}
\label{sec:overview}

In this section we present a principled approach for deriving our new method and the previous energy-based methods \cite{Madry13,Madry16,LS19} for maxflow from an IPM setup. 

\subsection{Interior Point Method Setup}

The starting point for our method is the broad IPM framework for maxflow of \cite{LS19}, which in turn was broadly inspired by \cite{Madry16}. We consider the setup described in \cref{sec:prelim} and design algorithms that maintain a flow $f \in \R^E$, a parameter $t \ge 0$, and weights $\wp, \wm \in \R_{\ge1}^m$ such that $B^T f = t \chi$ and $f$ is a high accuracy approximation of the solution $f_{t,w}^*$ of the following optimization problem:
\begin{equation}
\label{eq:central_path_formula}
f_{t,w}^* \defeq \argmin_{B^Tf=t\chi} V(f)
\enspace
\text{ where }
\enspace
V(f) \defeq -\sum_{e\in E} \left(\wp_e \log(\up_e-f_e) + \wm_e \log(\um_e+f_e)\right)
~.
\end{equation}
Here, $V(f)$ is known as the \emph{weighted logarithmic barrier} and penalizes how close $f$ is to violating the capacity constraints and $t$ is the amount of flow sent from $a$ to $b$.

Generally, IPMs proceed towards optimal solutions by iteratively improving the quality, i.e. increasing the parameter $t$, and decreasing the proximity to the constraints, i.e. decreasing $V(f)$. Previous maxflow IPMs \cite{LS14,Madry13,Madry16,LS19} all follow this template. Specifically, \cite{Madry16,LS19} alternate between Newton step to improve the optimality of $f$ for \cref{eq:central_path_formula} (called \emph{centering steps}) and computing a new flow and weights to approximately solve \cref{eq:central_path_formula} for a larger value of $t$ (called \emph{progress steps}). Applying such an approach, while using Laplacian system solvers to implement the steps in nearly linear time, coupled with a preliminary graph preprocessing step known as preconditioning (\cref{sec:precondition}) directly yields to an $\O(m^{3/2})$ time algorithm. Recent advances \cite{LS14,Madry13,Madry16,LS19} were  achieved by performing further work to modify the weights and flows used. 

\subsection{Progress steps via divergence minimization}
\label{sec:divminoverview}

To understand (and improve upon) previous maxflow IPMs, here we  explain how to view progress steps in this framework as computing a divergence minimizing $\d\chi$-flow. Note that, without weight changes, the cumulative result of a progress and centering step is essentially moving from $f_{t,w}^*$ to $f_{t+\d,w}^*$ for a step size $\d$. The optimality conditions of \cref{eq:central_path_formula} give that the gradient of $V$ at the optimum $f^*_{t,w}$ of \cref{eq:central_path_formula} is perpendicular to the kernel of $B^T$, so there is a vector $y$ with $By = \g V(f_{t,w}^*)$. Define
\begin{equation} \label{eq:divminflow} \hf \defeq \argmin_{B^Tf=\d\chi} D_V(f_{t,w}^*+f \| f_{t,w}^*) = \argmin_{B^Tf=\d\chi} V(f_{t,w}^*+f) - V(f_{t,w}^*) - \g V(f_{t,w}^*)^T f, \end{equation} i.e. the $\d\chi$-flow with smallest divergence from $f_{t,w}^*$ against the barrier $V$. Again, optimality conditions give that there is a vector $z$ with $Bz = \g D_V(f_{t,w}^*+\hf \| f_{t,w}^*) = \g V(f_{t,w}^*+\hf) - \g V(f_{t,w}^*).$ Therefore, $B(y+z) = \g V(f_{t,w}^*+\hf)$. Since $f_{t,w}^*+\hf$ is a $(t+\d)\chi$-flow, we must have $f_{t,w}^*+\hf = f_{t+\d,w}^*$ by optimality conditions, so that adding $\hf$ to an optimal point $f_{t,w}^*$ lands us at the next point $f_{t+\d,w}^*$.

Now, a standard progress step in this framework may be computed by taking a Newton step, i.e. minimizing the second order Taylor approximation of the divergence. The second order Taylor expansion of $D_V(f_{t,w}^*+f \| f_{t,w}^*)$ is $\frac12 f^T\g^2 V(f_{t,w}^*)f$, and the resulting step is
\begin{equation} \argmin_{B^Tf=\d\chi} \frac12 f^T\g^2 V(f_{t,w}^*)f = \d \g^2 V(f_{t,w}^*)^{-1}B(B^T\g^2 V(f_{t,w}^*)^{-1}B)^\dagger \chi 
~.
\label{eq:prev} \end{equation} This can be computed in $O(m)$ time plus the time to solve a Laplacian system, i.e. $\O(m)$ \cite{ST04}. Choosing $\d$ that routes $\Omega(m^{-1/2})$ fraction of the remaining flow, adding the flow in \cref{eq:prev} to our current point, and taking further Newton steps to re-center yields an $\O(m^{3/2})$ time maxflow algorithm.

\subsection{Energy-based improvements}

Improvements to the above $\O(m^{3/2})$ time algorithm \cite{Madry13,Madry16,LS19} arise by a more careful analysis of the step size $\d$ of the Newton step that we may take such that recentering may still be performed in $\O(m)$ time, and by leveraging that the flow in \cref{eq:prev} is an electric flow. Precisely, the size of the step we may take is governed by the \emph{congestion} of the flow we wish to add, which is defined edge-wise as the ratio of flow on an edge to its residual capacity (see $\cpe,\cme$ in \cref{sec:setup}). In this way, the $\ell_\infty$ norm of congestion of the $\chi$-electric flow governs the amount of flow we may add before violating capacity constraints. On the other hand, because the $\chi$-electric flow was a minimizer to a second order approximation of the divergence, the $\ell_4$ norm of congestion of the $\chi$-electric flow instead governs the amount of flow we may add so that centering can still be performed in $\O(m)$ time, whereas a bound on the $\ell_2$ norm of congestion suffices to achieve the $\O(m^{3/2})$ time algorithm.

In this way, it is natural to attempt to compute weight changes that reduce the $\ell_4$ norm of congestion induced by the $\chi$-electric flow. M\k{a}dry \cite{Madry13,Madry16} achieves this by boosting edges with high congestion in the $\chi$-electric flow and trading off against a potential function that is the energy of the $\chi$-electric flow with resistances induced by the Hessian of the weighted logarithmic barrier at the current point.

To improve on the algorithm of \cite{Madry16}, \cite{LS19} instead views increasing energy via budgeted weight change as its own optimization problem. Precisely, the optimization problem was to maximize the energy of an electric flow in a graph $G$ that originally had resistances $r$ under a resistance increase budget. Written algebraically, for a weight budget $W$, this is
\begin{equation} 
\max_{\|r'\|_1 \le W} \min_{B^Tf=\d\chi} \sum_{e\in E} (r_e+r_e')f_e^2. \label{eq:optproblem} \end{equation}
\cite{LS19} showed that a smoothed version of this objective was solvable in $m^{1+o(1)}$ time using smoothed $\ell_2$-$\ell_p$ flows \cite{KPSW19}, and that the combinatorial edge boosting framework of \cite{Madry13,Madry16} can essentially be viewed as greedily taking gradient steps against the objective in \cref{eq:optproblem}.

\subsection{Our new method: beyond electric energy}

A disappointing feature of the above discussion is that while the $\ell_\infty$ norm of congestion governs the amount of flow we may add and still have a feasible flow, we are forced to control the $\ell_4$ norm of congestion of the electric flow to allow for efficient centering. This is due to the fact that although the step can be taken without violating capacity constraints, there is sufficient loss in local optimality that $\O(1)$ centering Newton steps cannot recenter it. This leads to the heart of our improvement -- we resolve this discrepancy between the $\ell_\infty$ and $\ell_4$ norm of congestion by directly augmenting via the divergence minimizing flow of \cref{eq:divminflow}. As a result, it suffices to compute weight changes to minimize the $\ell_\infty$ norm of congestion of the divergence minimizing flow.

The next challenge is to compute this divergence minimizing flow, and compute weight changes to reduce the $\ell_\infty$ norm of its congestion. To approach this, we consider the problem of moving from $f_{t,w}^*$ to $f_{t+\d,w}^*$ for a step size $\d$, assuming that the residual capacities induced by $f_{t,w}^*$ and $f_{t+\d,w}^*$ are within $1.05$ multiplicatively. This implies that $\g^2 V(f_{t,w}^*) \approx_{1.2} \g^2 V(f_{t+\d,w}^*)$. To solve this problem, for each piece of the $V(f)$ objective, i.e. $\left(\wp_e \log(\up_e-f_e) + \wm_e \log(\um_e+f_e)\right)$, we replace it with a \emph{quadratic extension}, a function that agrees with it on some interval, and extends quadratically outside. Our new objective will have a stable Hessian everywhere, hence can be minimized by Newton's method. By construction, the optimum of the quadratically extended problem and original are the same using convexity (see \cref{obs:convex}). Further details are provided in \cref{sec:quadsmooth}.

Finally, we must compute weights that reduce the $\ell_\infty$ norm of congestion of the divergence minimizing flow. As the approach of \cite{LS19} computes weight changes to maximize the electric energy, we instead compute weight changes to maximize the divergence of the divergence minimizing flow. Doing this requires extending the analysis of \cite{Madry16} and energy maximization framework of \cite{LS19} to nonlinear objectives, such as the quadratic extensions described above, and then generalizing the iterative refinement framework introduced by \cite{AKPS19, KPSW19} to a large family of new objectives. We hope that both this unified view of energy and divergence maximization as well as the methods we give for performing this optimization efficiently may have further utility.

\section{Technical Ingredients}
\label{sec:detail}
In this section, we elaborate on several technical aspects discussed in \cref{sec:overview}. We give details for setting up the IPM in \cref{sec:setup}, discuss preconditioning in \cref{sec:precondition}, elaborate on quadratic extensions in \cref{sec:quadsmooth}, and discuss iterative refinement in \cref{sec:overviewrefine}.

\subsection{IPM Details and Preconditioning}
\label{sec:setup}
\label{sec:centralpath}
\label{sec:precondition}

In this section, we give a detailed description of our IPM setup. One can reduce directed maxflow to undirected maxflow with linear time overhead (see  \cite{Lin09,Madry13} or \cite{LS19} Section B.4) and consequently, we assume our graph is undirected, so that $\upe = \ume$.

Assuming that there is a feasible $t\chi$-flow, optimality conditions of \cref{eq:central_path_formula} give that the gradient of $V$ at the optimum $f^*_{t,w}$ of \cref{eq:central_path_formula} is perpendicular to the kernel of $B^T$, i.e. there is a dual vector $y \in \R^V$ such that
$B y = \g V(f^*_{t,w})$. Consequently, for parameter $t$ and weight vectors $\wp, \wm$ we say that a flow $f$ is on the \emph{weighted central path} if and only if there exists a dual vector $y \in \R^V$ such that
\begin{align}
B^Tf
= t\chi
\enspace
\text{ and }
\enspace
[By]_e
=  [\g V(f)]_e = \frac{\wp_e}{\up_e-f_e} - \frac{\wm_e}{\um_e+f_e}
\text{ for all } e\in E
 \label{eq:centralpath}
\end{align}
For simplicity, we write $w = (w^+, w^-) \in \R^{2E}_{\ge1}$, where we define $\R^{2E}_{\ge\alpha} \defeq \R^E_{\ge \alpha} \times \R^E_{\ge \alpha}.$
We define \emph{residual capacities} $\cpe \defeq \upe-f_e, \cme \defeq \ume+f_e$ and $c_e = \min(\cpe, \cme).$ Note $c_e \ge 0$ for all $e \in E$ if and only if $f$ is feasible.

We initialize $\wpe = \wme = 1$, $t = 0$, and $f = 0$, which is central. Previous IPM based algorithms for maxflow \cite{Madry13,Madry16,LS19} alternated between progress steps and centering steps. Progress steps increase the path parameter $t$ at the cost of centrality, which refers to the distance of $f$ from satisfying \cref{eq:centralpath} in the inverse norm of the Hessian of $V(f)$ -- see \cite{LS19} Definition 4.1. Centering steps improve the centrality of the current point without increasing the path parameter $t$. Our algorithm more directly advances along the central path -- given parameter $t$, weights $w$, and central path point $f_{t,w}^*$ we compute new weights $w^\new$, advance the path parameter to $t+\d$, and compute $f_{t+\d,w^\new}^*$.

The goal of the IPM is to reduce the value of the residual flow $F_t = t^*-t$ below a threshold, at which point we may round and use augmenting paths \cite{LRS13}. We assume that the algorithm knows $F_t$ throughout, as our algorithm succeeds with any underestimate of the optimal flow value $t^*$, and we can binary search for the optimal flow value.

\paragraph{Preconditioning.} To precondition our undirected graph $G$, we add $m$ undirected edges of capacity $2U$ between source $a$ and sink $b$. This increases the maximum flow value by $2mU.$ Throughout the remainder of the paper, we say that the graph $G$ is preconditioned if it is undirected and we have added these edges. Intuitively, preconditioning guarantees that a constant fraction of the remaining flow in the residual graph may be routed in its undirectification, i.e. $G$ with capacities $c_e$. The following lemma was shown in \cite{LS19} Section B.5.
\begin{lemma}
\label{lemma:precon}
Consider a preconditioned graph $G$. For parameter $t$ and weights $w$ let $c_e$ be the residual capacities induced by the flow $f_{t,w}^*.$ Then for every preconditioning edge $e$ we have that $c_e \ge \frac{F_t}{7\|w\|_1}.$ If $\|w\|_1 \le 3m$ then $c_e \ge \frac{F_t}{21m}$.
\end{lemma}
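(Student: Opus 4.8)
The plan is to work from the first-order optimality conditions of the convex program \cref{eq:central_path_formula} defining $f^*_{t,w}$. Since $f^*_{t,w}$ minimizes the convex barrier $V$ over the affine subspace $\{f : B^T f = t\chi\}$, its gradient lies in the column space of $B$, so there is a potential $y\in\R^V$ with $\g V(f^*_{t,w}) = By$; coordinate-wise this is exactly \cref{eq:centralpath}, namely $\frac{\wpe}{\cpe}-\frac{\wme}{\cme} = [By]_e$ at the point $f^*_{t,w}$. Let $\Delta := y_b-y_a$. Since every preconditioning edge joins $a$ and $b$ and carries $\upe=\ume=2U$, every such edge $e$ satisfies the two scalar identities
\[ \frac{\wpe}{\cpe}-\frac{\wme}{\cme}=\Delta \qquad\text{and}\qquad \cpe+\cme=4U. \]
The lemma then reduces to bounding $|\Delta|$ and solving this $2\times 2$ system for $c_e=\min(\cpe,\cme)$.

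To bound $|\Delta|$: for any flow $g$ routing $r$ units from $a$ to $b$ (so $B^T g = r\chi$) we have $\g V(f^*_{t,w})^T g = y^T B^T g = r\Delta$, while for any $g$ feasible in the residual graph of $f^*_{t,w}$ (i.e.\ $-\cme\le g_e\le\cpe$ for all $e$) one checks coordinate-wise that $[\g V(f^*_{t,w})]_e\,g_e \le \max(\wpe,\wme)$, hence $\g V(f^*_{t,w})^T g \le \|w\|_1$. Applying this with $g$ a maximum residual $a$-$b$ flow, whose value is exactly $F_t=t^*-t$ (adding it to $f^*_{t,w}$ produces a maximum flow), gives $\Delta\le\|w\|_1/F_t$; applying it with a residual $b$-$a$ flow of value $F_t$ — which exists because $G$ is undirected and so admits a feasible flow of value $-t^*$, an appropriate convex combination of which with $f^*_{t,w}$ moves $F_t$ units from $b$ to $a$ relative to $f^*_{t,w}$ — gives $-\Delta\le\|w\|_1/F_t$. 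Thus $|\Delta|\le\|w\|_1/F_t$.

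Finally I solve the per-edge system. After a short case split — on the sign of $\Delta$ and on which of $\cpe,\cme$ is the smaller one, the two cases being symmetric under $\wpe\leftrightarrow\wme$, $\Delta\to-\Delta$ — say $c_e=\cpe$, so $\cpe\le 2U\le\cme$ and hence $\frac{\wme}{\cme}\le\frac{\wme}{2U}$; then from $\frac{\wpe}{\cpe}=\Delta+\frac{\wme}{\cme}$ one gets $\cpe\ge\frac{2U\wpe}{\wme+2U|\Delta|}$ (using $\cpe+\cme=4U$ this can also be read off the relevant root of the resulting quadratic in $\cpe$). Casing on whether $\wme\ge 2U|\Delta|$, using $\wpe\ge 1$ and that preconditioning edges are treated symmetrically so that $\wpe=\wme$ on them, this is at least $\min\!\bigl(U,\tfrac{1}{2|\Delta|}\bigr)$. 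Now $\tfrac{1}{2|\Delta|}\ge\tfrac{F_t}{2\|w\|_1}\ge\tfrac{F_t}{7\|w\|_1}$ by the bound on $|\Delta|$, while $U\ge\tfrac{F_t}{7\|w\|_1}$ because $F_t\le t^*=O(mU)$ (preconditioning raises $t^*$ by only $2mU$) whereas $\|w\|_1\ge 2|E|$ with $|E|=\Omega(m)$, so $7U\|w\|_1$ is $\Omega(mU)$ and hence at least $F_t$. This yields $c_e\ge\frac{F_t}{7\|w\|_1}$, and then $c_e\ge\frac{F_t}{21m}$ under $\|w\|_1\le 3m$ is immediate since $21=7\cdot 3$. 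I expect this last step to be the main obstacle: pinning down the per-edge quadratic estimate so that the constant $7$ survives all cases — in particular the case where $\wme$ rather than $2U|\Delta|$ dominates the denominator, which is exactly where one must use both that $U$ is an $\Omega(1/\|w\|_1)$-fraction of $F_t$ and that the weights stay balanced on preconditioning edges — and, in the middle step, making sure the residual-flow argument delivers $|\Delta|\le O(\|w\|_1/F_t)$ rather than a polynomially weaker bound.
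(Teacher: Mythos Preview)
The paper does not give its own proof of \cref{lemma:precon}; it simply cites \cite{LS19}, Section~B.5. So there is no in-paper argument to compare against, and I evaluate your proposal on its own merits.

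Your overall strategy is the standard one and is essentially what the cited proof does: use the first-order conditions \cref{eq:centralpath} to write $\tfrac{\wpe}{\cpe}-\tfrac{\wme}{\cme}=\Delta:=y_b-y_a$ on every preconditioning edge, bound $|\Delta|\le\|w\|_1/F_t$ by pairing $\nabla V(f^*_{t,w})=By$ against residually feasible $a$--$b$ and $b$--$a$ flows of value $F_t$ (your convex-combination construction of the $b$--$a$ flow is correct), and then read off $c_e$ from the two scalar equations $\tfrac{\wpe}{\cpe}-\tfrac{\wme}{\cme}=\Delta$ and $\cpe+\cme=4U$. Your intermediate bounds $|\Delta|\le\|w\|_1/F_t$ and (when $c_e=\cpe$) $\cpe\ge\tfrac{2U\wpe}{2U|\Delta|+\wme}$ are both valid.

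The gap is exactly the step you flag yourself: the assertion that ``preconditioning edges are treated symmetrically so that $\wpe=\wme$.'' Nothing in the lemma's hypotheses says this, and the algorithm of the present paper does \emph{not} maintain it: the increment $\nu$ produced by \cref{algo:augment} satisfies, on every edge, either $\npe=0$ or $\nme=0$, so after even one call the two sides of a preconditioning edge can differ. Without balance, your Case~2 (where $\wme>2U|\Delta|$) yields only $\cpe\ge U\wpe/\wme$, and this is \emph{not} $\Omega(F_t/\|w\|_1)$ once $\wme/\wpe$ is large. Concretely, take $t=0$, all weights equal to $1$ except $w^-_{e_1}=W$ on a single preconditioning edge $e_1$: the scalar system forces $c^+_{e_1}\approx 4U/(W+1)$, while $F_t/(7\|w\|_1)=\Theta(mU)/\Theta(m+W)$; for $W=\Theta(m)$ the latter is $\Theta(U)$, far larger than $4U/W$. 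So the balancing assumption is doing real work. To close the argument you must either (i) add a hypothesis on $w$---e.g.\ $\wpe=\wme$, which is how the lemma is actually used in \cite{LS19}, whose weight updates are symmetric---and then check it holds for the weights this paper produces, or (ii) replace the Case-2 estimate by a sharper inequality that does not rely on it.
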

At the start of the algorithm, as we initialized $\wp = \wm = 1$, we have $\|w\|_1 = 2m.$ To apply \cref{lemma:precon} we maintain the stronger invariant that $\|w\|_1 \le 5m/2$ before each step, but may temporarily increase to $\|w\|_1 \le 3m$ during the step.

\subsection{Advancing along the central path via quadratic smoothing}
\label{sec:quadsmooth}
Let $t$ be a path parameter, and let $\d$ be a step size. Let $\cpe, \cme$ be the residual capacities induced by $f_{t,w}^*$, and let $(\cpe)',(\cme)'$ be those induced by $f_{t+\d,w}^*.$ We sketch an algorithm that computes $f_{t+\d,w}^*$ to high accuracy from $f_{t,w}^*$ in $\O(m)$ time given that for all $e \in E$ that $\cpe \approx_{1.05} (\cpe)'$ and $\cme \approx_{1.05} (\cme)'$.
Let $\hf = f_{t+\d,w}^* - f_{t,w}^*$ and define the change in the value of the barrier $V$ when we add $f$ as
\begin{align} \B(f) &\defeq V(f+f_{t,w}^*)-V(f_{t,w}^*) \nonumber \\ &= -\sum_{e\in E} \left(\wpe\log\left(1-\frac{f_e}{\upe-[f_{t,w}^*]_e}\right)+\wme\log\left(1+\frac{f_e}{\ume+[f_{t,w}^*]_e}\right) \right), \label{eq:important}
\end{align}
so that $\hf = \argmin_{B^Tf=\d\chi} \B(f)$.
To compute $\hf$, we smooth \cref{eq:important} by replacing each instance of $\log(\cdot)$ with a function $\tlog(\cdot)$ defined as
\[ \label{eq:tlog} \tlog_\eps(1+x) \defeq \begin{cases}
    \log(1+x) & \text{ for } |x| \le \eps\\
    \log(1+\eps)+\log'(1+\eps)(x-\eps)+\frac{\log''(1+\eps)}{2}(x-\eps)^2 & \text{ for } x \ge \eps \\
    \log(1-\eps)+\log'(1-\eps)(x+\eps)+\frac{\log''(1-\eps)}{2}(x+\eps)^2 & \text{ for } x \le -\eps.
    \end{cases} \]
Here, we fix $\eps = 1/10$ and write $\tlog(1+x) \defeq \tlog_{1/10}(1+x)$. Note that $\tlog_\eps(1+x)$ is the natural quadratic extension of $\log(1+x)$ outside the interval $|x|\leq \epsilon$. Specifically, the functions agree for $|x| \le \eps$, and we $\tlog_\eps(1+x)$ is the second order Taylor expansion of $\log(1+x)$ at $\eps, -\eps$ for $x > \eps$, $x < -\eps$ respectively. In this way, $\tlog(1+x)$ is twice differentiable everywhere. Define
\[ \tB(f) \defeq -\sum_{e\in E} \left(\wpe\tlog\left(1-\frac{f_e}{\upe-[f_{t,w}^*]_e}\right)+\wme\tlog\left(1+\frac{f_e}{\ume+[f_{t,w}^*]_e}\right) \right). \]

We now claim that $\hf = \argmin_{B^Tf=\d\chi} \tB(f)$, and that it can be computed in $\O(m)$ time. To argue the latter, note that by construction, all Hessians $\g^2 \tB(f)$ are within a multiplicative factor of $2$ of each other, hence we can compute $\argmin_{B^Tf=\d\chi} \tB(f)$ in $\O(m)$ time using Newton's method and electric flow computations. Because $\cpe \approx_{1.05} (\cpe)'$ and $\cme \approx_{1.05} (\cme)'$, we know that $\B$ and $\tB$ agree in a neighborhood of $\hf$, so $\hf = \argmin_{B^Tf=\d\chi} \tB(f)$ by the following simple observation. For completeness, we provide a proof in \cref{proofs:obsconvex}.
\begin{obs}
\label{obs:convex}
Let $\chi \subseteq \R^n$ be a convex set, and let $f, g:\R^n \to \R$ be convex functions. Let $x^* = \argmin_{x\in \chi} f(x)$, and assume that $f, g$ agree on a neighborhood of $x^*$ in $\R^n$. Then $g(x^*) = \min_{x \in \chi} g(x)$.
\end{obs}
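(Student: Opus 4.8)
The plan is a single line-segment contradiction argument, using only convexity of $g$, convexity of $\chi$, and the optimality of $x^*$ for $f$; no differentiability or even continuity beyond what convexity supplies is needed.

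First I would unpack the hypothesis. Let $N$ be a neighborhood of $x^*$ in $\R^n$ on which $f$ and $g$ agree. Since $x^* \in N$, we get in particular $f(x^*) = g(x^*)$. Now suppose, toward a contradiction, that $g(x^*) > \min_{x \in \chi} g(x)$, i.e. there exists $y \in \chi$ with $g(y) < g(x^*)$. I would then walk from $x^*$ toward $y$: for $\lambda \in [0,1]$ set $x_\lambda \defeq (1-\lambda)x^* + \lambda y$. Convexity of $\chi$ gives $x_\lambda \in \chi$ for all such $\lambda$, and convexity of $g$ gives $g(x_\lambda) \le (1-\lambda)g(x^*) + \lambda g(y) < g(x^*)$ for every $\lambda \in (0,1]$.

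Next I would use that $N$ contains an open ball about $x^*$: since $x_\lambda \to x^*$ as $\lambda \to 0^+$, there is some $\lambda_0 \in (0,1]$ with $x_{\lambda_0} \in N$, and hence $f(x_{\lambda_0}) = g(x_{\lambda_0}) < g(x^*) = f(x^*)$. But $x_{\lambda_0} \in \chi$, so this contradicts $x^* = \argmin_{x \in \chi} f(x)$. Therefore no such $y$ exists, which is exactly the claim $g(x^*) = \min_{x \in \chi} g(x)$.

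I do not expect a genuine obstacle here; the only points requiring a moment of care are that the entire segment $[x^*,y]$ stays in $\chi$ (which is precisely why convexity of $\chi$ is assumed) and that "neighborhood" is understood in $\R^n$ (so it contains a ball around $x^*$, guaranteeing $x_{\lambda_0} \in N$ for $\lambda_0$ small). One could alternatively phrase the contradiction via the subgradient inequality for $f$ at $x^*$ restricted to $\chi$, but the segment argument avoids invoking subgradients at all.
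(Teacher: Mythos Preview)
Your proof is correct and is essentially identical to the paper's own argument: assume some $y\in\chi$ has $g(y)<g(x^*)$, use convexity of $g$ along the segment from $x^*$ to $y$ to get a strictly smaller $g$-value arbitrarily close to $x^*$, then use $f=g$ near $x^*$ and convexity of $\chi$ to contradict optimality of $x^*$ for $f$. The only cosmetic difference is notation ($\lambda$ versus the paper's $\eps$).
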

We emphasize that we do not directly do the procedure described here, and instead apply quadratic smoothing in a different form in \cref{sec:energymax}. There we smooth the function \\ $\phi \defeq D_{-\log(1-x)}(x\|0)$, the Bregman divergence of $x$ to $0$ with respect to the function $-\log(1-x)$, instead of directly smoothing $\log(1+x).$ The smoothed function $\tphi$ is shown in \cref{eq:defphi,eq:deftphi}.

\subsection{Iterative refinement}
\label{sec:overviewrefine}

The idea of \emph{iterative refinement} was introduced in \cite{AKPS19,KPSW19} to solve $p$-norm regression problems, such as $\min_{B^Tf=d} \|f\|_p^p.$ Iterative refinement solves such an objective by reducing it to approximately minimizing objectives which are combinations of quadratic and $\ell_p$ norm pieces. Specifically, we can show using \cref{lemma:kpswiter} that for a fixed flow $f$ there is a gradient vector $g$ such that for any circulation $\D$ we have
\begin{align} \|f+\D\|_p^p - \|f\|_p^p = g^T\D + \Theta_p\left(\sum_{e\in E} |f_e|^{p-2}\D_e^2 + \|\D\|_p^p \right), \label{eq:expansion} \end{align} so that approximately minimizing \cref{eq:expansion} over circulations $\D$ suffices to make multiplicative progress towards the optimum of $\min_{B^Tf=d} \|f\|_p^p.$ We show in \cref{sec:refine} that a general class of objectives may all be reduced to approximately minimizing objectives which are combinations of quadratic and $\ell_p$ norm pieces. Specifically, any convex function $h$ with stable second derivative admits an expansion for $h(x+\D)^p - h(x)^p$ similar to \cref{eq:expansion}, which we show in \cref{lemma:iterh}. We then combine this expansion with the almost linear time smoothed $\ell_2$-$\ell_p$ solver of \cite{KPSW19} to show \cref{thm:mainopt}.
\section{Efficient Divergence Maximization}
\label{sec:energymax}
In this section we present our divergence maximization framework and show \cref{thm:main}.
Throughout the section we set $\eta = \log_m\left(m^{1/6-o(1)}U^{1/3}\right)$, in pursuit of a $m^{3/2-\eta+o(1)}$ time algorithm. We assume $G$ is preconditioned \cref{sec:precondition} and we maintain the invariant $\|w\|_1 \le 5m/2$ before each step, and $\|w\|_1 \le 3m$ at all times. As our algorithm runs in time at least $m^{3/2}$ for $U > \sqrt{m}$ we assume $U \le \sqrt{m}$ throughout this section. We assume that our algorithm knows $F_t$, as discussed in \cref{sec:setup}, and that $F_t \ge m^{1/2-\eta}$ or else we can round to an integral flow and run augmenting paths.

\paragraph{Notational conventions.} We largely adopt the same notation as used in \cref{sec:quadsmooth}. We use $D$ to refer to functions which are a Bregman divergence, and for a function $h$, we use $\widetilde{h}$ to denote a quadratic extension of $h$. For flows we use $f$ and $\hf$, the latter which refers to flows we wish to augment by. We use $\D$ and $\tD$ for circulations. The letters $w, \mu, \nu$ refer to weights and weight changes, and $W$ refers to a weight budget. \\

For $\eps < 1$ define the functions
\begin{align}
\label{eq:defphi}
\phi(x) &\defeq D_{-\log(1-x)}(x\| 0) = -\log(1-x)-x \enspace \text{ and } \enspace 
	\\ \tphi_\eps(x) &\defeq \begin{cases}
    \phi(x) & \text{ for } |x| \le \eps\\
    \phi(\eps)+\phi'(\eps)(x-\eps)+\frac{\phi''(\eps)}{2}(x-\eps)^2 & \text{ for } x \ge \eps \\
    \phi(-\eps)+\phi'(-\eps)(x+\eps)+\frac{\phi''(-\eps)}{2}(x+\eps)^2 & \text{ for } x \le -\eps.
  \end{cases}
\label{eq:deftphi}
\end{align}
Throughout, we will omit $\eps$ and write $\tphi(x) \defeq \tphi_{1/10}(x)$. As discussed in \cref{sec:quadsmooth} $\tphi(x)$ is such that it behaves as a quadratic around $x = 0$, and has continuous second derivatives. We have defined $\phi$ as the Bregman divergence of $-\log(1-x)$ from 0, and $\tphi$ is the quadratic extension of $\phi$ as described in \cref{sec:quadsmooth}. Several useful properties of the derivatives and stability of $\phi$ and $\tphi$ are collected in \cref{lemma:propphi}, which we prove in \cref{sec:refine}.
\begin{lemma}[Properties of $\tphi$]
\label{lemma:propphi}
We have that $1/2 \le \tphi''(x) \le 2$. Also, for $x \ge 0$ we have that $x/2 \le \tphi'(x) \le 2x$ and $-x/2 \ge \tphi'(-x) \ge -2x.$ We have that $x^2/4 \le \tphi(x) \le x^2$ for all $x$.
\end{lemma}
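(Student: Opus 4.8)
The plan is to reduce everything to explicit one-variable calculus on $\phi(x) = -\log(1-x) - x$ and its quadratic extension $\tphi$, using the piecewise definition in \cref{eq:defphi,eq:deftphi} with $\eps = 1/10$ fixed. First I would record the derivatives on the ``inner'' region $|x| \le \eps$: we have $\phi'(x) = \frac{1}{1-x} - 1 = \frac{x}{1-x}$ and $\phi''(x) = \frac{1}{(1-x)^2}$. On the inner region $x \in [-1/10, 1/10]$, $\phi''(x) = (1-x)^{-2}$ ranges over $[(1.1)^{-2}, (0.9)^{-2}] \subseteq [1/2, 2]$, which already gives the second-derivative bound there. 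On the two ``outer'' regions $\tphi$ is by construction a quadratic whose second derivative is the constant $\phi''(\pm\eps) = (1 \mp \eps)^{-2} \in \{(0.9)^{-2}, (1.1)^{-2}\} \subseteq [1/2,2]$; so $1/2 \le \tphi''(x) \le 2$ holds on all of $\R$, and in particular $\tphi$ is convex. (I would also note $\tphi''$ is continuous at $\pm\eps$ since the quadratic pieces are second-order Taylor expansions there — this is already asserted in the text.)

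Next, the first-derivative bounds $x/2 \le \tphi'(x) \le 2x$ for $x \ge 0$ (and the mirrored statement for $x \le 0$). Since $\tphi'(0) = \phi'(0) = 0$ and $\tphi$ is continuously differentiable (the quadratic pieces match value and first derivative at $\pm\eps$ by construction), we have $\tphi'(x) = \int_0^x \tphi''(u)\,du$ for $x \ge 0$, and the bound $1/2 \le \tphi''(u) \le 2$ integrates directly to $x/2 \le \tphi'(x) \le 2x$. The $x \le 0$ case is symmetric: $\tphi'(x) = -\int_x^0 \tphi''(u)\,du$, and $1/2 \le \tphi'' \le 2$ gives $-x/2 \ge \tphi'(-x) \ge -2x$ after substituting $x \mapsto -x$. (One small point to check: $\tphi$ is not an even function since $\phi$ isn't, so the two outer pieces have different curvatures; but both curvatures lie in $[1/2,2]$, so the integration argument is unaffected.)

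Finally, the bound $x^2/4 \le \tphi(x) \le x^2$ for all $x$ follows by integrating once more: $\tphi(x) = \int_0^x \tphi'(u)\,du$ since $\tphi(0) = \phi(0) = 0$. For $x \ge 0$, using $u/2 \le \tphi'(u) \le 2u$ gives $\int_0^x (u/2)\,du = x^2/4 \le \tphi(x) \le \int_0^x 2u\,du = x^2$; for $x \le 0$ the mirrored first-derivative bound gives the same conclusion since $\tphi(x) = -\int_x^0 \tphi'(u)\,du$ and $\tphi'(u)$ has sign opposite to $u$. This is entirely routine; I do not anticipate a genuine obstacle, only bookkeeping to make sure the piecewise pieces are invoked consistently. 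If anything is mildly delicate it is just confirming continuity of $\tphi, \tphi', \tphi''$ across the breakpoints $\pm\eps$ — but that is immediate from the fact that each outer piece is defined as the exact second-order Taylor expansion of $\phi$ at the breakpoint, so I would dispatch it in one sentence and then run the three integration arguments above in sequence.
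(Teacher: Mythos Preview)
Your proposal is correct and follows essentially the same approach as the paper: bound $\tphi''$ on the inner region via $(1-x)^{-2}\in[1/2,2]$ for $|x|\le 1/10$, note the outer pieces have constant second derivative $\phi''(\pm 1/10)$, and then integrate twice using $\tphi(0)=\tphi'(0)=0$. The paper packages the two integration steps into a separate auxiliary lemma (\cref{lemma:inth}) rather than writing them inline, but the content is identical.
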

Now, we define the higher order analog of electric energy which we maximize under a weight budget. Below, we assume without loss of generality that $\cpe \le \cme$ for all edges $e$, as the orientation of each edge is arbitrary in the algorithm. In this way, $c_e = \cpe$ for all $e$.
\begin{equation}
\E_w(f) \defeq \sum_{e\in E} \left(\wpe\phi\left(\frac{f_e}{\cpe}\right)+\wme\phi\left(-\frac{f_e}{\cme}\right)\right) \label{eq:energy} \nonumber
\end{equation}
\begin{equation}
\tE(f) \defeq \sum_{e\in E} \left(\wpe\tphi\left(\frac{f_e}{\cpe}\right)+\wme\tphi\left(-\frac{f_e}{\cme}\right)\right) \label{eq:tenergy} \nonumber
\end{equation}

Define $\val$ and $\tval$ as follows, where $p = 2\lceil\sqrt{\log m}\rceil$, and $W = m^{6\eta}$ is a constant. For clarity, we express the vector inside the $\|\cdot\|_p$ piece of \cref{eq:val,eq:tval} coordinate-wise, where the coordinate corresponding to edge $e$ is written.
\begin{equation}
\val(f) \defeq \E_w(f) + W\left\|\left(\cpe\right)^2\left(\phi\left(\frac{f_e}{\cpe}\right) + \left(\frac{\cme}{\cpe}\right)\phi\left(-\frac{f_e}{\cme}\right)\right)\right\|_p \label{eq:val}
\end{equation}
\begin{equation}
\tval(f) \defeq \tE(f) + W\left\|\left(\cpe\right)^2\left(\tphi\left(\frac{f_e}{\cpe}\right) + \left(\frac{\cme}{\cpe}\right)\tphi\left(-\frac{f_e}{\cme}\right)\right)\right\|_p \label{eq:tval}
\end{equation}
These are defined so that for $q$ as the dual norm of $p$, i.e. $1/q + 1/p = 1$, we have that $\val$ and $\tval$ correspond to maximizing the minimum values of $\E_w(f)$ and $\tE(f)$ under a weighted $\ell_q$ weight budget. Specifically, we can compute using Sion's minimax theorem that
\begin{align} \label{eq:sion} &\max_{\substack{\|(\cpe)^{-2}\npe\|_q \le W \\ \nu \in \R_{\ge0}^{2E} \\ \frac{\npe}{\cpe} = \frac{\nme}{\cme} \forall e \in E}} \min_{B^Tf=d} \E_{w+\nu}(f) = \min_{B^Tf=d}\max_{\substack{\|(\cpe)^{-2}\npe\|_q \le W \\ \nu \in \R_{\ge0}^{2E} \\ \frac{\npe}{\cpe} = \frac{\nme}{\cme} \forall e \in E}} \E_{w+\nu}(f) \\
&= \min_{B^Tf=d}\max_{\substack{\|(\cpe)^{-2}\npe\|_q \le W \\ \nu \in \R_{\ge0}^{2E} \\ \frac{\npe}{\cpe} = \frac{\nme}{\cme} \forall e \in E}} \E_w(f) + \E_{\nu}(f) \\ &= \min_{B^Tf=d}\max_{\substack{\|(\cpe)^{-2}\npe\|_q \le W \\ \nu^+ \in \R_{\ge0}^{2E}}} \E_w(f) + \sum_{e\in E} \left(\npe\phi\left(\frac{f_e}{\cpe}\right)+\frac{\cme}{\cpe}\npe\phi\left(-\frac{f_e}{\cme}\right)\right) = \min_{B^Tf=d} \val(f) \end{align}
and similarly for $\tE(f)$ and $\tval(f)$. The objective requires the constraint $\npe / \cpe = \nme / \cme \forall e \in E$ to ensure that the weight increase $\nu$ maintains centrality, and the coefficient of $(\cpe)^{-2}$ in the weight budget $\|(\cpe)^{-2}\npe\|_q \le W$ is chosen to ensure that the $\ell_p$ piece in $\tval(f)$ is ensured to have approximately unit weights on the $f_e$. Precisely, by \cref{lemma:propphi} and $\cpe \le \cme$ we have \[ \left(\cpe\right)^2\left(\tphi\left(\frac{f_e}{\cpe}\right) + \left(\frac{\cme}{\cpe}\right)\tphi\left(-\frac{f_e}{\cme}\right)\right) = \Theta(f_e^2). \] We require this property in order to apply the smoothed $\ell_2$-$\ell_p$ flow solvers in \cref{thm:smoothflow}.

As $\min_{B^Tf=d} \val(f)$ is the result of applying Sion's minimax theorem to a saddle point problem, there will be an optimal solution pair $(f^*, \mu^*)$. Ultimately, $f^*$ will be the flow which we add to our current flow to arrive at the next central path point, and the weight change will be derived from applying a weight reduction procedure to $\mu^*$.

The remaining arguments in this section heavily use local optimality of convex functions. For this reason, we show that $\val(f)$ and $\tval(f)$ are convex in \cref{proofs:valconvex}.
\begin{lemma}
\label{lemma:valconvex}
$\val(f)$ and $\tval(f)$ are convex.
\end{lemma}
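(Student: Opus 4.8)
The function $\val(f)$ is a sum of two terms: $\E_w(f)$ and $W$ times an $\ell_p$-norm of a vector whose $e$-th coordinate is $g_e(f_e) \defeq (\cpe)^2\big(\phi(f_e/\cpe) + (\cme/\cpe)\phi(-f_e/\cme)\big)$, and likewise for $\tval$ with $\phi$ replaced by $\tphi$. Since the sum of convex functions is convex, it suffices to show (i) $\E_w(f)$ (resp. $\tE(f)$) is convex, and (ii) the $\ell_p$-norm term is convex. For (i), note $\wpe, \wme \ge 0$ and each summand is a nonnegative combination of $\phi$ (resp. $\tphi$) composed with an affine map $f_e \mapsto \pm f_e/c$; convexity of $\phi$ follows since $\phi''(x) = 1/(1-x)^2 > 0$ on its domain, and convexity of $\tphi$ follows from $\tphi'' \ge 1/2 > 0$ by \cref{lemma:propphi}. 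Affine precomposition preserves convexity, so both $\E_w$ and $\tE$ are convex.

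\textbf{The main work is step (ii), the $\ell_p$-norm term.} Here I would invoke the standard fact that if $N:\R^E \to \R$ is a norm that is monotone on the nonnegative orthant (i.e. $0 \le x \le y$ entrywise implies $N(x) \le N(y)$), and each $g_e : \R \to \R_{\ge 0}$ is convex and nonnegative, then $f \mapsto N\big((g_e(f_e))_{e \in E}\big)$ is convex. The $\ell_p$-norm for $p \ge 1$ is indeed monotone on the nonnegative orthant. So I must check that each $g_e$ is convex and nonnegative. Nonnegativity: by \cref{lemma:propphi} (for $\tphi$) or directly (for $\phi$, since $\phi(x) \ge x^2/4 \ge 0$ — indeed $\phi \ge 0$ as a Bregman divergence), each term $\phi(f_e/\cpe), \phi(-f_e/\cme) \ge 0$, and the prefactors $(\cpe)^2, (\cme/\cpe)$ are positive, so $g_e \ge 0$. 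Convexity of $g_e$: $g_e(x) = (\cpe)^2 \phi(x/\cpe) + \cpe \cme \, \phi(-x/\cme)$ is, again, a positive combination of convex functions composed with affine maps, hence convex. The same argument applies verbatim with $\tphi$ in place of $\phi$.

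\textbf{The monotone-norm composition lemma is the one nontrivial ingredient,} so I would spell out its short proof: for $\lambda \in [0,1]$ and flows $f, f'$, convexity of each $g_e$ gives $g_e(\lambda f_e + (1-\lambda)f'_e) \le \lambda g_e(f_e) + (1-\lambda)g_e(f'_e)$ entrywise; applying monotonicity of $\|\cdot\|_p$ and then the triangle inequality yields
\[ \Big\| \big(g_e(\lambda f_e + (1-\lambda)f'_e)\big)_e \Big\|_p \le \Big\| \lambda \big(g_e(f_e)\big)_e + (1-\lambda)\big(g_e(f'_e)\big)_e \Big\|_p \le \lambda \Big\| \big(g_e(f_e)\big)_e \Big\|_p + (1-\lambda)\Big\| \big(g_e(f'_e)\big)_e \Big\|_p, \]
which is exactly convexity of the $\ell_p$ term. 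Combining with step (i) and summing, $\val$ and $\tval$ are convex. The one subtlety worth a sentence is that $\phi$ (unlike $\tphi$) is only defined for $x < 1$; but in the regime of interest $f_e/\cpe < 1$ since $\cpe = \upe - [f^*_{t,w}]_e - f_e$ would otherwise be nonpositive, and on this domain $\phi$ is finite and convex, with the convention $\phi = +\infty$ outside making it convex on all of $\R$ — so no genuine obstacle arises, it is just bookkeeping about the effective domain.
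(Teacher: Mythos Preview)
Your proposal is correct and follows essentially the same approach as the paper: the paper isolates the same ``monotone-norm composition'' fact as a standalone lemma (stated for $\ell_p$ specifically), proves it via convexity of the coordinate functions plus Minkowski's inequality (which is precisely your monotonicity-plus-triangle-inequality argument), and then applies it together with convexity of $\phi$ and $\tphi$ to conclude. Your added remark about the effective domain of $\phi$ is a nice bit of care the paper leaves implicit.
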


From now on, we fix a step size $\d = \frac{F_t}{10^5m^{1/2-\eta}}$.
This simplifies our analysis, as our objectives are no longer linear in $\d$, as is the case with electric flows. We now bound the 
minimum value of $\tE(f)$ over all $\d\chi$-flows, and show a congestion bound for the minimizer, where we recall that congestion of a flow is the ratio of flow on an edge to its residual capacity $c_e$. \cref{lemma:energybound,lemma:bounddual} generalize  corresponding bounds for electric flows shown in \cite{Madry16} and \cite{LS19} Lemma 4.5 and 5.2.
\begin{lemma}
\label{lemma:energybound}
Let $\d = \frac{F_t}{10^5m^{1/2-\eta}}$. Then $\min_{B^Tf=\d\chi} \tE(f) \le 5\cdot 10^{-7} m^{2\eta}.$
\end{lemma}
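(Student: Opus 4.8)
The plan is to exhibit a single explicit $\delta\chi$-flow $f$ and upper bound $\tE(f)$, since $\min_{B^Tf=\delta\chi}\tE(f)$ is at most the value on any feasible flow. The natural candidate is to route the flow $\delta\chi$ using the preconditioning edges: by \cref{lemma:precon} and the maintained invariant $\|w\|_1\le 3m$, every one of the $m$ preconditioning edges $e$ has residual capacity $c_e=\cpe\ge F_t/(21m)$, and since these edges have capacity $2U$ on both sides we also have $\cme\le 2U$, hence $\cme \le 2U \le 2\sqrt m$ while $\cpe$ could be as small as $F_t/(21m)$. I would route $\delta$ units of $\chi$-flow split evenly across all $m$ preconditioning edges, so $f_e = \delta/m$ on each preconditioning edge and $f_e=0$ elsewhere; this is a valid $\delta\chi$-flow (the preconditioning edges all go directly from $a$ to $b$). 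Then $\tE(f) = \sum_{e \text{ prec.}} \bigl(\wpe\tphi(f_e/\cpe) + \wme\tphi(-f_e/\cme)\bigr)$.

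Next I would bound each term using \cref{lemma:propphi}, which gives $\tphi(x)\le x^2$ for all $x$. Thus $\wpe\tphi(f_e/\cpe) \le \wpe (f_e/\cpe)^2$ and similarly for the minus term, so $\tE(f) \le \sum_{e}\bigl(\wpe/( \cpe)^2 + \wme/(\cme)^2\bigr) f_e^2 \le \|w\|_1 \cdot \max_e \bigl(1/(\cpe)^2\bigr)\cdot (\delta/m)^2 \cdot (\text{factor } 2)$, where I used $f_e = \delta/m$, $c_e = \cpe \le \cme$, and $\|w\|_1 \le 3m$. So $\tE(f) \lesssim m \cdot (21m/F_t)^2 \cdot (\delta/m)^2 = 441\, \delta^2 / F_t^2$. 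Now substitute $\delta = F_t/(10^5 m^{1/2-\eta})$: the $F_t$'s cancel, giving $\tE(f) \lesssim 441 / (10^{10} m^{1-2\eta}) = (441/10^{10})\, m^{2\eta-1}$. This is $o(m^{2\eta})$ — in fact vastly smaller than the claimed bound $5\cdot 10^{-7} m^{2\eta}$ — so even a lossy version of these inequalities comfortably suffices, with room to spare for the constant factors.

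The main obstacle, and the place where I'd need to be careful, is verifying that the $\tphi(x) \le x^2$ estimate from \cref{lemma:propphi} is actually applicable here, i.e.\ that I'm allowed to use it unconditionally (the lemma does state it holds "for all $x$", so this is fine), and more importantly making sure the setup conventions are respected: the algorithm assumes $\cpe \le \cme$ for every edge after re-orienting, whereas a preconditioning edge from $a$ to $b$ with $f_e>0$ might have $\cpe = 2U - f_e$ smaller or larger than $\cme = 2U + f_e$ depending on orientation — but after the assumed re-orientation we always have $c_e = \cpe$, and the bound $c_e \ge F_t/(21m)$ from \cref{lemma:precon} is exactly a bound on this $c_e=\cpe$, so the estimate $f_e/\cpe \le (\delta/m)\cdot (21m/F_t) = 21\delta/F_t$ goes through; one also checks $f_e/\cme \le f_e/\cpe$ since $\cpe \le \cme$. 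A secondary subtlety is that $\delta$ must be small enough that $f$ is genuinely feasible (so that $c_e > 0$ is preserved and the flow is well-defined on the residual graph), which follows since $\delta/m$ is tiny compared to $F_t/m \le c_e$. Modulo these bookkeeping checks the argument is a short direct computation, and the slack between the bound we prove and the bound claimed means no delicate constant-chasing is required.
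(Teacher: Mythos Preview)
Your approach is exactly the paper's: route $\delta/m$ on each preconditioning edge, use \cref{lemma:precon} to bound $f_e/c_e \le 21/(10^5 m^{1/2-\eta})$, and apply $\tphi(x)\le x^2$ together with $\|w\|_1\le 3m$.

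There is one arithmetic slip to fix: you write $m \cdot (21m/F_t)^2 \cdot (\delta/m)^2 = 441\,\delta^2/F_t^2$, but the left side actually equals $441\, m\,\delta^2/F_t^2$ --- you dropped a factor of $m$. With the correct computation and $\delta = F_t/(10^5 m^{1/2-\eta})$ you get a bound of order $(441/10^{10})\cdot 3m \cdot m^{2\eta-1} \approx 1.3\cdot 10^{-7}\, m^{2\eta}$, not $m^{2\eta-1}$. So your bound is \emph{not} ``vastly smaller'' than the stated $5\cdot 10^{-7} m^{2\eta}$; it matches it up to the constant, exactly as in the paper.
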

\begin{proof}
Let $f'$ be the flow which routes $\d / m$ units of flow on each of the $m$ preconditioning edges. For a preconditioning edge $e_p$, \cref{lemma:precon} and the invariant $\|w\|_1 \le 3m$  tells us that 
\[ \frac{f_{e_p}'}{c_{e_p}} \le \frac{\d/m}{F_t/7\|w\|_1} \le \frac{21}{10^5m^{1/2-\eta}} ~.
\]
Therefore, applying \cref{lemma:propphi} to $P$, the set of $m$ preconditioning edges, and again applying that $\|w\|_1 \le 3m$ gives us the desired bound, as
\begin{align*} \tE(f') &= \sum_{e \in P} \left(\wpe\tphi\left(\frac{f_e'}{\cpe}\right)+\wme\tphi\left(-\frac{f_e'}{\cme}\right)\right) \\ &\le \left(\frac{f_{e_p}'}{c_{e_p}}\right)^2 \sum_{e\in P}\left(\wpe+\wme\right) \le \left(\frac{21}{10^5m^{1/2-\eta}}\right)^2\|w\|_1 \le 5 \cdot 10^{-7} m^{2\eta} \end{align*}  
\end{proof}
\begin{lemma}
\label{lemma:bounddual}
Let $\d = \frac{F_t}{10^5m^{1/2-\eta}}$, and $\hf = \argmin_{B^Tf=\d\chi} \tE(f)$. Then $|\hf_e|c_e^{-2} \le m^{2\eta}.$
\end{lemma}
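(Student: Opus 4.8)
The plan is to combine the energy bound of \cref{lemma:energybound} with the first-order optimality conditions for $\hf$ and a maximum-principle argument on the associated dual potentials, in the spirit of the elementary fact that in an electric flow the potential drop across a single edge never exceeds the source--sink potential drop.

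First I would record the optimality conditions. Since $\hf$ minimizes the convex function $\tE$ over $\{f : B^Tf = \d\chi\}$, its gradient is orthogonal to $\ker B^T$, so $\g\tE(\hf) = By$ for some $y \in \R^V$; explicitly, for $e = (u,v)$,
\[ [By]_e = y_v - y_u = \frac{\wpe}{\cpe}\,\tphi'\!\left(\frac{\hf_e}{\cpe}\right) - \frac{\wme}{\cme}\,\tphi'\!\left(-\frac{\hf_e}{\cme}\right). \]
By \cref{lemma:propphi}, $\tphi'(x)$ has the same sign as $x$ and $|\tphi'(x)| \ge |x|/2$, so, using the convention $c_e = \cpe \le \cme$ and $\wpe,\wme \ge 1$, the two summands above both have the sign of $\hf_e$ and the first has magnitude at least $|\hf_e|/(2\cpe^2)$. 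Hence $|\hf_e|\,c_e^{-2} \le 2\,|y_v - y_u|$ for every edge $e$.

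Next I would show that the extreme entries of $y$ lie at the terminals $a,b$, so that $|y_v - y_u| \le |y_a - y_b|$ for all $e$. The map sending the potential difference $y_v-y_u$ to the flow $\hf_e$ solving the displayed equation is strictly increasing (its derivative is $\ge \tfrac12\wpe\cpe^{-2} > 0$ by $\tphi'' \ge 1/2$) and vanishes exactly at $0$, so on every edge the flow runs from the lower-potential endpoint to the higher-potential one. If $S$ is the set of vertices attaining $\max_v y_v$ and $S \ne V$, then, using connectedness of the preconditioned graph, some edge crosses the cut $(S, V\setminus S)$, every such edge carries strictly positive flow into $S$, and edges inside $S$ carry no flow; flow conservation then forces the net demand on $S$ to be strictly positive, so $S$ contains exactly one of $a,b$. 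The same applies to the minimizing set, and since $y_a \ne y_b$ (shown next), $\max_v y_v - \min_v y_v = |y_a - y_b|$.

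Finally I would bound $|y_a-y_b|$ via the energy. Pairing the optimality condition with $\hf$ gives $\g\tE(\hf)^T\hf = y^TB^T\hf = \d\,(y_b - y_a)$, which is positive because $x\tphi'(x) > 0$ for $x \ne 0$ and $\hf \ne 0$ (so indeed $y_a \ne y_b$). On the other hand, the termwise estimates $x\tphi'(x) \le 2x^2$ and $\tphi(x) \ge x^2/4$ from \cref{lemma:propphi} yield $\g\tE(\hf)^T\hf \le 8\,\tE(\hf)$. Combining these with $\tE(\hf) \le 5\cdot10^{-7}m^{2\eta}$ from \cref{lemma:energybound} and $\d \ge 10^{-5}$ (from $F_t \ge m^{1/2-\eta}$),
\[ |\hf_e|\,c_e^{-2} \le 2\,|y_a-y_b| = \frac{2\,\g\tE(\hf)^T\hf}{\d} \le \frac{16\,\tE(\hf)}{\d} \le \frac{16\cdot 5\cdot 10^{-7}\, m^{2\eta}}{10^{-5}} \le m^{2\eta}. \]
I expect the maximum-principle step to be the main obstacle: unlike for genuine electric flows, the flow here is a nonlinear (though monotone) function of the potential drop, so one must argue carefully—using strict monotonicity and handling ties in the values of $y$ through connectedness—that the dual potential still attains its extremes at $a$ and $b$. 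Everything else is routine bookkeeping with the derivative and stability bounds for $\tphi$ in \cref{lemma:propphi}.
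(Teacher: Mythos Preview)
Your proof is correct and follows essentially the same structure as the paper's: extract dual potentials from optimality, bound $|\hf_e|c_e^{-2}$ below the edge's potential drop, bound the edge drop by the $a$--$b$ drop, and bound the latter via $\g\tE(\hf)^T\hf \le 8\,\tE(\hf)$ together with \cref{lemma:energybound}. The one substantive difference is in how the edge-drop-versus-terminal-drop inequality is obtained. The paper argues that $\hf$ is acyclic (any cycle could be shrunk to lower $\tE$), decomposes it into $a$--$b$ paths, and observes that along each path the potential differences all share the sign of the flow and telescope to $y_b-y_a$; hence any single edge's drop is at most $y_b-y_a$. You instead run a maximum-principle argument directly on the potential vector, using strict monotonicity of the edge relation and connectedness to force the extreme values of $y$ onto $\{a,b\}$. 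Both are valid; the paper's route is a touch shorter and avoids the case analysis on the maximizing/minimizing vertex sets, while your version makes the ``potential extremes at the terminals'' intuition explicit and does not require the path decomposition.
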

\begin{proof}
Local optimality tells us that there is a vector $z \in \R^V$ satisfying $Bz = \g\tE(\hf).$ This, \cref{lemma:energybound}, and \cref{lemma:propphi}, specifically that $x\tphi'(x) \le 2x^2 \le 8\tphi(x)$ gives us
\[ \d\chi^Tz = \hf^TBz = \hf^T\g\tE(\hf) = \sum_{e\in E} \hf_e\left(\frac{\wpe}{\cpe}\tphi'\left(\frac{\hf_e}{\cpe}\right)-\frac{\wme}{\cme}\tphi'\left(-\frac{\hf_e}{\cme}\right)\right) \le 8\tE(\hf). \]
Note that the flow $\hf$ is acyclic, i.e. there is no cycle where the flow is in the positive direction for every cycle edge, as decreasing the flow along a cycle reduces the objective value. Also, for all edges $e=(u,v)$, we have $z_v-z_u = [Bz]_e = [\g \tE(\hf)]_e$, which has the same sign as $\hf_e$. As $\hf$ is acyclic, it can be decomposed into $a$-$b$ paths. Since, some path contains the edge $e$, we get that $|[Bz]_e| = |z_v-z_u| \le z_b-z_a = \chi^Tz.$ Using that $x\tphi'(x) \ge x^2/2$ from \cref{lemma:propphi} we get that
\[ |[Bz]_e| = |[\g\tE(\hf)]_e| \ge \frac{\wpe|\hf_e|}{2(\cpe)^2}+\frac{\wme|\hf_e|}{2(\cme)^2} \ge \frac12|\hf_e|c_e^{-2}. \]
Combining these gives us \[ |\hf_e|c_e^{-2} \le 2|[Bz]_e| \le 2\chi^Tz \le 16\d^{-1}\tE(\hf) \le m^{2\eta} \] after using \cref{lemma:energybound} and $\d = \frac{F_t}{10^5m^{1/2-\eta}} \ge 10^{-5}.$
\end{proof}
Now, we show that computing $\hf = \argmin_{B^Tf=\d\chi} \tval(f)$ gives us weight changes to control the congestion of the divergence maximizing flow. For clarity, the process is shown in \cref{algo:augment}.

\begin{algorithm}[h]
\caption{$\Augment(G,w,F_t,f_{t,w}^*)$. Takes a preconditioned undirected graph $G$ with maximum capacity $U$, weights $w \in \R^{2E}_{\ge1}$ with $\|w\|_1 \le 5m/2$, residual flow $F_t = t^*-t$, central path point $f_{t,w}^*$. Returns step size $\d$, weights $\nu$, and $\d\chi$-flow $\hf$ with $f_{t,w+\nu}^* = f_{t,w}^* + \hf.$}
$\eta \assign \log_m(m^{1/6-o(1)}U^{-1/3})$ \\
$\d \assign \frac{F_t}{10^5m^{1/2-\eta}}$ \Comment{Step size.} \\
$\cpe \assign \upe-[f_{t,w}^*]_e, \cme \assign \ume+[f_{t,w}^*]_e$. \Comment{Residual capacities.} \\
$W \assign m^{6\eta}$ \Comment{Weight budget.} \\
$\hf \assign \argmin_{B^Tf=\d\chi} \tval(f)$ \Comment{$\tval(f)$ implicitly depends on $W,\cpe,\cme$.} \label{line:hf} \\
$v \in \R^E$ defined as $v_e \assign \left(\cpe\right)^2\left(\tphi\left(\frac{\hf_e}{\cpe}\right) + \left(\frac{\cme}{\cpe}\right)\tphi\left(-\frac{\hf_e}{\cme}\right)\right)$ for all $e \in E$. \label{line:vvv} \\
$\mu \in \R^{2E}_{\ge0}$ defined as $\mpe \assign W(\cpe)^2 \cdot \frac{v_e^{p-1}}{\|v\|_p^{p-1}}$ and $\mme \assign \frac{\cme}{\cpe}\mpe$. \Comment{Preliminary weight change.} \label{line:mu} \\
Initialize $\nu \in \R^{2E}_{\ge0}$. \Comment{Reduced weight change, satisfies $\frac{\npe}{\cpe-\hf_e} - \frac{\nme}{\cme+\hf_e} = \frac{\mpe}{\cpe-\hf_e} - \frac{\mme}{\cme+\hf_e}$} \label{line:startnu} \\
\For{$e \in E$}{
	\If{$\frac{\mpe}{\cpe-\hf_e} - \frac{\mme}{\cme+\hf_e} \ge 0$}{
		$\npe \assign (\cpe-\hf_e)\left(\frac{\mpe}{\cpe-\hf_e} - \frac{\mme}{\cme+\hf_e}\right)$, $\nme \assign 0$ \label{line:midnu} \\
	}\Else{
		$\npe \assign 0$, $\nme \assign -(\cme+\hf_e)\left(\frac{\mpe}{\cpe-\hf_e} - \frac{\mme}{\cme+\hf_e}\right)$ \label{line:endnu} \\
	}
}
Return $(\d,\hf,\nu)$.
\label{algo:augment}
\end{algorithm}

Now, we analyze \cref{algo:augment}. We start by showing that $\hf = \argmin_{B^Tf=\d\chi} \tEv(f)$ for the weight change $\mu$ in line \ref{line:mu} of \cref{algo:augment}, and that $\mu$ has bounded $\ell_1$ norm. This essentially follows from duality in our setup in \cref{eq:sion}.
\begin{lemma}
\label{lemma:divmin}
Let parameters $\eta,W,\cpe,\cme,\d$, flow $\hf$, and weight change $\mu$ be defined as in \cref{algo:augment}, and assume that $F_t \ge m^{1/2-\eta}.$ Then we have that $\|\mu\|_1 \le m/2$, $f_{t,w}^* = f_{t,w+\mu}^*$, and $\hf = \argmin_{B^Tf=\d\chi} \tEv(f).$
\end{lemma}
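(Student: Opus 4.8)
The plan is to verify the three assertions in turn: (a) $f_{t,w}^* = f_{t,w+\mu}^*$, (b) $\hf = \argmin_{B^Tf=\d\chi}\tEv(f)$, and (c) $\|\mu\|_1 \le m/2$. Claims (a) and (b) are essentially forced by the defining property of $\mu$ in \cref{algo:augment}, namely $\mme = \tfrac{\cme}{\cpe}\mpe$, equivalently $\tfrac{\mpe}{\cpe} = \tfrac{\mme}{\cme}$; (c) is the only place a quantitative estimate enters, and is where I expect the real work to lie. For (a): with $\cpe,\cme$ the residual capacities of $f_{t,w}^*$, the identity $\tfrac{\mpe}{\cpe} - \tfrac{\mme}{\cme} = 0$ says that replacing $w$ by $w+\mu$ leaves the right-hand side $\tfrac{\wpe}{\cpe} - \tfrac{\wme}{\cme}$ of the centrality condition \eqref{eq:centralpath} unchanged. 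Hence $f_{t,w}^*$ together with the same dual $y$ witnesses \eqref{eq:centralpath} for weights $w+\mu$ at the same $t$; as $w + \mu \ge 1$, the associated weighted barrier is strictly convex and its central path point is unique, so $f_{t,w+\mu}^* = f_{t,w}^*$.

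For (b), first note $F_t \ge m^{1/2-\eta}>0$ forces $\d > 0$, so $\hf$ routes a nonzero demand and $\hf \neq 0$; then by \cref{lemma:propphi} some coordinate of the vector $v$ from line~\ref{line:vvv} satisfies $v_e \ge \tfrac14\hf_e^2 > 0$, so $\|v\|_p > 0$ and $\tval$ is differentiable at $\hf$. Since $\hf$ minimizes the convex function $\tval$ (\cref{lemma:valconvex}) over $\{B^Tf = \d\chi\}$, there is $z\in\R^V$ with $Bz = \g\tval(\hf)$. I would then compute $\g\tval(\hf)$ coordinatewise: the chain rule applied to the $\|v(\cdot)\|_p$ term contributes on edge $e$ the quantity $W\tfrac{v_e^{p-1}}{\|v\|_p^{p-1}}\,\cpe\big(\tphi'(\hf_e/\cpe) - \tphi'(-\hf_e/\cme)\big)$; substituting $W\tfrac{v_e^{p-1}}{\|v\|_p^{p-1}} = \mpe/(\cpe)^2$ and then $\tfrac{\mpe}{\cpe} = \tfrac{\mme}{\cme}$ rewrites this as $\tfrac{\mpe}{\cpe}\tphi'(\hf_e/\cpe) - \tfrac{\mme}{\cme}\tphi'(-\hf_e/\cme)$, which is precisely the contribution of the $\mu$-weighted barrier terms to $\g\tEv(\hf)$. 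Since the $w$-weighted terms of $\tval$ and $\tEv$ already agree, this gives $\g\tval(\hf) = \g\tEv(\hf)$, hence $Bz = \g\tEv(\hf)$; combined with convexity of $\tEv$ and $B^T\hf = \d\chi$ this certifies $\hf = \argmin_{B^Tf=\d\chi}\tEv(f)$. (Concretely, this exhibits $\mu$ as the dual maximizer in \eqref{eq:sion}.)

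For (c), from the definition of $\mu$ we have $(\cpe)^{-2}\mpe = W\tfrac{v_e^{p-1}}{\|v\|_p^{p-1}}$, and since the conjugate exponent $q$ satisfies $(p-1)q = p$ a direct computation gives $\big\|(\cpe)^{-2}\mpe\big\|_q = W$ exactly. Next $\mpe + \mme = \cpe(\cpe+\cme)\,(\cpe)^{-2}\mpe$, and since $G$ is preconditioned (undirected) $\cpe + \cme = \upe + \ume \le 2U$, while the normalization $\cpe \le \cme$ forces $\cpe \le U$, so $\cpe(\cpe+\cme) \le 2U^2$. Therefore
\[ \|\mu\|_1 = \sum_{e\in E}\cpe(\cpe+\cme)\,(\cpe)^{-2}\mpe \;\le\; 2U^2\big\|(\cpe)^{-2}\mpe\big\|_1 \;\le\; 2U^2 m^{1/p}\big\|(\cpe)^{-2}\mpe\big\|_q \;=\; 2U^2 m^{1/p}W. \]
Finally $W = m^{6\eta}$ with $\eta = \log_m\!\big(m^{1/6-o(1)}U^{-1/3}\big)$ yields $U^2 W = m^{1-o(1)}$, so $\|\mu\|_1 \le 2m^{1+1/p-o(1)} \le m/2$ provided the $o(1)$ in the definition of $\eta$ — which we are free to fix — is chosen to dominate $1/p + O(1/\log m) = O(1/\sqrt{\log m})$.

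The main obstacle is step (c). It is the only point where a nontrivial inequality is invoked, and it works only because the loss in passing from the (saturated) weighted $\ell_q$ budget on $\mu$ to an $\ell_1$ bound over $|E|$ coordinates is merely $m^{1/p} = m^{o(1)}$, and because $\eta$ is calibrated so that $U^2 m^{6\eta} = m^{1-o(1)}$: the weight budget $W$ is set just small enough relative to $U$ that, given the invariant $\|w\|_1 \le 5m/2$, the bound $\|\mu\|_1 \le m/2$ keeps $\|w+\mu\|_1 \le 3m$. Steps (a) and (b), by contrast, are mechanical consequences of the identity $\tfrac{\mpe}{\cpe} = \tfrac{\mme}{\cme}$, which is engineered into $\mu$ precisely so that it both preserves centrality and makes the gradient of the $\ell_p$ penalty at $\hf$ a valid reweighting of the barrier $\tE$.
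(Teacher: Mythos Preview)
Your proof is correct and follows essentially the same approach as the paper: use $\mpe/\cpe = \mme/\cme$ to get centrality preservation, compute $\nabla\tval(\hf)$ and identify it with $\nabla\tEv(\hf)$ via the definition of $\mu$, and bound $\|\mu\|_1$ by H\"older together with the calibration of $\eta$. One minor slip in part~(c): the preconditioning edges have capacity $2U$, so in general $\cpe+\cme \le 4U$ and $\cpe \le 2U$, giving $\cpe(\cpe+\cme) \le 8U^2$ rather than $2U^2$; this constant is harmlessly absorbed into the $m^{o(1)}$ slack (the paper's own constant is equally loose here).
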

\begin{proof}
Let $v \in \R^E$ be the vector as defined in line \ref{line:vvv} in \cref{algo:augment}.
By local optimality of $\hf$, we have that there is a vector $z$ satisfying for all $e \in E$ that 
\begin{align} [Bz]_e &= \left[\g\tval(\hf)\right]_e \nonumber \\
&= \left(\frac{\wpe}{\cpe} + \frac{v_e^{p-1}}{\|v\|_p^{p-1}} \cdot \frac{W(\cpe)^2}{\cpe}\right)\tphi'\left(\frac{\hf_e}{\cpe}\right) - \left(\frac{\wme}{\cme} + \frac{v_e^{p-1}}{\|v\|_p^{p-1}} \cdot \frac{W\cpe\cme}{\cme}\right)\tphi'\left(-\frac{\hf_e}{\cme}\right). \label{eq:weightchange}
\end{align}
For clarity, we rewrite line \ref{line:mu} of \cref{algo:augment} here as
\begin{align}
\mpe = W(\cpe)^2 \cdot \frac{v_e^{p-1}}{\|v\|_p^{p-1}} \text{ and } \mme = \frac{\cme}{\cpe}\mpe = W\cpe\cme \cdot \frac{v_e^{p-1}}{\|v\|_p^{p-1}}. \label{eq:stillcentral}
\end{align}
Note that $\mpe/\cpe = \mme/\cme$, hence $f_{t,w}^* = f_{t,w+\mu}^*$ by \cref{eq:centralpath}. Combining \cref{eq:weightchange,eq:stillcentral} and optimality conditions of the objective $\min_{B^Tf=\d\chi} \widetilde{D^V_{w+\mu}}(f)$ shows that $\hf = \argmin_{B^Tf=\d\chi} \widetilde{D^V_{w+\mu}}(f)$. Additionally, if $q$ is the dual of $p$, i.e. $1/q+1/p=1$, then
\[ \|\mu\|_1 \le m^{o(1)}\|\mu\|_q \le 2m^{o(1)}WU^2\left\|\frac{v_e^{p-1}}{\|v\|_p^{p-1}}\right\|_q = 2m^{o(1)}WU^2 = m^{6\eta+o(1)}U^2 \le m/2 \] by our choice of $\eta = \log_m\left(m^{1/6-o(1)}U^{1/3}\right)$.
\end{proof}
We now show congestion bounds on $\hf$ by imitating the proof of \cref{lemma:energybound} and applying \cref{lemma:bounddual}. Recall that $c_e = \min(\cpe,\cme).$
\begin{lemma}
\label{lemma:congbound}
Let parameters $\eta,W,\cpe,\cme,\d$, flow $\hf$, and weight change $\mu$ be defined as in \cref{algo:augment}, and assume that $F_t \ge m^{1/2-\eta}.$ Then we have $|\hf_e| \le \frac{1}{500}m^{-2\eta}$ and $|\hf_e| \le \frac{1}{20}c_e$ for all edges $e$. It follows that $\hf = \argmin_{B^Tf=\d\chi} \val(f).$
\end{lemma}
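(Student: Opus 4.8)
The plan is to first bound $\tval(\hf)$ from above by $5\cdot 10^{-7}m^{2\eta}$, then extract both congestion bounds by combining this estimate with local optimality of $\hf$, and finally transfer from $\tval$ to $\val$ via \cref{obs:convex}.

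For the upper bound on $\tval(\hf)$, the key point I would establish is the identity $\tval(\hf) = \tEv(\hf)$. Writing $v$ for the vector of line \ref{line:vvv} of \cref{algo:augment}, the choice of $\mu$ in line \ref{line:mu} gives $\sum_{e\in E}\bigl(\mpe\tphi(\hf_e/\cpe)+\mme\tphi(-\hf_e/\cme)\bigr) = W\sum_{e\in E} v_e^{p}/\|v\|_p^{p-1} = W\|v\|_p$, which is exactly the $\ell_p$ term in $\tval(\hf)$; adding the $w$-weighted smoothed energy $\tE(\hf)$ to both sides yields $\tval(\hf)=\tEv(\hf)$ (this is the saddle-point identity underlying \cref{lemma:divmin} and the $\tE$-analogue of \cref{eq:sion}). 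Now \cref{lemma:divmin} gives both $\hf = \argmin_{B^Tf=\d\chi}\tEv(f)$ and $\|w+\mu\|_1 \le \|w\|_1+\|\mu\|_1 \le 5m/2+m/2 = 3m$, so I would repeat the computation in the proof of \cref{lemma:energybound} with the weight vector $w+\mu$, comparing against the flow $f'$ that routes $\d/m$ units on each of the $m$ preconditioning edges, to conclude $\tEv(\hf) \le \tEv(f') \le 5\cdot 10^{-7}m^{2\eta}$.

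With this in hand the absolute bound is immediate: $\tE(\hf)\ge 0$ and $\|v\|_p \ge \|v\|_\infty \ge v_e \ge (\cpe)^2\tphi(\hf_e/\cpe) \ge \hf_e^2/4$ (using $\tphi(x)\ge x^2/4$ from \cref{lemma:propphi} and that the second summand defining $v_e$ is nonnegative), so $\tfrac{W}{4}\hf_e^2 \le W\|v\|_p \le \tval(\hf) \le 5\cdot 10^{-7}m^{2\eta}$, and substituting $W=m^{6\eta}$ gives $|\hf_e| \le \sqrt{2\cdot 10^{-6}}\,m^{-2\eta} < \tfrac1{500}m^{-2\eta}$. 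For the second bound I would re-run the argument of \cref{lemma:bounddual} verbatim with $w$ replaced by $w+\mu$; all of its ingredients (local optimality, $x\tphi'(x)\le 8\tphi(x)$, acyclicity of $\hf$, decomposition into $a$–$b$ paths, $x\tphi'(x)\ge x^2/2$, and $\d\ge 10^{-5}$ from $F_t\ge m^{1/2-\eta}$) only use $\|w+\mu\|_1\le 3m$ and $w+\mu\ge 1$, so it yields $|\hf_e|c_e^{-2}\le m^{2\eta}$. Multiplying the two bounds gives $\hf_e^2 c_e^{-2}\le \tfrac1{500}$, i.e. $|\hf_e|\le \tfrac{1}{\sqrt{500}}c_e < \tfrac1{20}c_e$. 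Finally, since $\cpe\le\cme$ (so $c_e=\cpe$) we get $|\hf_e/\cpe| = |\hf_e|/c_e < \tfrac1{20}$ and $|\hf_e/\cme| \le |\hf_e|/\cpe < \tfrac1{20} < \eps = \tfrac1{10}$, hence $\phi$ and $\tphi$, and therefore $\val$ and $\tval$, agree on a neighborhood of $\hf$; as both functions are convex (\cref{lemma:valconvex}) and $\hf$ minimizes $\tval$ over $\{f:B^Tf=\d\chi\}$, \cref{obs:convex} yields $\val(\hf)=\min_{B^Tf=\d\chi}\val(f)$, with uniqueness of the minimizer following from strict convexity of $\E_w$ on that affine set.

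The step I expect to be the main obstacle is the identity $\tval(\hf)=\tEv(\hf)$ together with the bookkeeping $\|w+\mu\|_1\le 3m$: this is precisely what lets the large weight budget $W=m^{6\eta}$ turn the crude energy estimate of \cref{lemma:energybound} into the far sharper absolute bound $|\hf_e|\le\tfrac1{500}m^{-2\eta}$, which is what ultimately controls the $\ell_\infty$ congestion of the flow we augment by. The remaining steps are short calculations that directly reuse \cref{lemma:bounddual}, \cref{lemma:propphi}, and \cref{obs:convex}.
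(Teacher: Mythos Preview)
Your proof is correct, and from the second congestion bound onward it coincides with the paper's argument (both apply \cref{lemma:bounddual} to the weights $w+\mu$, multiply the two bounds, and invoke \cref{obs:convex}). The one place you genuinely diverge is in how you obtain the upper bound on $\tval(\hf)$.

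The paper bounds $\tval(\hf)\le\tval(f')$ for the preconditioning flow $f'$ and then controls the two pieces of $\tval(f')$ separately: the smoothed energy via \cref{lemma:energybound}, and the $\ell_p$ piece by the crude estimate $v_e(f')\le 2(f'_e)^2\le 10^{-8}m^{2\eta-1}U^2$, which after multiplying by $W=m^{6\eta}$ and taking $\|\cdot\|_p$ needs the specific choice of $\eta$ to close (namely $m^{8\eta-1+o(1)}U^2\le m^{2\eta}$). You instead observe the exact identity $\tval(\hf)=\tEv(\hf)$, which follows from the definition of $\mu$ (your computation $\sum_e(\mpe\tphi(\hf_e/\cpe)+\mme\tphi(-\hf_e/\cme))=W\|v\|_p$ is precisely the duality relation underlying \cref{eq:sion}), and then bound $\tEv(\hf)\le\tEv(f')$ using $\hf=\argmin\tEv$ from \cref{lemma:divmin} and the proof of \cref{lemma:energybound} verbatim with weights $w+\mu$. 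Your route is a little cleaner: it avoids the separate $\ell_p$-term calculation and the explicit use of the constraint $m^{6\eta}\lesssim mU^{-2}$ at this point, and it even yields the slightly tighter constant $5\cdot10^{-7}m^{2\eta}$ rather than $10^{-6}m^{2\eta}$. The paper's route, on the other hand, is more self-contained in that it does not need to invoke \cref{lemma:divmin} for the first bound.
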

\begin{proof}
We first show $\tval(\hf) \le 10^{-6}m^{2\eta}.$ Let $f'$ be the flow which routes $\frac{\d}{m}$ units of flow on each of the $m$ preconditioning edges. As in \cref{lemma:energybound} we have that $\tE(f') \le 5\cdot 10^{-7} m^{2\eta}.$ For a preconditioning edge $e$, using \cref{lemma:propphi} and \cref{lemma:precon} gives that
\begin{align*} &\left(\cpe\right)^2\left(\tphi\left(\frac{f_e'}{\cpe}\right) + \left(\frac{\cme}{\cpe}\right)\tphi\left(-\frac{f_e'}{\cme}\right)\right) \le \left(\cpe\right)^2\left(\left(\frac{f_e'}{\cpe}\right)^2 + \left(\frac{\cme}{\cpe}\right)\left(\frac{f_e'}{\cme}\right)^2\right) \le 2(f_e')^2 \\
&\le 2(\d/m)^2 \le 2m^{-2}\left(\frac{F_t}{10^5m^{1/2-\eta}}\right)^2 \le 10^{-8}m^{2\eta-1}U^2.
\end{align*}
as $F_t \le 3mU$. For the choice $W = m^{6\eta}$ we get that
\begin{align*} \tval(f') &\le \tE(f')+W\left\|\left(\cpe\right)^2\left(\tphi\left(\frac{f_e'}{\cpe}\right) + \left(\frac{\cme}{\cpe}\right)\tphi\left(-\frac{f_e'}{\cme}\right)\right)\right\|_p \\ &\le 5\cdot 10^{-7} m^{2\eta} + 10^{-8}m^{8\eta-1+o(1)}U^2 \le 10^{-6}m^{2\eta} \end{align*}
where we have used $\|x\|_p \le m^{o(1)}\|x\|_\infty$ for the choice $p = 2\left\lceil\sqrt{\log m}\right\rceil$, and the choice of $\eta = \log_m\left(m^{1/6-o(1)}U^{1/3}\right)$ to get $m^{8\eta-1+o(1)}U^2 \le m^{2\eta}$.

We now show $|\hf_e| \le \frac{1}{500}m^{-2\eta}$ for all $e$. Indeed, applying $\tphi(x) \ge x^2/4$ from \cref{lemma:propphi} yields
\[ \frac14W\hf_e^2 \le \tval(f) \le 10^{-6}m^{2\eta}. \] Using the choice $W = m^{6\eta}$ and rearranging gives us $|\hf_e| \le \frac{1}{500}m^{-2\eta}.$

Let $\mu$ be the weight increases given by line \ref{line:mu} of \cref{algo:augment}, and \cref{eq:stillcentral}. As $\hf = \min_{B^Tf=\d\chi} \widetilde{D^V_{w+\mu}}(f)$ and $\|w+\mu\|_1 \le 5m/2 + m/2 \le 3m$ by our invariant and \cref{lemma:divmin}, using \cref{lemma:bounddual} gives us
\[ |\hf_e|c_e^{-1} = \left(|\hf_e| \cdot |\hf_e|c_e^{-2}\right)^{1/2} \le \left(\frac{1}{500}m^{-2\eta} \cdot m^{2\eta}\right)^{1/2} \le \frac{1}{20}. \]
Using that the functions $\phi(x)$ and $\tphi(x)$ agree for $|x| \le \frac{1}{10}$, and $|\hf_e| \le \frac{1}{20}c_e$ for all $e$, \cref{obs:convex} gives us that $\hf$ is also a minimizer of $\min_{B^Tf=\d\chi} \val(f)$ as desired.
\end{proof}
We now show that applying weight change $\mu$ and adding $\hf$ to our current central path point $f_{t,w}^*$ stays on the central path, for path parameter $t+\d$. This follows from optimality conditions on the objective, which we designed to satisfy exactly the desired property.

As the weight change $\mu$ may be too large, we reduce the weight change $\mu$ to a weight change $\nu$ after advancing the path parameter, and bound $\|\nu\|_1.$ Intuitively, this weight reduction procedure can never hurt the algorithm. It happens to help because we carefully designed our objective to induce smoothed $\ell_2$-$\ell_p$ flow instances with unit weights on the $\ell_p$ part, the only instances which are known to admit almost linear runtimes \cite{KPSW19}, which we use in \cref{sec:efficient}.\footnote{Even with an oracle for smoothed $\ell_2$-$\ell_p$ flow that handles arbitrary weights, we do not know how to achieve maxflow runtimes faster than those achieved by this paper.}

\begin{lemma}
\label{lemma:finalweight}
Let parameters $\eta,W,\cpe,\cme,\d$, flow $\hf$, and weight changes $\mu,\nu$ be defined as in \cref{algo:augment}, and assume that $F_t \ge m^{1/2-\eta}.$ Then we have that $\|\nu\|_1 \le m^{4\eta+o(1)}U$ and $f_{t+\d,w+\nu}^* = f_{t,w}^* + \hf.$
\end{lemma}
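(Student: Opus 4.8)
The plan is to establish the two claims separately: first that adding $\hf$ to $f_{t,w}^*$ and changing weights by $\nu$ lands on the central path with parameter $t+\d$, and second the $\ell_1$ bound on $\nu$. For the first claim, I would start from \cref{lemma:divmin}, which already gives that $f_{t,w}^* = f_{t,w+\mu}^*$ and $\hf = \argmin_{B^Tf=\d\chi}\tEv(f)$, and from \cref{lemma:congbound}, which upgrades this to $\hf = \argmin_{B^Tf=\d\chi} D^V_{w+\mu}(f)$ (the un-smoothed objective) because $|\hf_e| \le c_e/20$ so $\phi$ and $\tphi$ agree on the relevant interval. By the definition of $D^V_{w+\mu}$ as a Bregman divergence and the optimality conditions of \cref{eq:divminflow}, $f_{t,w}^* + \hf = f_{t+\d, w+\mu}^*$. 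So it remains to check that replacing $\mu$ by $\nu$ does not change this central path point, i.e. $f_{t+\d, w+\mu}^* = f_{t+\d, w+\nu}^*$. By the central path characterization \cref{eq:centralpath} applied at the flow $f_{t,w}^* + \hf$ (whose residual capacities are $\cpe - \hf_e$ and $\cme + \hf_e$), this reduces to checking that the gradient contribution of $\mu$ and of $\nu$ at this flow agree edgewise, namely
\[ \frac{\mpe}{\cpe - \hf_e} - \frac{\mme}{\cme + \hf_e} = \frac{\npe}{\cpe - \hf_e} - \frac{\nme}{\cme + \hf_e} \quad \text{for all } e \in E. \]
This is exactly the invariant recorded in line~\ref{line:startnu} of \cref{algo:augment}, and it holds by direct inspection of the case split in lines~\ref{line:midnu}--\ref{line:endnu}: in each branch one of $\npe, \nme$ is zero and the other is set precisely so the difference matches. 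One should also note $\nu \in \R^{2E}_{\ge 0}$, which holds since the nonzero coordinate is assigned the quantity (or its negation) that made the case split trigger. Hence $f_{t+\d, w+\nu}^* = f_{t,w}^* + \hf$.

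For the $\ell_1$ bound on $\nu$, the idea is that the weight reduction only decreases weights coordinate-wise in an appropriate sense, so $\|\nu\|_1$ is controlled by $\|\mu\|_1$ up to factors coming from residual capacities. Concretely, in the first branch $\npe = (\cpe - \hf_e)\big(\frac{\mpe}{\cpe - \hf_e} - \frac{\mme}{\cme + \hf_e}\big) \le \mpe$, and in the second branch $\nme = (\cme + \hf_e)\big(\frac{\mme}{\cme + \hf_e} - \frac{\mpe}{\cpe - \hf_e}\big) \le \mme$; in both branches the other coordinate is $0$. Therefore $\npe + \nme \le \mpe + \mme$ for every $e$, so $\|\nu\|_1 \le \|\mu\|_1$. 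Combined with the bound $\|\mu\|_1 \le 2m^{o(1)} W U^2 = m^{6\eta + o(1)} U^2$ from \cref{lemma:divmin} and the choice $\eta = \log_m(m^{1/6 - o(1)} U^{1/3})$, which gives $m^{6\eta} = m^{1 - o(1)} U^2$, this already yields $\|\nu\|_1 \le m^{1 - o(1)} U^4$. To get the sharper claimed bound $\|\nu\|_1 \le m^{4\eta + o(1)} U$, I expect one needs to exploit more carefully the structure of $\hf$ and $\mu$: using the congestion bound $|\hf_e| \le \tfrac{1}{20} c_e$ from \cref{lemma:congbound}, one shows $\cpe - \hf_e = \Theta(\cpe)$ and $\cme + \hf_e = \Theta(\cme)$, so that (using $\cpe \le \cme$ and the definition $\mme = \tfrac{\cme}{\cpe}\mpe$) the difference $\frac{\mpe}{\cpe - \hf_e} - \frac{\mme}{\cme + \hf_e}$ is, up to constants, at most $\frac{\mpe}{\cpe} = W\cpe \cdot \frac{v_e^{p-1}}{\|v\|_p^{p-1}}$, whence $\npe = O(\cpe \cdot \mpe/\cpe) = O(\mpe)$ but more usefully $\npe = O(\cpe^2 W v_e^{p-1}/\|v\|_p^{p-1})$, and similarly $\nme = O(\cpe^2 W v_e^{p-1}/\|v\|_p^{p-1})$ rather than $O(\cme^2 \cdots)$. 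Summing and applying Hölder with the dual exponent $q$ gives $\|\nu\|_1 \le m^{o(1)} W \max_e \cpe^2 \le m^{o(1)} W U^2$ again — so to genuinely reach $m^{4\eta+o(1)}U$ one must additionally use that $\|v\|_p$ itself is lower bounded via \cref{lemma:congbound}, or bound $\max_e \cpe$ more tightly; I would chase the constant through using $v_e = \Theta(\hf_e^2)$, $\tval(\hf) \le 10^{-6} m^{2\eta}$, and the preconditioning lower bound $c_e \ge F_t/(21m)$ to trade a factor of $m^{2\eta} U$ for the two extra factors of $U$.

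The main obstacle is exactly this last point: verifying the precise exponent $m^{4\eta + o(1)} U$ rather than the coarser $m^{6\eta+o(1)}U^2$. The central-path membership half of the statement is essentially bookkeeping against the algorithm's definitions and \cref{eq:centralpath}. The delicate part is tracking how the weight-reduction from $\mu$ to $\nu$, together with the fact that $\nu$'s nonzero coordinate lives on the \emph{smaller}-residual-capacity side of each edge whenever $\cpe \le \cme$, buys back factors of $U$; this is where one must combine \cref{lemma:congbound}'s congestion bound $|\hf_e| \le \tfrac{1}{20}c_e$, the Hölder/duality estimate from \cref{lemma:divmin}, and the preconditioning bound of \cref{lemma:precon} rather than bounding residual capacities crudely by $U$.
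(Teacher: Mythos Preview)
Your centrality argument is essentially the paper's: use \cref{lemma:divmin} and \cref{lemma:congbound} to get $\hf = \argmin_{B^Tf=\d\chi} D^V_{w+\mu}(f)$ on the unsmoothed objective, deduce $f_{t,w}^*+\hf = f_{t+\d,w+\mu}^*$ from optimality conditions, and then check the invariant in line~\ref{line:startnu} to pass from $\mu$ to $\nu$. That part is fine.

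The gap is in the $\|\nu\|_1$ bound. Your coarse estimate $\npe+\nme \le \mpe+\mme$ is correct but only yields $\|\nu\|_1 \le \|\mu\|_1$, which is too weak, and the directions you sketch for improving it (lower bounding $\|v\|_p$, invoking the preconditioning lower bound on $c_e$) are not what is needed. The key point you are missing is a \emph{cancellation} coming from the centrality-preserving constraint $\mpe/\cpe = \mme/\cme$ built into line~\ref{line:mu}. Because of this relation, at $\hf_e = 0$ the quantity $\frac{\mpe}{\cpe-\hf_e} - \frac{\mme}{\cme+\hf_e}$ vanishes exactly; a short computation then shows
\[
\npe + \nme \;\le\; 3\,\frac{|\hf_e|}{c_e}\,(\mpe+\mme)
\]
in both branches, using only $|\hf_e|\le c_e/20$. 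Now plug in the definition of $\mu$: since $\mpe+\mme \le 2\mme = 2W\cpe\cme\cdot v_e^{p-1}/\|v\|_p^{p-1}$ and $c_e=\cpe$, one gets $\npe+\nme \le 6W|\hf_e|\cme\cdot v_e^{p-1}/\|v\|_p^{p-1}$. The second congestion bound in \cref{lemma:congbound}, namely $|\hf_e|\le \tfrac{1}{500}m^{-2\eta}$, together with $\cme = O(U)$ and the same H\"older step as in \cref{lemma:divmin}, then gives
\[
\|\nu\|_1 \;\le\; O(1)\cdot W\cdot m^{-2\eta}\cdot U\cdot m^{o(1)} \;=\; m^{4\eta+o(1)}U.
\]
So the entire improvement from $m^{6\eta}$ to $m^{4\eta}$ comes from the factor $|\hf_e|/c_e$ produced by the cancellation, saturated by the $\ell_\infty$ flow bound $|\hf_e|\lesssim m^{-2\eta}$; neither a lower bound on $\|v\|_p$ nor the preconditioning bound on $c_e$ enters.
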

\begin{proof}
We first show $f_{t+\d,w+\mu}^* = f_{t,w+\mu}^* + \hf = f_{t,w}^* + \hf.$ By \cref{lemma:congbound}, we have $\hf = \argmin_{B^Tf=\d\chi} \val(f)$. Let the vector $v$ be defined as in line \ref{line:vvv} of \cref{algo:augment}. There exist vectors $y, z \in \R^E$ such that
\begin{align*} [Bz]_e &= \left[\g\val(\hf)\right]_e \\
&= \left(\frac{\wpe}{\cpe} + \frac{v_e^{p-1}}{\|v\|_p^{p-1}} \cdot \frac{W(\cpe)^2}{\cpe}\right)\phi'\left(\frac{\hf_e}{\cpe}\right) - \left(\frac{\wme}{\cme} + \frac{v_e^{p-1}}{\|v\|_p^{p-1}} \cdot \frac{W\cpe\cme}{\cme}\right)\phi'\left(-\frac{\hf_e}{\cme}\right) \\
&= \left[\frac{\wpe+\mpe}{\cpe-\hf_e} - \frac{\wpe+\mpe}{\cpe}\right] - \left[\frac{\wme+\mme}{\cme+\hf_e} - \frac{\wme+\mme}{\cme}\right] \\
&= \left[\frac{\wpe+\mpe}{\upe-\left[f_{t,w+\mu}^*\right]_e-\hf_e} - \frac{\wme+\mme}{\ume+\left[f_{t,w+\mu}^*\right]_e+\hf_e}\right] - [By]_e.
\end{align*}
Here, the first line follows from local optimality of $\hf = \argmin_{B^Tf=\d\chi} \val(f)$, the third is explicit computation of $D'$, and the fourth follows from centrality of $f_{t,w+\mu}^*$. Therefore, the $(t+\d)\chi$-flow which is $f_{t,w+\mu}^*+\hf$ satisfies
\[ \left[\frac{\wpe+\mpe}{\upe-\left[f_{t,w+\mu}^*\right]_e-\hf_e} - \frac{\wme+\mme}{\ume+\left[f_{t,w+\mu}^*\right]_e+\hf_e}\right] = \left[B(y+z)\right]_e \] hence is central for weights $w+\mu$. So $f_{t+\d,w+\mu}^* = f_{t,w+\mu}^* + \hf = f_{t,w}^* + \hf.$

Now, note that $\nu$ as defined in lines \ref{line:startnu} to \ref{line:endnu} of \cref{algo:augment} satisfies
\[ \frac{\npe}{\cpe-\hf_e} - \frac{\nme}{\cme+\hf_e} = \frac{\mpe}{\cpe-\hf_e} - \frac{\mme}{\cme+\hf_e}
 \] and centrality conditions \cref{eq:centralpath} tell us that $f_{t+\d,w+\nu}^* = f_{t+\d,w+\mu}^*$. 
 
We now bound $\|\nu\|_1.$ Line \ref{line:endnu} of \cref{algo:augment} and $\mpe/\cpe = \mme/\cme$ gives us that
\[ 
\npe+\nme = -(\cme + \hf_e)\left(\frac{\mpe}{\cpe-\hf_e} - \frac{\mme}{\cme+\hf_e}\right) = -\mme\left(\left(\frac{\cme+\hf_e}{\cpe-\hf_e}\right)\frac{\cpe}{\cme}-1\right) 
\le 3c_e^{-1}|\hf_e|\mme, 
\] 
where we have used that $c_e^{-1}|\hf_e| \le 1/20$. A similar analysis of line \ref{line:midnu} gives that
\[ \npe+\nme = (\cpe - \hf_e)\left(\frac{\mpe}{\cpe-\hf_e} - \frac{\mme}{\cme+\hf_e}\right) = \mpe\left(1 - \left(\frac{\cpe-\hf_e}{\cme+\hf_e}\right) \frac{\cme}{\cpe}\right) \le 3c_e^{-1}|\hf_e|\mpe. \] In both cases, we have that $\npe+\nme \le 3c_e^{-1}|\hf_e|(\mpe+\mme).$
Using the choice $W=m^{6\eta}$, \cref{eq:stillcentral,lemma:congbound} yield
\begin{align*}
\|\nu\|_1 &\le \sum_{e \in E} 3c_e^{-1}|\hf_e|(\mme + \mpe) \le 6W\sum_{e\in E} |\hf_e|\cme \cdot \frac{v_e^{p-1}}{\|v\|_p^{p-1}} \\
&\le 12W \cdot \frac{1}{500}m^{-2\eta}U \left\|\frac{v_e^{p-1}}{\|v\|_p^{p-1}}\right\|_1 \le m^{4\eta+o(1)}U\left\|\frac{v_e^{p-1}}{\|v\|_p^{p-1}}\right\|_q = m^{4\eta+o(1)}U.
\end{align*}
\end{proof}
\subsection{Efficient Divergence Maximization}
\label{sec:efficient}
\cref{lemma:finalweight} shows that our algorithm just needs to compute $\argmin_{B^Tf=\d\chi} \tval(f)$ in line \ref{line:hf} of \cref{algo:augment}, as all other lines clearly take $O(m)$ time. Here, we show how to do this in time $m^{1+o(1)}$.
\begin{lemma}
\label{lemma:fasttval}
There is an algorithm that in $m^{1+o(1)}$ time computes a flow $f'$ with $B^Tf'=\d\chi$ and $\tval(f') \le \min_{B^Tf=\d\chi} \tval(f) + \err$.
\end{lemma}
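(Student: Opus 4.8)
The plan is to rescale $\tval$ by the constant $1/W$ and recognize the resulting objective as an instance of the flow problem solved in \cref{thm:mainopt}. Since each coordinate $v_e \defeq (\cpe)^2\big(\tphi(f_e/\cpe) + (\cme/\cpe)\tphi(-f_e/\cme)\big)$ inside the $\ell_p$ piece of \cref{eq:tval} is nonnegative (by \cref{lemma:propphi}), we can write
\[ \tfrac1W \tval(f) = \sum_{e\in E} q_e(f_e) + \Big(\sum_{e\in E} h_e(f_e)^p\Big)^{1/p}, \]
where $q_e(x) \defeq \tfrac1W\big(\wpe\tphi(x/\cpe)+\wme\tphi(-x/\cme)\big)$ and $h_e(x) \defeq (\cpe)^2\tphi(x/\cpe) + \cpe\cme\tphi(-x/\cme)$ (so $h_e(f_e) = v_e$). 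Since $W = m^{6\eta} \le 2^{\poly(\log m)}$, a flow $f'$ whose rescaled value $\tfrac1W\tval(f')$ is within an additive $\err$ (for a sufficiently small target error, absorbed into the $\poly$) of $\min_{B^Tf=\d\chi} \tfrac1W\tval(f)$ also satisfies $\tval(f') \le \min_{B^Tf=\d\chi}\tval(f) + \err$; so it suffices to run the algorithm of \cref{thm:mainopt} on this rescaled objective with demand $d = \d\chi$.

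Next I would verify the hypotheses of \cref{thm:mainopt}, which are immediate from \cref{lemma:propphi} (namely $\tphi(0)=\tphi'(0)=0$ and $1/2 \le \tphi'' \le 2$). We have $q_e(0)=q_e'(0)=0$ and $q_e''(x) = \tfrac1W\big(\tfrac{\wpe}{(\cpe)^2}\tphi''(x/\cpe) + \tfrac{\wme}{(\cme)^2}\tphi''(-x/\cme)\big) \in [a_e/2,\, 2a_e]$, where $a_e \defeq \tfrac1W\big(\tfrac{\wpe}{(\cpe)^2}+\tfrac{\wme}{(\cme)^2}\big) \ge 0$; as $\wpe,\wme \le 3m$, $W \ge 1$, and the residual capacities along the central path stay above $2^{-\poly(\log m)}$, this gives $0 \le a_e \le 2^{\poly(\log m)}$. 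Similarly $h_e(0)=h_e'(0)=0$ and $h_e''(x) = \tphi''(x/\cpe) + \tfrac{\cpe}{\cme}\tphi''(-x/\cme) \in [1/2,\,4] \subseteq [1/4,\,4]$, using $\cpe \le \cme$. Finally $p = 2\lceil\sqrt{\log m}\rceil$ is an even integer in $(\omega(1),\,(\log n)^{2/3-o(1)})$, and the demand $\d\chi$ has entries of magnitude $\d = F_t/(10^5 m^{1/2-\eta})$, which lies in $[10^{-5},\, 2^{\poly(\log m)}]$ by $m^{1/2-\eta} \le F_t \le 3mU$ and $U \le \sqrt m$.

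Applying \cref{thm:mainopt} then produces, in $m^{1+o(1)}$ time, a flow $f'$ with $B^Tf'=\d\chi$ whose rescaled objective value is within $\err$ of optimal, and by the rescaling above $\tval(f') \le \min_{B^Tf=\d\chi}\tval(f) + \err$, as desired. The main (and essentially only) obstacle is the bookkeeping around the boundedness conditions of \cref{thm:mainopt}: one must confirm that the residual capacities $\cpe, \cme$ -- equivalently the coefficients $a_e$ -- remain $2^{\pm\poly(\log m)}$-bounded, which follows from the invariants the IPM maintains, and one must check that scaling the additive guarantee by the polynomial factor $W$ preserves a bound of the form $\err$; both are routine, and the remainder is a direct pattern-match to \cref{thm:mainopt} through \cref{lemma:propphi}.
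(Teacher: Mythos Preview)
Your proof is correct and follows essentially the same approach as the paper: both verify that $\tval$ fits the template of \cref{thm:mainopt} by setting $q_e(x)=\wpe\tphi(x/\cpe)+\wme\tphi(-x/\cme)$ (up to your $1/W$ scaling) and $h_e(x)=(\cpe)^2\tphi(x/\cpe)+\cpe\cme\,\tphi(-x/\cme)$, and then invoke \cref{lemma:propphi} to check the second-derivative bounds. Your explicit division by $W$ is a minor but welcome clarification that the paper's proof leaves implicit, and the remaining bookkeeping (bounds on $a_e$, on $\d$, and on the residual capacities) is handled identically.
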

To prove \cref{lemma:fasttval}, we first show \cref{thm:mainopt} by carefully applying and extending the analysis of the following result of \cite{KPSW19} on smoothed $\ell_2$-$\ell_p$ flows.
\begin{theorem}[Theorem 1.1 in \cite{KPSW19}, arXiv version]
\label{thm:smoothflow}
Consider $p \in (\omega(1), (\log n)^{2/3-o(1)})$, $g \in \R^E$, $r \in \R^E_{\ge0}$, demand vector $d\in \R^V$, real number $s \ge 0$, and initial solution $f_0\in\R^E$ such that all parameters are bounded by $2^{\poly(\log m)}$ and $B^Tf_0=d$. For a flow $f$, define 
\[ \vallp(f) \defeq \sum_{e\in E} g_ef_e + \left(\sum_{e \in E} r_ef_e^2\right) + s\|f\|_p^p
\enspace \text{ and } \enspace
OPT \defeq \min_{B^Tf=d} \vallp(f). \]
There is an algorithm that in $m^{1+o(1)}$ time computes a flow $f$ such that $B^Tf=d$ and
\[ \vallp(f)-OPT \le \err(\vallp(f_0)-OPT) + \err. \]
\end{theorem}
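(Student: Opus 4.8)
The plan is to follow the strategy of \cite{KPSW19} and combine two ingredients: an outer \emph{iterative refinement} loop that reduces minimizing $\vallp$ to high accuracy to solving $m^{o(1)}$ residual problems of the same $\ell_2$-$\ell_p$ shape to merely \emph{constant} accuracy, and an inner solver that dispatches each such residual problem in $m^{1+o(1)}$ time. For the outer loop, the starting point is a two-sided bound on the increments of $f\mapsto\vallp(f)$: for any flow $f$ and circulation $\Delta$, writing $\hat r_e \defeq r_e + s\,|f_e|^{p-2}$,
\begin{align*}
\g\vallp(f)^T\Delta + \tfrac{1}{p^{O(1)}}\Big(\textstyle\sum_{e\in E}\hat r_e\Delta_e^2 + s\|\Delta\|_p^p\Big) &\le \vallp(f+\Delta)-\vallp(f) \\ &\le \g\vallp(f)^T\Delta + p^{O(1)}\Big(\textstyle\sum_{e\in E}\hat r_e\Delta_e^2 + s\|\Delta\|_p^p\Big),
\end{align*}
which is the $t^p$ instance of the expansion in \cref{eq:expansion} (the quadratic term captures the local curvature of $t\mapsto t^p$ at $f_e$, the $\ell_p^p$ term its behavior for large steps). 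Setting $\mathrm{res}_f(\Delta)\defeq \g\vallp(f)^T\Delta + \sum_{e}\hat r_e\Delta_e^2 + s\|\Delta\|_p^p$ — a problem of exactly the form in the theorem statement, taken over circulations — this bound yields the two facts that drive the refinement: (a) the current optimality gap is sandwiched, $|\min_{B^T\Delta=0}\mathrm{res}_f(\Delta)| \ge (\vallp(f)-OPT)/p^{O(1)}$; and (b) if $\widetilde\Delta$ is only an $O(1)$-approximate minimizer of $\mathrm{res}_f$, then updating $f\assign f + \theta\widetilde\Delta$ with a suitable step $\theta = p^{-\Theta(1)}$ decreases $\vallp(f)-OPT$ by a factor $(1-p^{-\Theta(1)})$. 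Iterating from $f_0$ for $N = p^{O(1)}\poly\log m = m^{o(1)}$ rounds drives the \emph{multiplicative} gap below $\err$; since all inputs are bounded by $2^{\poly\log m}$, the inexactness of the residual solves and of the underlying Laplacian solves contributes only the additive $+\err$ term after routine error accounting.

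It remains to $O(1)$-approximately minimize $\min_{B^T\Delta=\hat d}\ \gamma^T\Delta + \sum_{e}\hat r_e\Delta_e^2 + s\|\Delta\|_p^p$ in $m^{1+o(1)}$ time. I would do this by an $\ell_p$-to-$\ell_2$ reduction: after binary searching the optimum value and rescaling, it suffices to answer a feasibility question — route $\hat d$ with quadratic part $O(1)$ and $\ell_p$ part $O(1)$ — via multiplicative weights. Maintain edge weights $w_e$, and in each round compute a weighted electrical flow $\Delta\assign\argmin_{B^T\Delta=\hat d}\sum_e(\hat r_e+\lambda w_e)\Delta_e^2$ plus a penalty for the linear term, which is a single Laplacian solve; then multiplicatively increase $w_e$ on the edges whose $|\Delta_e|$ is large relative to the $\ell_p$ budget. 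Plain multiplicative weights would take $\Theta(m^{1/2})$ rounds, so I would apply \emph{width reduction}: once an edge's weight has grown past a threshold it is permanently boosted so that its subsequent $\ell_p$ contribution is $O(1)$. This is the flow analogue of the Chin--M\k{a}dry--Miller--Peng and \cite{AKPS19} width-reduced analysis and caps the number of rounds at about $m^{1/3+o(1)}$.

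As described so far the inner solver performs $m^{1/3+o(1)}$ Laplacian solves, giving $m^{4/3+o(1)}$ — closing this last gap is the crux, and is the step I expect to be the main obstacle. To reach $m^{1+o(1)}$ I would recursively precondition: build a preconditioner $H$ for the current weighted graph that retains only $O(m/k)$ off-tree structure, reduce the weighted $\ell_2$-$\ell_p$ residual problem on $G$ to a small number of iterations of a preconditioned iterative method whose preconditioner is the corresponding problem on $H$ (plus spanning-tree operations), and recurse on $H$, choosing $k=m^{o(1)}$ and $O(\poly\log\log m)$ levels so that the product of the per-level round counts remains $m^{o(1)}$ — the $\ell_2$-$\ell_p$ analogue of the recursive preconditioning chain for Laplacian solvers, arranged so the $\ell_p$ term does not spoil the per-level condition-number gain. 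This adaptive, recursive preconditioning is the heaviest part of the argument. It is also where the range of $p$ enters: the $p$-dependent overheads accumulated across refinement rounds, width reduction, and recursion levels multiply to something of order $\exp(\Theta(p^{3/2}))$, which is $m^{o(1)}$ precisely when $p = O((\log n)^{2/3-o(1)})$ (for then $p^{3/2} = o(\log n)$), which is why that is the stated regime.
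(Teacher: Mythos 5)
The paper does not actually prove this statement: it is imported as a black box, quoted as Theorem~1.1 of the arXiv version of \cite{KPSW19}, with the only added remark being that the $1/\poly(m)$ accuracy there can be sharpened to $\err$ as explained in \cite{LS19}, Appendix~D.3. So the question is whether your blind attempt amounts to a self-contained proof of that cited result, and it does not. Your outer loop is sound and matches the known route: the two-sided expansion of $\vallp(f+\Delta)-\vallp(f)$ into a gradient term plus a $p^{O(1)}$-approximate quadratic-plus-$\ell_p^p$ residual, the resulting $(1-p^{-\Theta(1)})$ geometric decrease from $O(1)$-approximate residual solves, and the $m^{o(1)}$ iteration count are exactly the \cite{AKPS19,KPSW19} iterative refinement, and the present paper re-derives this machinery in generalized form in \cref{lemma:iterh} and \cref{lemma:refineprogress}.

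The genuine gap is the inner solver, which is the actual content of the theorem. Your width-reduced multiplicative-weights scheme gives $m^{1/3+o(1)}$ Laplacian solves per residual problem, hence $m^{4/3+o(1)}$ per call, and the step you propose to close the gap --- an ``$\ell_2$-$\ell_p$ analogue of the recursive preconditioning chain'' with $m^{o(1)}$ work blow-up per level --- is precisely the main technical contribution of \cite{KPSW19} (adaptive preconditioning with ultra-sparsification for mixed $\ell_2$-$\ell_p$ flow objectives), and you supply no construction or analysis of it: no specification of what the sparsifier $H$ must preserve for the $\ell_p$ term, no argument that mapping residual flows between $G$ and $H$ loses only an $m^{o(1)}$ factor in objective value, and no composition of the per-level losses across the recursion. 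The claim that the accumulated overhead is $\exp(\Theta(p^{3/2}))$, making $p\le(\log n)^{2/3-o(1)}$ the right regime, is read off from the known runtime of \cite{KPSW19} rather than derived from your sketch. As you yourself flag, this is the crux; without it your argument establishes only an $m^{4/3+o(1)}$-time oracle, which would not support the way \cref{thm:smoothflow} is invoked (inside \cref{lemma:fasttval} and \cref{thm:mainopt}, where each of the $\widetilde{O}(m^{1/2-\eta})$ progress steps needs an $m^{1+o(1)}$-time solve). If the intent is to use the result, citing \cite{KPSW19} (plus the error-boosting argument of \cite{LS19}, Appendix~D.3, which your ``routine error accounting'' remark glosses over) is the correct and sufficient move; as a proof, the attempt is incomplete at its central step.
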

While Theorem 1.1 in \cite{KPSW19} is stated with $\frac{1}{\poly(m)}$ errors, it can be made to $\err$ as explained in \cite{LS19} Appendix D.3.
While we defer the full proof of \cref{thm:mainopt} to \cref{sec:opt}, we give a brief proof sketch here. At a high level, we first use Lemma B.3 of \cite{LS19} to reduce optimizing the objective $\val(f)$ in \cref{thm:mainopt} to solving $\O(1)$ problems of the following kind, through careful binary search on $W$. For
\[ \val_{p,W}(f) \defeq \sum_{e\in E} q_e(f_e) + W\sum_{e\in E} h_e(f_e)^p \enspace \text{ and } \enspace OPT_{p,W} \defeq \min_{B^Tf=d} \val_{p,W}(f) \] find a flow $f'$ with $B^Tf'=d$ and $\val_{p,W}(f') \le OPT + \err.$ This is done formally in \cref{lemma:reduce1}.

We can then apply the iterative refinement framework on the objective $\val_{p,W}(f)$ to reduce it to solving $\O(2^{O(p)})$ smoothed quadratic and $\ell_p$ norm flow problems, which may be solved using \cref{thm:smoothflow}. The main difference from the analysis of \cite{KPSW19} is that we show that any convex function $h$ with stable second derivatives admits an expansion for $h(x+\D)^p - h(x)^p$, while \cite{KPSW19} only considers the function $h(x) = x^2$. This is done formally in \cref{lemma:reduce2}.

\begin{proof}[Proof of \cref{lemma:fasttval}]
It suffices to show that $\tval(f)$ satisfies the constraints of \cref{thm:mainopt} for 
\begin{align*}
q_e(x) &\defeq \wpe\tphi\left(\frac{x}{\cpe}\right)+\wme\tphi\left(-\frac{x}{\cme}\right) \enspace \text{ and } \enspace a_e \defeq \left(\frac{\wpe}{(\cpe)^2}+\frac{\wme}{(\cme)^2}\right),
\\ h_e(x) &\defeq \left(\cpe\right)^2\left(\tphi\left(\frac{x}{\cpe}\right) + \left(\frac{\cme}{\cpe}\right)\tphi\left(-\frac{x}{\cme}\right)\right).
\end{align*}
To analyze $q_e(x)$, we compute that \[ q_e''(x) = \frac{\wpe}{(\cpe)^2}\tphi''\left(\frac{x}{\cpe}\right) + \frac{\wme}{(\cme)^2}\tphi''\left(\frac{x}{\cme}\right) \le 2\left(\frac{\wpe}{(\cpe)^2}+\frac{\wme}{(\cme)^2}\right) = 2a_e \] by \cref{lemma:propphi}. A lower bound $q_e''(x) \ge \frac12\left(\frac{\wpe}{(\cpe)^2}+\frac{\wme}{(\cme)^2}\right) = a_e/2$ follows equivalently. $a_e \le 2^{\poly(\log m)}$ follows from the fact that resistances and residual capacities are polynomially bounded on the central path -- see \cite{LS19} Lemma D.1.

To analyze $h_e(x)$, we can compute using \cref{eq:deftphi} that $h_e(0) = h_e'(0) = 0$ and \[ h_e''(x) = \tphi''\left(\frac{x}{\cpe}\right) + \frac{\cpe}{\cme}\tphi''\left(-\frac{x}{\cme}\right) \le 4 \] by \cref{lemma:propphi}. We get $h_e''(x) \ge 1/2$ similarly.
\end{proof}

\subsection{Algorithm}
Here we state \cref{algo:maxflow} to show \cref{thm:main}. It repeatedly takes steps computed with \cref{algo:augment}, and when the remaining flow is $m^{1/2-\eta}$, we stop the algorithm, round to an integral flow and run augmenting paths.

\begin{algorithm}[h]
\caption{$\Maxflow(G)$. Takes a preconditioned undirected graph $G$ with maximum capacity $U$. Returns the maximum $ab$ flow in $G$.}
$\eta \assign \log_m(m^{1/6-o(1)}U^{-1/3}).$ \\
$f \assign 0, t \assign 0, w \assign 1$. \\
\While{$F_t \ge m^{1/2-\eta}$}{ \label{line:mainwhile}
	$(\d,\hf,\nu) \assign \Augment(G,w,t^*-t,f)$. \label{line:2} \\
	$f \assign f+\hf, w \assign w+\nu$, and $t \assign t+\d$. \\
}
Round to an integral flow and use augmenting paths until done.
\label{algo:maxflow}
\end{algorithm}

\begin{proof}[Proof of \cref{thm:main}]
We show that $\Maxflow(G)$ computes a maximum flow on $G$ in time \\ $m^{3/2-\eta+o(1)}$ $= m^{4/3+o(1)}U^{1/3}$ by the choice of $\eta$. Correctness follows from \cref{lemma:finalweight,lemma:fasttval}. 

It suffices to control the weights. Our choice of $\d$ guarantees that we route $\Omega(m^{-1/2+\eta})$ fraction of the remaining flow per iteration, hence line \ref{line:mainwhile} executes $\O(m^{1/2-\eta})$ times. $\|\nu\|_1 \le m^{4\eta+o(1)}U$ always by \cref{lemma:finalweight}, hence at the end of the algorithm by the choice of $\eta$ we have
\[ 
\|w\|_1 \le 2m + \O\left(m^{1/2-\eta} \cdot m^{4\eta+o(1)}U \right) \le 5m/2 
~.
\] 

To analyze the runtime, first note that by \cref{lemma:fasttval} line \ref{line:2} takes $m^{1+o(1)}$ time, so the total runtime of these throughout the algorithm is $m^{3/2-\eta+o(1)}$. Rounding to an integral flow takes $\O(m)$ time \cite{LRS13,Madry13}. Augmenting paths takes $O(m^{3/2-\eta})$ time also, as desired.
\end{proof}
\section{Conclusion}
\label{sec:conclusion}

We conclude by first stating the difficulties in going beyond an $m^{4/3}$ runtime for maxflow, and then discussing potential directions for future research on the topic.

\paragraph{A Possible Barrier at $m^{4/3}$:}
Here we briefly discuss why we believe that $m^{4/3}$ is a natural runtime barrier for IPM based algorithms for maxflow on sparse unweighted graphs. All currently known weighted IPM advances satisfy the following two properties. First, weights only increase and do not be become super linear. Second, the methods step from one central path point to the next one in $\Theta(m)$ time and the congestion of this step is multiplicatively bounded.
Our algorithm does both these pieces optimally --- we precisely compute weight changes under a budget to ensure that the congestion to the next central path point is reduced significantly. In this sense, to break the $m^{4/3}$ barrier, one would either have to find a new way to backtrack on weight changes on the central path so that they are not additive throughout the algorithm, or show better amortized bounds on weight change than shown here. Alternatively, one would need a novel algorithm to step to faraway points on the central path, outside a ball where the congestions are bounded.

\paragraph{Potential future directions:}
We believe that there are several exciting open problems remaining, including combining this work with previous results of \cite{CMSV16,AMV20} to achieve faster algorithms for mincost flow, as well as potentially achieving an $mn^{1/3+o(1)}$ time algorithm for maxflow through the approach of \cite{LeeS19arXiv} and robust central paths \cite{CohenLS19,BLSZ20}.

Additionally, it would be interesting to understand whether IPMs can achieve $m^{3/2-\Omega(1)}\log^{O(1)}U$ runtimes for maxflow. Also, is there an algorithm for $m^{4/3+o(1)}$ time exact directed maxflow in unit capacity graphs that only uses Laplacian system solvers? Finally, on the other hand, can we use stronger primitives such as smoothed $\ell_2$-$\ell_p$ flows to design even more efficient algorithms for unit capacity maxflow, potentially beyond the use of IPMs?

\section{Acknowledgements}
\label{sec:acknowledgements}
We thank Arun Jambulapati, Tarun Kathuria, Michael B. Cohen, Yin Tat Lee, Jonathan Kelner, Aleksander M\k{a}dry, and Richard Peng for helpful discussions. The idea of quadratic extensions was originally suggested by Yin Tat Lee, and we are extremely grateful to him for this fruitful insight.

{\small
\bibliographystyle{alpha}
\bibliography{refs}}

\appendix

\section{Missing proofs}
\label{sec:proofs}
\subsection{Proof of \cref{obs:convex}}
\label{proofs:obsconvex}
\begin{proof}
Assume for contradiction that $g(x) < g(x^*)$ for some $x \in \chi$. For all $\eps > 0$ we have \[ g((1-\eps)x^* + \eps x) \le (1-\eps)g(x^*) + \eps g(x) < g(x^*). \] Because $f, g$ agree on a neighborhood of $x^*$, we have for sufficiently small $\eps$ that
\[ f((1-\eps)x^* + \eps x) = g((1-\eps)x^* + \eps x) < g(x^*) = f(x^*), \] a contradiction to $f(x^*) = \min_{x \in \chi} f(x)$, as $(1-\eps)x^* + \eps x \in \chi$ by convexity of $\chi$.
\end{proof}

\subsection{Proof of \cref{lemma:valconvex}}
\label{proofs:valconvex}
\begin{lemma}
\label{lemma:convexlemma}
Let $p \ge 1$ be a real number. Let $h_i:\R \to \R_{\ge0}$ be convex functions. Then the function $h:\R^n\to\R$ defined by $h(x) = \left(\sum_{i=1}^n h_i(x_i)^p\right)^{1/p}$ is a convex function.
\end{lemma}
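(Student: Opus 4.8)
The statement to prove is: for $p \ge 1$ and convex functions $h_i : \R \to \R_{\ge 0}$, the map $h(x) = \left(\sum_{i=1}^n h_i(x_i)^p\right)^{1/p}$ is convex on $\R^n$. The plan is to write $h = \|\cdot\|_p \circ H$ where $H : \R^n \to \R_{\ge 0}^n$ is the map $H(x) = (h_1(x_1), \dots, h_n(x_n))$, and then invoke the standard fact that the composition of a norm (convex, and monotone on the nonnegative orthant) with a coordinatewise convex nonnegative map is convex. Concretely, fix $x, y \in \R^n$ and $\lambda \in [0,1]$; I would first use convexity of each $h_i$ to get $h_i(\lambda x_i + (1-\lambda) y_i) \le \lambda h_i(x_i) + (1-\lambda) h_i(y_i)$ coordinatewise, hence $H(\lambda x + (1-\lambda) y) \le \lambda H(x) + (1-\lambda) H(y)$ entrywise (all entries nonnegative). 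Then, since the $\ell_p$ norm is monotone with respect to the entrywise order on $\R_{\ge0}^n$, applying $\|\cdot\|_p$ to both sides and then the triangle inequality for $\|\cdot\|_p$ gives
\[
h(\lambda x + (1-\lambda)y) = \|H(\lambda x + (1-\lambda)y)\|_p \le \|\lambda H(x) + (1-\lambda) H(y)\|_p \le \lambda \|H(x)\|_p + (1-\lambda)\|H(y)\|_p,
\]
which is exactly $\lambda h(x) + (1-\lambda) h(y)$, establishing convexity.

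The two ingredients to spell out are: (1) monotonicity of $\|\cdot\|_p$ on the nonnegative orthant, i.e. $0 \le a \le b$ entrywise implies $\|a\|_p \le \|b\|_p$, which is immediate from $\|a\|_p^p = \sum a_i^p \le \sum b_i^p = \|b\|_p^p$ since $t \mapsto t^p$ is nondecreasing on $[0,\infty)$; and (2) the triangle inequality (Minkowski's inequality) for $\|\cdot\|_p$ with $p \ge 1$, which is standard. I would state (1) explicitly since it is the only place nonnegativity of the $h_i$ is used — without it the first inequality above could fail.

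I do not expect a genuine obstacle here; the lemma is essentially a packaging of Minkowski's inequality together with coordinatewise convexity, and the only subtlety is being careful that the monotonicity step of $\|\cdot\|_p$ requires the arguments to be nonnegative vectors, which is guaranteed by the hypothesis $h_i : \R \to \R_{\ge 0}$. Once Lemma \ref{lemma:convexlemma} is in hand, convexity of $\val(f)$ and $\tval(f)$ in Lemma \ref{lemma:valconvex} follows by applying it with $h_i$ taken to be the (convex, nonnegative, by \cref{lemma:propphi}) edge functions $(\cpe)^2(\phi(f_e/\cpe) + (\cme/\cpe)\phi(-f_e/\cme))$ for the $\ell_p$ piece, and adding the manifestly convex separable term $\E_w(f)$ (a nonnegative sum of convex functions of single coordinates).
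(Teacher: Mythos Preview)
Your proposal is correct and is essentially the same argument as the paper's: coordinatewise convexity of the $h_i$ followed by Minkowski's inequality. The only difference is that you make explicit the monotonicity of $\|\cdot\|_p$ on $\R_{\ge 0}^n$ (and why nonnegativity of the $h_i$ is needed for it), whereas the paper folds that step into the first inequality without comment.
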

\begin{proof}
For any $x,y\in\R^n$ and $0 \le t \le 1$ we have by Minkowski's inequality and convexity that
\begin{align*}h(tx+(1-t)y) &= \|h_i(tx_i+(1-t)y_i)\|_p \le \|t \cdot h_i(x_i) + (1-t) \cdot h_i(y_i)\|_p \\ &\le t\|h_i(x_i)\|_p+(1-t)\|h_i(y_i)\|_p = t \cdot h(x) + (1-t) \cdot h(y). \end{align*}
\end{proof}
\begin{proof}[Proof of \cref{lemma:valconvex}]
$\phi$ and $\tphi$ are convex, hence $\E_w(f)$ and $\tE(f)$ are convex. Also, the functions \[ \left(\spe\right)^2\left(\phi\left(\frac{f_e}{\spe}\right) + \left(\frac{\sme}{\spe}\right)\phi\left(-\frac{f_e}{\sme}\right)\right) \enspace \text{ and } \enspace \left(\spe\right)^2\left(\tphi\left(\frac{f_e}{\spe}\right) + \left(\frac{\sme}{\spe}\right)\tphi\left(-\frac{f_e}{\sme}\right)\right) \] are convex, hence $\val(f)$ and $\tval(f)$ are convex by \cref{lemma:convexlemma}.
\end{proof}

\section{Iterative Refinement}
\label{sec:refine}
Here, $\phi$ and $\tphi$ are defined as in \cref{eq:defphi}.
\begin{lemma}
\label{lemma:inth}
Let $h:\R\to\R$ be a function with $h(0) = h'(0) = 0$, and let $c_1, c_2 > 0$ be constants such that $c_1 \le h''(x) \le c_2$ for all $x$. Then for $x \ge 0$ we have that $c_1x \le h'(x) \le c_2x$ and $-c_2x \le h'(-x) \le -c_1x.$ Also, $\frac{1}{2}c_1x^2 \le h(x) \le \frac{1}{2}c_2x^2$ for all $x$.
\end{lemma}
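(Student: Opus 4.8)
The plan is to deduce all three inequalities directly from the mean value theorem and Taylor's theorem with the Lagrange form of the remainder, using only that $h''$ exists everywhere and is bounded between $c_1$ and $c_2$; in particular I would avoid assuming $h''$ is continuous or Riemann integrable, since the hypothesis only asserts pointwise existence of $h''$.

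First I would establish the first-derivative bounds. Fix $x > 0$. Applying the mean value theorem to $h'$ on $[0,x]$ and using $h'(0) = 0$ gives some $\xi \in (0,x)$ with $h'(x) = h'(x) - h'(0) = h''(\xi)\,x$; since $c_1 \le h''(\xi) \le c_2$ and $x \ge 0$, this yields $c_1 x \le h'(x) \le c_2 x$, and the case $x = 0$ is trivial. The same argument on $[-x,0]$ produces $\xi \in (-x,0)$ with $h'(-x) = h'(-x) - h'(0) = h''(\xi)\,(-x)$, and since $-x \le 0$ we get $-c_2 x \le h'(-x) \le -c_1 x$, as claimed.

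Next I would handle the bound on $h$ itself. By Taylor's theorem with Lagrange remainder applied at $0$, for every $x \in \R$ there is $\xi$ between $0$ and $x$ with
\[ h(x) = h(0) + h'(0)\,x + \tfrac12 h''(\xi)\,x^2 = \tfrac12 h''(\xi)\,x^2, \]
using $h(0) = h'(0) = 0$. Since $x^2 \ge 0$ and $c_1 \le h''(\xi) \le c_2$, this immediately gives $\tfrac12 c_1 x^2 \le h(x) \le \tfrac12 c_2 x^2$ for all $x$, completing the argument. (If one prefers, the same conclusion follows by integrating the stated derivative bounds, i.e.\ $h(x) = \int_0^x h'(t)\,dt$ together with $c_1 t \le h'(t) \le c_2 t$ for $t \ge 0$ and the analogous inequality for $t \le 0$; but this route tacitly uses integrability of $h''$, so I would present the Taylor/MVT version.)

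There is essentially no technical obstacle here: the lemma is a standard consequence of the second-order mean value estimates, and the only point deserving a moment's care is to invoke the mean value theorem and Taylor's theorem rather than integrating $c_1 \le h'' \le c_2$, so that the proof goes through under the stated hypothesis that $h''(x)$ merely exists for every $x \in \R$.
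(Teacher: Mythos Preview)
Your proof is correct. The paper's own argument proceeds by direct integration, writing $h'(x)=\int_0^x h''(y)\,dy$ and then $h(x)=\int_0^x h'(y)\,dy$ and bounding the integrands by $c_1$ and $c_2$; your route via the mean value theorem and the Lagrange form of the Taylor remainder reaches the same conclusions with the same effort. The only real difference is the one you flag: the integration version tacitly uses that $h''$ is (Riemann) integrable, which is not literally implied by the hypothesis that $h''$ merely exists pointwise, whereas your MVT/Taylor argument needs only existence of $h''$. In the paper's setting this distinction is immaterial, since every function to which the lemma is applied (e.g.\ $\tphi$) is piecewise $C^\infty$, but your version is the cleaner statement-level proof.
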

\begin{proof}
For $x \ge 0$ we have that $h'(x) = \int_0^x h''(y) dy$ and $c_1x \le \int_0^x h''(y) dy \le c_2x$. The proof for $x \le 0$ is equivalent.

For $x \ge 0$ we have that $h(x) = \int_0^x h'(y) dy$ and \[ \frac12c_1x^2 = \int_0^x c_1y dy \le \int_0^x h'(y) dy \le \int_0^x c_2y dy = \frac12c_2x^2. \] The proof for $x \le 0$ is equivalent.
\end{proof}
\begin{proof}[Proof of \cref{lemma:propphi}]
It suffices to show $1/2 \le \tphi''(x) \le 2$ and apply \cref{lemma:inth}.
For $|x| \le 1/10$ we have that $\tphi''(x) = (1-x)^{-2}$. Now, we can check that for $|x| \le 1/10$ that $1/2 \le (1-x)^{-2} \le 2$. For $x \ge 1/10$ we have $\tphi''(x) = \tphi''(1/10)$ and for $x \le -1/10$ we have $\tphi''(x) = \tphi''(-1/10)$ as desired.
\end{proof}

\begin{lemma}[Lemmas B.2 and B.3 from \cite{KPSW19}, arXiv version]
\label{lemma:kpswiter}
Let $p > 0$ be an even integer. For all real numbers $x, \D$ we have that
\[ 2^{-p}\left(x^{p-2}\D^2 + \D^{p}\right) \le (x+\D)^p - \left(x^p + p \cdot x^{p-1}\D\right) \le p2^{p-1}\left(x^{p-2}\D^2 + \D^{p}\right). \]
\end{lemma}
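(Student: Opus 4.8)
The plan is to reduce the two-sided estimate to a one-variable inequality by homogeneity, then dispatch the upper bound by a crude binomial estimate and the lower bound by a short case analysis. First note that $x^{p-2}\D^2 = |x|^{p-2}\D^2 \ge 0$ and $\D^p = |\D|^p \ge 0$ since $p$ is even, and that every term of the claimed inequality is homogeneous of degree $p$ in $(x,\D)$. If $x=0$ the claim is immediate (both sides are constant multiples of $\D^p$, and $2^{-p}\le 1\le p2^{p-1}$ for $p\ge 2$); otherwise divide through by $x^p>0$ and set $u=\D/x$, so that $x^{p-2}\D^2 = x^p u^2$ and $\D^p = x^p u^p$. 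It then suffices to show, for all $u\in\R$ and
\[ \psi(u) \defeq (1+u)^p - 1 - pu = \sum_{k=2}^p \binom pk u^k, \]
that $2^{-p}(u^2+u^p) \le \psi(u) \le p2^{p-1}(u^2+u^p)$.

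For the upper bound, $\psi(u) \le \sum_{k=2}^p \binom pk |u|^k$, and for each $2\le k\le p$ we have $|u|^k \le \max\{u^2,u^p\} \le u^2+u^p$ (split on $|u|\le 1$ versus $|u|\ge 1$); summing and using $\sum_{k=2}^p\binom pk \le 2^p = 2\cdot 2^{p-1} \le p2^{p-1}$ finishes it. (One can even avoid the normalization here: $|x|^{p-k}|\D|^k = (|x|^{p-2}\D^2)^{\frac{p-k}{p-2}}(|\D|^p)^{\frac{k-2}{p-2}} \le |x|^{p-2}\D^2 + |\D|^p$ by weighted AM--GM, and one sums over $k$.)

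For the lower bound I split on the sign and magnitude of $u$. If $u\ge 0$, every term $\binom pk u^k$ is nonnegative, so $\psi(u)\ge\binom p2 u^2\ge u^2$ and $\psi(u)\ge u^p$, hence $\psi(u)\ge\tfrac12(u^2+u^p)\ge 2^{-p}(u^2+u^p)$. If $-1\le u\le 0$, write $u=-v$ with $v\in[0,1]$; factoring $1-(1-v)^p = v\sum_{j=0}^{p-1}(1-v)^j$ gives
\[ \psi(-v) = (1-v)^p - 1 + pv = v\sum_{j=1}^{p-1}\bigl(1-(1-v)^j\bigr) \;\ge\; (p-1)v^2, \]
since $1-(1-v)^j \ge 1-(1-v) = v$ for each $j\ge 1$; as $v^p\le v^2$ here this already exceeds $2^{-p}(v^2+v^p)$. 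Finally, if $u\le -1$, write $u=-v$ with $v=1+y$, $y\ge 0$, so $\psi(-v) = y^p + py + (p-1)$; since $v^2\le v^p$, it suffices to prove $y^p+py+(p-1)\ge 2^{1-p}(1+y)^p$. Expanding $(1+y)^p = \sum_{k=0}^p\binom pk y^k$ and bounding $y^k\le\tfrac{k-1}{p-1}y^p+\tfrac{p-k}{p-1}y$ for $1\le k\le p$ (weighted AM--GM), this reduces, after multiplying by $2^{1-p}$, to checking that the coefficients of $y^p$, $y$, and $1$ are at most $1$, $p$, and $p-1$ respectively — each a one-line inequality valid for even $p\ge 2$ (namely $2^{1-p}\le 1$, $\tfrac{1-2^{1-p}}{p-1}\le 1$, and $2^{1-p}\le p-1$).

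The main obstacle is the lower bound, and within it the case $u\le -1$ — equivalently, $x$ and $\D$ of opposite sign with $|\D|\gtrsim|x|$. There the remainder $(x+\D)^p-x^p-px^{p-1}\D$ is comparable to $|\D|^p$ only up to a $p$-dependent constant, so one cannot simply discard terms; the substitution $v=1+y$ together with the term-by-term AM--GM comparison is what makes the constant $2^{-p}$ (in fact $2^{1-p}$) fall out cleanly. The case $-1\le u\le 0$ needs similar care, since a naive second-order Taylor remainder degenerates near $v=1$ when $p>2$, which is why the telescoping identity is used instead. Everything else is routine bookkeeping, and the stated constants are quite lossy, leaving ample slack.
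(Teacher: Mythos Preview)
Your proof is correct. The paper itself does not prove this lemma; it simply imports it as ``Lemmas B.2 and B.3 from \cite{KPSW19}, arXiv version'' and uses it as a black box inside the proof of \cref{lemma:iterh}. So there is nothing in this paper to compare your argument against.

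On its own merits, your argument is clean and self-contained. The homogeneity reduction to $\psi(u)=(1+u)^p-1-pu$ is the natural move, the upper bound via $|u|^k\le u^2+u^p$ and $\sum_{k}\binom{p}{k}\le 2^p\le p2^{p-1}$ is immediate, and your three-case lower bound checks out: for $u\ge 0$ you keep the $k=2$ and $k=p$ terms; for $u\in[-1,0]$ the telescoping identity $\psi(-v)=v\sum_{j=1}^{p-1}(1-(1-v)^j)\ge (p-1)v^2$ is exactly the right tool (a naive Taylor remainder with $(1-v)^{p-2}$ would indeed collapse at $v=1$); and for $u\le -1$ the substitution $v=1+y$ plus the AM--GM bound $y^k\le \tfrac{k-1}{p-1}y^p+\tfrac{p-k}{p-1}y$ reduces everything to three elementary coefficient checks, each of which holds for every even $p\ge 2$. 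Nothing is missing.
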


\begin{lemma}
\label{lemma:iterh}
Let $h:\R\to\R$ be a function and let $c_2 \ge c_1 > 0$ be constants such that $c_1 \le h''(x) \le c_2$ for all $x$. Then for all $\D$ we have that
\begin{equation} \frac12c_1\D^2 \le h(x+\D)-\left(h(x)+h'(x)\D\right) \le \frac12c_2\D^2. \label{eq:2iter} \end{equation}
If additionally $h(0) = h'(0) = 0$ and $c_1 \le h''(x) \le c_2$ for all $x$ then for all even integers $p > 0$ and $\D$ we have that
\begin{equation} (8c_2)^{-2p}c_1^{3p} \left(x^{2p-2}\D^2 + \D^{2p}\right) \le h(x+\D)^p - \left(h(x)^p + p \cdot h(x)^{p-1}h'(x)\D\right) \le (16c_2)^p \left(x^{2p-2}\D^2 + \D^{2p}\right) \label{eq:piter} \end{equation}
\end{lemma}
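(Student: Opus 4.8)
The first bound, \cref{eq:2iter}, is immediate from \cref{lemma:inth} applied to the auxiliary function $H(t) \defeq h(x+t) - h(x) - h'(x)t$: we have $H(0) = H'(0) = 0$ and $H''(t) = h''(x+t) \in [c_1, c_2]$, so \cref{lemma:inth} gives $\tfrac12 c_1 t^2 \le H(t) \le \tfrac12 c_2 t^2$, and taking $t = \D$ yields \cref{eq:2iter}.

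For \cref{eq:piter} the plan rests on the algebraic identity
\[ h(x+\D)^p - h(x)^p - p\,h(x)^{p-1}h'(x)\D \;=\; \underbrace{\big((a+b)^p - a^p - p\,a^{p-1}b\big)}_{=:K} \;+\; p\,h(x)^{p-1}r, \]
where $a \defeq h(x)$, $b \defeq h(x+\D) - h(x)$, and $r \defeq H(\D) = h(x+\D) - h(x) - h'(x)\D$, which lies in $[\tfrac12 c_1\D^2, \tfrac12 c_2\D^2]$ by the first part. Since $h$ is convex with $h(0) = h'(0) = 0$ it is nonnegative, so $a, a+b \ge 0$; moreover \cref{lemma:inth} gives $\tfrac12 c_1 x^2 \le h(x) \le \tfrac12 c_2 x^2$ and $|h'(x)| \le c_2|x|$, hence $|b| \le |h'(x)||\D| + |r| \le c_2(|x||\D| + \D^2)$. \cref{lemma:kpswiter} with exponent $p$ controls $K$ by $2^{-p}(a^{p-2}b^2 + b^p) \le K \le p2^{p-1}(a^{p-2}b^2 + b^p)$, all terms being nonnegative as $p$ is even. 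Thus the whole problem reduces to comparing $a^{p-2}b^2$, $b^p$, and $h(x)^{p-1}\D^2$ with $x^{2p-2}\D^2 + \D^{2p}$.

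For the upper bound I would substitute $a \le \tfrac12 c_2 x^2$ and $|b| \le c_2(|x||\D|+\D^2)$ into the estimates above and expand; the only cross terms that are not already of the form $x^{2p-2}\D^2$ or $\D^{2p}$ are $x^{2p-4}\D^4$ (from $a^{p-2}b^2$) and $x^p\D^p$ (from $b^p$), and each is bounded by $x^{2p-2}\D^2 + \D^{2p}$ by Young's inequality $\alpha^\theta\beta^{1-\theta}\le\alpha+\beta$ with the appropriate $\theta \in (0,1)$. Collecting all powers of $c_2$ and absorbing the $2^{O(p)}$ and $p^{O(1)}$ constants (using e.g.\ $2p \le 2^{2p}$) produces the claimed $(16c_2)^p$ coefficient.

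The lower bound is the delicate part and the main obstacle. The term $p\,h(x)^{p-1}r \ge p(\tfrac12 c_1 x^2)^{p-1}\cdot\tfrac12 c_1\D^2 \ge 2^{-p}c_1^p\,x^{2p-2}\D^2$ handles the $x^{2p-2}\D^2$ contribution unconditionally (with $K \ge 0$). To recover a $\D^{2p}$ term I would split on the size of $|h'(x)|$: if $|h'(x)| \le \tfrac14 c_1|\D|$ then $b = h'(x)\D + r \ge \tfrac12 c_1\D^2 - \tfrac14 c_1\D^2 > 0$, so $K \ge 2^{-p}b^p \ge (c_1/8)^p\D^{2p}$; otherwise $\tfrac14 c_1|\D| < |h'(x)| \le c_2|x|$ forces $|\D| < \tfrac{4c_2}{c_1}|x|$, so $\D^{2p} \le (\tfrac{4c_2}{c_1})^{2p-2}x^{2p-2}\D^2$ and we reabsorb it into the $x^{2p-2}\D^2$ bound just proved, incurring exactly the $c_1^{3p}/c_2^{2p}$-type loss reflected in $(8c_2)^{-2p}c_1^{3p}$. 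Taking half of the left side to cover each of $x^{2p-2}\D^2$ and $\D^{2p}$ completes the argument. The obstruction this case analysis circumvents is that $b$ can vanish for nonzero $\D$ (take $h(t)=t^2$, $\D=-2x$), so the $\D^{2p}$ term genuinely cannot come from $b^p$ alone and must be extracted from the quadratic remainder $r$.
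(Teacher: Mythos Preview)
Your proposal is correct and essentially mirrors the paper's proof. Both arguments use the same decomposition $h(x+\D)^p - h(x)^p - p\,h(x)^{p-1}h'(x)\D = K + p\,h(x)^{p-1}r$ with $K$ controlled by \cref{lemma:kpswiter}, the same upper-bound expansion using $|b|\le c_2(|x\D|+\D^2)$ together with the Young-type estimates $x^{2p-4}\D^4,\ x^p\D^p \le x^{2p-2}\D^2+\D^{2p}$, and the same two-case lower bound; the only cosmetic difference is that you split on $|h'(x)|\lessgtr \tfrac14 c_1|\D|$ whereas the paper splits on $|x|\lessgtr \tfrac{c_1|\D|}{4c_2}$, and since $c_1|x|\le |h'(x)|\le c_2|x|$ these two splits are interchangeable.
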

\begin{proof}
The upper bound of \cref{eq:2iter} follows from \[ h(x+\D)-\left(h(x)+h'(x)\D\right) = \int_0^\D (\D-y)h''(x+y) dy \le c_2 \int_0^\D (\D-y) dy = \frac12c_2\D^2 \] by \cref{lemma:inth}. The lower bound follows equivalently.

We show the upper bound of \cref{eq:piter}. Use \cref{lemma:kpswiter} with $x \to h(x), \D \to h(x+\D)-h(x)$ to get
\begin{align} &h(x+\D)^p - \left(h(x)^p + p \cdot h(x)^{p-1}\left(h(x+\D)-h(x)\right)\right) \nonumber \\ &\le p2^{p-1}\left(h(x)^{p-2}\left(h(x+\D)-h(x)\right)^2 + \left(h(x+\D)-h(x)\right)^p\right). \label{eq:bigterm} \end{align}
Using the upper bound of \cref{eq:2iter} and $h(x) \le \frac12c_2x^2$ from \cref{lemma:inth} gives us
\begin{align}
&h(x+\D)^p - \left(h(x)^p + p \cdot h(x)^{p-1}h'(x)\D\right) \label{eq:use1} \\ &\le h(x+\D)^p - \left(h(x)^p + p \cdot h(x)^{p-1}\left(h(x+\D)-h(x)\right)\right) + \frac12p \cdot h(x)^{p-1} c_2\D^2 \label{eq:use2} \\ &\le h(x+\D)^p - \left(h(x)^p + p \cdot h(x)^{p-1}\left(h(x+\D)-h(x)\right)\right) + p \cdot c_2^p x^{2p-2}\D^2. \label{eq:use3}
\end{align}
We now bound \cref{eq:bigterm}. \cref{eq:2iter} and \cref{lemma:inth} gives us \[ \left|h(x+\D)-h(x)\right| \le \left|h'(x)\D\right|+c_2\D^2 \le c_2|x\D|+c_2\D^2. \]
Using this and \cref{lemma:inth} gives us that \cref{eq:bigterm} is at most
\begin{align*}
&p2^{p-1}\left(h(x)^{p-2}\left(c_2|x\D|+c_2\D^2\right)^2 + \left(c_2|x\D|+c_2\D^2\right)^p\right) \\
&\le p2^{p-1}\left(x^{2p-4}c_2^{p-2}\left(2c_2^2x^2\D^2 + 2c_2^2\D^4\right) + 2^{p-1}c_2^px^p\D^p + 2^{p-1}c_2^p\D^{2p}\right) \\
&\le p2^{p-1}c_2^p \cdot \left(4+2^{p-1}+2^{p-1}\right) \cdot \left(x^{2p-2}\D^2 + \D^{2p}\right).
\end{align*}
Collecting terms and combining this with \cref{eq:use1,eq:use2,eq:use3} proves the upper bound of \cref{eq:piter}.

Now we show the lower bound of \cref{eq:piter}. As above, we use \cref{lemma:kpswiter} to get
\begin{align} &h(x+\D)^p - \left(h(x)^p + p \cdot h(x)^{p-1}\left(h(x+\D)-h(x)\right)\right) \nonumber \\ &\ge 2^{-p}\left(h(x)^{p-2}\left(h(x+\D)-h(x)\right)^2 + \left(h(x+\D)-h(x)\right)^p\right). \label{eq:bigterm2} \end{align}
Using the lower bound of \cref{eq:2iter} and $h(x) \ge \frac12c_1x^2$ from \cref{lemma:inth} gives us
\begin{align}
&h(x+\D)^p - \left(h(x)^p + p \cdot h(x)^{p-1}h'(x)\D\right) \label{eq:use4} \\ &\ge h(x+\D)^p - \left(h(x)^p + p \cdot h(x)^{p-1}\left(h(x+\D)-h(x)\right)\right) + \frac12p \cdot h(x)^{p-1}c_1\D^2 \label{eq:use5} \\ &\ge h(x+\D)^p - \left(h(x)^p + p \cdot h(x)^{p-1}\left(h(x+\D)-h(x)\right)\right) + p2^{-p}c_1^p \cdot x^{2p-2}\D^2. \label{eq:use6}
\end{align}
If $|x| \ge \frac{c_1|\D|}{4c_2}$ then \[ p2^{-p}c_1^p \cdot x^{2p-2}\D^2 \ge \frac12 \cdot p2^{-p}c_1^p\left(x^{2p-2}\D^2 + \left(\frac{c_1\D}{4c_2}\right)^{2p-2}\D^2\right) \ge (8c_2)^{-2p}c_1^{3p} \cdot \left(x^{2p-2}\D^2 + \D^{2p}\right) \] as desired.
If $|x| \le \frac{c_1|\D|}{4c_2}$ then applying the lower bound of \cref{eq:2iter} and \cref{lemma:inth} gives us
\[ h(x+\D)-h(x) \ge h'(x)\D + c_1\D^2/2 \ge -\left|c_2x\D\right|+c_1\D^2/2 \ge c_1\D^2/4. \]
Therefore, we may lower bound \cref{eq:bigterm2} by
\[ 2^{-p}\left(c_1\D^2/4\right)^p = 2^{-3p}c_1^p\D^{2p} \ge 2^{-3p-1}c_1^p\left(x^{2p-2}\D^2 + \D^{2p}\right) \]
for $|x| \le \frac{c_1|\D|}{4c_2} \le \D$. Combining this with \cref{eq:use4,eq:use5,eq:use6} gives the desired bound.
\end{proof}

\begin{lemma}
\label{lemma:refineprogress}
Let $p$ be an even positive integer. Let $h_e(x):\R\to\R$ be convex functions for $e \in E$, and for $x \in \R^E$ let $h(x) = \sum_{e\in E}h_e(x_e).$ 
Let $OPT = \min_{B^Tf=d} h(f).$ Let $f$ be a flow satisfying $B^Tf=d$. Let $C_1, C_2 > 0$ be constants such that for all edges $e$, there are real numbers $r_e, s_e \ge 0$ and $g_e$, depending on $f_e$, such that for all $\D_e \in \R$
\begin{equation} C_1\left(r_e\D_e^2+s_e\D_e^p\right) \le h_e(f_e+\D_e)-\left(h_e(f_e) + g_e\D_e\right) \le C_2\left(r_e\D_e^2+s_e\D_e^p\right). \label{eq:bounded} \end{equation}
Let \[ \tD = \argmin_{B^T\D=0} g^T\D + C_1\left(\sum_{e\in E}r_e\D_e^2+\sum_{e\in E}s_e\D_e^p\right). \]
Then \[ \left(h\left(f+\frac{C_1}{C_2}\tD\right) - OPT\right) \le \left(1 - \frac{C_1}{C_2}\right)\left(h(f)-OPT\right). \]
\end{lemma}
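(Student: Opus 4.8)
The plan is to run the standard potential argument behind iterative refinement: compare the computed step $\tilde\D$ both against the exact optimal displacement and against the true objective evaluated at the damped step $\tfrac{C_1}{C_2}\tilde\D$. Write $f^\star \defeq \argmin_{B^Tf=d} h(f)$, so $h(f^\star) = OPT$, and set $\D^\star \defeq f^\star - f$; since $B^Tf^\star = B^Tf = d$, the displacement $\D^\star$ is a circulation, hence feasible for the program defining $\tilde\D$. Introduce the model objective $M(\D) \defeq g^T\D + C_1\big(\sum_{e} r_e\D_e^2 + \sum_e s_e\D_e^p\big)$, which $\tilde\D$ minimizes over circulations; because $p$ is even and $r_e, s_e \ge 0$, every summand of $M$ is convex, $M(0) = 0$, and the quadratic/$\ell_p$ part of $M$ is nonnegative everywhere.

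First I would upper bound $M(\tilde\D)$ by the optimality gap. Applying the lower inequality of \eqref{eq:bounded} with $\D_e = \D_e^\star$ and summing over $e$ gives $h(f + \D^\star) - h(f) - g^T\D^\star \ge C_1\big(\sum_e r_e(\D_e^\star)^2 + \sum_e s_e (\D_e^\star)^p\big)$, i.e.\ $M(\D^\star) \le h(f^\star) - h(f) = OPT - h(f)$. Since $\tilde\D$ minimizes $M$ over circulations and $\D^\star$ is one such circulation, $M(\tilde\D) \le M(\D^\star) \le OPT - h(f)$.

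Next I would upper bound $h\big(f + \lambda \tilde\D\big)$ for $\lambda \defeq C_1/C_2 \in (0,1]$. Applying the upper inequality of \eqref{eq:bounded} edgewise with $\D_e = \lambda\tilde\D_e$ and summing, $h(f + \lambda\tilde\D) \le h(f) + \lambda g^T\tilde\D + C_2\lambda^2 \sum_e r_e\tilde\D_e^2 + C_2\lambda^p \sum_e s_e\tilde\D_e^p$. The crux is the scaling identities $C_2\lambda^2 = C_1\lambda \le C_1$ and $C_2\lambda^p = C_1\lambda^{p-1} \le C_1$, valid because $\lambda \le 1$ and $p \ge 2$; they let us replace both coefficients by $C_1\lambda$, so that $h(f + \lambda\tilde\D) \le h(f) + \lambda\big[g^T\tilde\D + C_1(\sum_e r_e\tilde\D_e^2 + \sum_e s_e\tilde\D_e^p)\big] = h(f) + \lambda M(\tilde\D)$. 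Combining with the previous step, $h(f + \lambda\tilde\D) \le h(f) + \lambda\big(OPT - h(f)\big)$, and subtracting $OPT$ from both sides yields exactly $h(f + \tfrac{C_1}{C_2}\tilde\D) - OPT \le (1 - \tfrac{C_1}{C_2})(h(f) - OPT)$.

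I do not expect a genuine obstacle here — the argument is two applications of \eqref{eq:bounded} together with the elementary inequalities $\lambda^2 \le 1$ and $\lambda^{p-1}\le 1$. The two points that require care are: (i) evenness of $p$, which makes $\D_e^p \ge 0$ so that $M$ is convex and the lower bound of \eqref{eq:bounded} is consistent with convexity of each $h_e$; and (ii) damping by exactly $C_1/C_2$ rather than by $1$, which is precisely what collapses $C_2\lambda^2$ and $C_2\lambda^p$ down to $\le C_1$ and thereby lets the upper-bound step reuse the same model $M$ that $\tilde\D$ minimizes. One could additionally remark that attainment of the $\argmin$ is inessential: if the infimum defining $\tilde\D$ were not attained, the identical bound follows with a near-minimizer at the cost of a vanishing additive slack.
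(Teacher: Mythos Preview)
Your proof is correct and follows essentially the same route as the paper: apply the lower bound of \eqref{eq:bounded} at $\D^\star=f^\star-f$ together with optimality of $\tilde\D$ to get $M(\tilde\D)\le OPT-h(f)$, then apply the upper bound of \eqref{eq:bounded} at $\lambda\tilde\D$ with $\lambda=C_1/C_2$ and use $C_2\lambda^2=C_1\lambda$, $C_2\lambda^p\le C_1\lambda$ to conclude $h(f+\lambda\tilde\D)-h(f)\le\lambda M(\tilde\D)$. The only cosmetic slip is that you write ``$\le C_1$'' for the scaled coefficients when the bound you actually use (and correctly state in the next clause) is $\le C_1\lambda$.
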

\begin{proof}
Define $f^* = \argmin_{B^Tf=d} h(f).$ Define $\D = f^*-f$. By the first inequality of \cref{eq:bounded}, we get
\begin{align*} &g^T\tD + C_1\left(\sum_{e\in E}r_e\tD_e^2+\sum_{e\in E}s_e\tD_e^p\right) \le g^T\D + C_1\left(\sum_{e\in E}r_e\D_e^2+\sum_{e\in E}s_e\D_e^p\right) \\ &\le h(f+\D)-h(f) = OPT-h(f). \end{align*}
This and the right side inequality of \cref{eq:bounded} give us
\begin{align*} &h\left(f+\frac{C_1}{C_2}\tD\right)-h(f) \le \frac{C_1}{C_2}g^T\tD + C_2\left(\left(\frac{C_1}{C_2}\right)^2\sum_{e\in E} r_e\tD_e^2+ \left(\frac{C_1}{C_2}\right)^p\sum_{e\in E} s_e\tD_e^p\right) \\ &\le \frac{C_1}{C_2}\left(g^T\tD + C_1\left(\sum_{e\in E}r_e\tD_e^2+\sum_{e\in E}s_e\tD_e^p\right)\right) \le \frac{C_1}{C_2}\left(OPT-h(f)\right).
\end{align*}
Rearranging this gives the desired inequality.
\end{proof}
\section{Additional Preliminaries}
\label{sec:optprelim}
In this section, we state some preliminaries for convex optimization. These will be used in \cref{sec:opt}. We assume all functions in this section to be convex. We also work in the $\ell_2$ norm exclusively. Proofs for the results stated can be found in \cite{Nes98}.

\paragraph{Matrices and norms.} We say that a $m\times m$ matrix $M$ is positive semidefinite if $x^TMx \ge 0$ for all $x \in \R^m.$ We say that $M$ is positive definite if $x^TMx > 0$ for all nonzero $x \in \R^m$. For $m \times m$ matrices $A, B$ we write $A \se B$ if $A-B$ is positive semidefinite, and $A \succ B$ is $A-B$ is positive definite. For $m \times m$ positive semidefinite matrix $M$ and vector $x \in \R^m$ we define $\|x\|_M = \sqrt{x^TMx}.$ For $m \times m$ positive semidefinite matrices $M_1, M_2$ and $C > 0$ we say that $M_1 \approx_C M_2$ if $\frac{1}{C} x^TM_1x \le x^TM_2x \le Cx^TM_1x$ for all $x \in \R^m.$

\paragraph{Lipschitz functions.} Here we define what it means for a function $f$ to be Lipschitz and provide a lemma showing its equivalence to a bound on the norm of the gradient.

\begin{definition}[Lipschitz Function]
Let $f: \R^n \to \R$ be a function, and let $\X \subseteq \R^n$ be an open convex set. We say that $f$ is $L_1$-Lipschitz on $\X$ (in the $\ell_2$ norm) if for all $x, y \in \X$ we have that $|f(x) - f(y)| \le L_1\|x-y\|_2.$
\end{definition}

\begin{lemma}[Gradient Characterization of Lipschitz Function]
\label{lemma:gradlip}
Let $f: \R^n \to \R$ be a differentiable function, and let $\X \subseteq \R^n$ be an open convex set. Then $f$ is $L_1$-Lipschitz on $\X$ if and only if for all $x \in \X$ we have that $\|\g f(x)\|_2 \le L_1.$
\end{lemma}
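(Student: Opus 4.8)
The plan is to prove the two implications separately, in each case reducing to the one-dimensional restriction of $f$ to a line segment in $\X$.

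For the ``if'' direction, I would assume $\|\g f(x)\|_2 \le L_1$ for all $x \in \X$ and fix arbitrary $x, y \in \X$. Convexity of $\X$ guarantees that the whole segment $\{x + t(y-x) : t \in [0,1]\}$ lies in $\X$, so $g(t) \defeq f(x + t(y-x))$ is well-defined and differentiable on $[0,1]$, with $g'(t) = \g f(x + t(y-x))^T(y-x)$ by the chain rule. Applying the mean value theorem to $g$ produces a point $c \in (0,1)$ with $f(y) - f(x) = g(1) - g(0) = g'(c) = \g f(x + c(y-x))^T(y-x)$, and then Cauchy--Schwarz together with the gradient bound gives $|f(y) - f(x)| \le \|\g f(x + c(y-x))\|_2\,\|y-x\|_2 \le L_1 \|y-x\|_2$, which is exactly $L_1$-Lipschitzness. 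I would note that using the mean value theorem rather than the fundamental theorem of calculus avoids any integrability subtleties arising from $f$ being assumed merely differentiable rather than $C^1$.

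For the ``only if'' direction, I would assume $f$ is $L_1$-Lipschitz on $\X$ and fix $x \in \X$. Since $\X$ is open, there is $\tau > 0$ so that $x + tv \in \X$ for every unit vector $v \in \R^n$ and every $t \in (0,\tau)$. The Lipschitz bound then reads $|f(x+tv) - f(x)| \le L_1 \|tv\|_2 = L_1 t$; dividing by $t$ and letting $t \to 0^+$, the difference quotient converges to the directional derivative $\g f(x)^T v$ by differentiability of $f$, so $|\g f(x)^T v| \le L_1$ for all unit vectors $v$. If $\g f(x) = 0$ the bound is trivial, and otherwise taking $v = \g f(x)/\|\g f(x)\|_2$ yields $\|\g f(x)\|_2 = \g f(x)^T v \le L_1$.

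I do not expect any real obstacle: this is a standard characterization and both directions are short. The only places needing a moment of care are the two passages between finite differences and the gradient — handled by the mean value theorem applied to the segment restriction in one direction, and by the definition of the directional derivative together with the openness of $\X$ in the other.
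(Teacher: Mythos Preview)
Your argument is correct and entirely standard. Note, however, that the paper does not actually supply a proof of this lemma: it is stated in the optimization-preliminaries appendix with the blanket remark that ``Proofs for the results stated can be found in \cite{Nes98}.'' So there is no paper proof to compare against; your write-up would serve perfectly well as a self-contained replacement for that citation.
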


\paragraph{Smoothness and strong convexity.} We define what it means for a function $f$ to be convex, smooth, and strongly convex. We say that a function $f$ is convex on $\X$ if for all $x, y \in \X$ and $0 \le t \le 1$ that $f(tx+(1-t)y) \le tf(x)+(1-t)f(y).$ We say that $f$ is $L_2$-smooth on $\X$ if $\|\g f(x)-\g f(y)\|_2 \le L_2\|x-y\|_2$ for all $x, y \in \X$. We say that $f$ is $\mu$-strongly convex on $\X$ if for all $x, y \in \X$ and $0 \le t \le 1$ that
\[ f(tx+(1-t)y) \le tf(x)+(1-t)f(y)-t(1-t) \cdot \frac{\mu}{2}\|x-y\|_2^2. \]

\begin{lemma}
Let $f:\R^n \to \R$ be a differentiable function, and let $\X \subseteq \R^n$ be an open convex set. Then $f$ is $\mu$-strongly convex on $\X$ if and only if for all $x, y \in \X$ we have that \[ f(y) \ge f(x) + \g f(x)^T(y-x) + \frac{\mu}{2}\|y-x\|_2^2. \] Also, $f$ is $L_2$-smooth on $\X$ if and only if for all $x, y \in \X$ we have that \[ f(y) \le f(x) + \g f(x)^T(y-x) + \frac{L_2}{2}\|y-x\|_2^2. \]
\end{lemma}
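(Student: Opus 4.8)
The plan is to deduce both equivalences from the first-order characterization of convexity of a suitably shifted function, together with one elementary polarization identity. Recall first the characterization itself (which is what the paper attributes to \cite{Nes98}): a differentiable $g$ on an open convex set $\X$ is convex iff $g(y) \ge g(x) + \nabla g(x)^T(y-x)$ for all $x,y \in \X$. The ``only if'' direction is the supporting-hyperplane inequality, obtained by letting $t \to 0^+$ in the difference quotient $\tfrac{1}{t}\big(g(x+t(y-x)) - g(x)\big)$, which is nondecreasing in $t$ by convexity. The ``if'' direction follows by putting $z = tx + (1-t)y$, applying the gradient inequality at base point $z$ to the targets $x$ and $y$, and taking the $t$-weighted sum of the two inequalities; the linear terms cancel since $tx + (1-t)y - z = 0$, leaving exactly Jensen's inequality. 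I would state this as a known fact and cite \cite{Nes98}, or include the two-line argument above.

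For the strong-convexity equivalence I would set $g(x) \defeq f(x) - \tfrac{\mu}{2}\|x\|_2^2$ and invoke the identity $t\|x\|_2^2 + (1-t)\|y\|_2^2 - \|tx+(1-t)y\|_2^2 = t(1-t)\|x-y\|_2^2$. Substituting this into the definition of convexity of $g$ shows that ``$g$ is convex on $\X$'' is \emph{literally} the statement ``$f$ is $\mu$-strongly convex on $\X$'', so both notions are equivalent. Then apply the first-order characterization to $g$: since $\nabla g(x) = \nabla f(x) - \mu x$, the inequality $g(y) \ge g(x) + \nabla g(x)^T(y-x)$ expands, using $\tfrac{\mu}{2}\big(\|y\|_2^2 - \|x\|_2^2 - 2x^T(y-x)\big) = \tfrac{\mu}{2}\|y-x\|_2^2$, to exactly $f(y) \ge f(x) + \nabla f(x)^T(y-x) + \tfrac{\mu}{2}\|y-x\|_2^2$. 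This gives both directions at once.

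For the smoothness equivalence, the ``only if'' direction (Lipschitz gradient $\Rightarrow$ quadratic upper bound) I would prove by the fundamental theorem of calculus along the segment $[x,y] \subseteq \X$ (which lies in $\X$ by convexity): $f(y) - f(x) - \nabla f(x)^T(y-x) = \int_0^1 \big(\nabla f(x + t(y-x)) - \nabla f(x)\big)^T(y-x)\,dt$, which by Cauchy--Schwarz and $L_2$-Lipschitzness of $\nabla f$ is at most $\int_0^1 L_2 t \|y-x\|_2^2\,dt = \tfrac{L_2}{2}\|y-x\|_2^2$. For the converse I would use convexity of $f$ (assumed throughout this section of the paper): fix $x$, set $\phi_x(w) \defeq f(w) - \nabla f(x)^T w$, which is convex, inherits the same quadratic upper bound, and is minimized at $w = x$ because $\nabla \phi_x(x) = 0$. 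Minimizing the upper bound $\phi_x(w) \le \phi_x(z) + \nabla\phi_x(z)^T(w-z) + \tfrac{L_2}{2}\|w-z\|_2^2$ over $w$ (at $w = z - \tfrac{1}{L_2}\nabla\phi_x(z)$) and using $\phi_x(x) = \min \phi_x$ yields $\tfrac{1}{2L_2}\|\nabla f(z) - \nabla f(x)\|_2^2 \le f(z) - f(x) - \nabla f(x)^T(z-x)$. Adding this to the same inequality with $x$ and $z$ interchanged gives the co-coercivity bound $\tfrac{1}{L_2}\|\nabla f(z) - \nabla f(x)\|_2^2 \le \big(\nabla f(z) - \nabla f(x)\big)^T(z-x)$, and one further application of Cauchy--Schwarz gives $\|\nabla f(z) - \nabla f(x)\|_2 \le L_2 \|z-x\|_2$.

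The main obstacle is the converse of the smoothness equivalence: unlike the other three implications it genuinely relies on convexity of $f$, and the minimization step needs the point $z - \tfrac{1}{L_2}\nabla\phi_x(z)$ to be admissible. This is immediate when $\X = \R^n$, and on a proper open convex domain it requires a short extra remark (e.g. the inequalities involved only use line segments that can be shrunk to remain in $\X$, or one first proves co-coercivity on a neighborhood of each point and bootstraps); I would note this subtlety and otherwise defer to \cite{Nes98}. Everything else --- the polarization identity, the first-order convexity characterization, and the FTC computation --- is routine.
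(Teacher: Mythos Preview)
Your proposal is correct. The paper itself does not prove this lemma: the preliminaries section explicitly states that ``Proofs for the results stated can be found in \cite{Nes98}'' and simply defers to Nesterov's textbook. Your argument is essentially the standard one from that reference --- the shift $g(x) = f(x) - \tfrac{\mu}{2}\|x\|_2^2$ for strong convexity, the FTC/Cauchy--Schwarz bound for the forward direction of smoothness, and the $\phi_x$ co-coercivity trick (using the paper's standing convexity assumption) for the converse --- so there is nothing to compare against beyond noting that your write-up is more self-contained than the paper's citation. Your flagged subtlety about the minimizing point $z - \tfrac{1}{L_2}\nabla\phi_x(z)$ possibly leaving $\X$ is real but harmless here, since the paper only ever applies these facts with $\X = \R^n$ (after reducing to an unconstrained problem in \cref{sec:opt}).
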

We can equivalently view smoothness and strong convexity as spectral bounds on the Hessian of $f$.
\begin{lemma}
\label{lemma:equiv}
Let $f:\R^n \to \R$ be a twice differentiable function, and let $\X \subseteq \R^n$ be an open convex set. Then $f$ is $\mu$-strongly convex on a convex set $\X$ if and only if $\g^2 f(x) \se \mu I$ for all $x \in \X$. $f$ is $L_2$-smooth on $\X$ if and only if $\g^2 f(x) \pe L_2 I$ for all $x \in \X$.
\end{lemma}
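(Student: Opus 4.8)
The plan is to deduce both equivalences in \cref{lemma:equiv} from the first-order characterizations already recorded in the preceding lemma, namely that (i) $f$ is $\mu$-strongly convex on $\X$ iff $f(y) \ge f(x) + \g f(x)^T(y-x) + \frac{\mu}{2}\|y-x\|_2^2$ for all $x,y \in \X$, and (ii) $f$ is $L_2$-smooth on $\X$ iff $f(y) \le f(x) + \g f(x)^T(y-x) + \frac{L_2}{2}\|y-x\|_2^2$ for all $x,y \in \X$. Since the strong-convexity and smoothness statements differ only by reversing the relevant inequalities, I would prove the strong-convexity equivalence in full and then remark that the smoothness one follows verbatim with (ii) replacing (i) and all inequalities flipped.

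For the ``only if'' direction I would fix $x \in \X$ and an arbitrary direction $v \in \R^n$. Because $\X$ is open, $x + tv \in \X$ for all sufficiently small $t > 0$, so I may apply (i) with $y = x + tv$. Expanding $f(x+tv) = f(x) + t\,\g f(x)^T v + \tfrac{t^2}{2} v^T \g^2 f(x) v + o(t^2)$ by a second-order Taylor expansion (valid as $f$ is twice differentiable at $x$), cancelling the common linear terms, dividing by $t^2$, and letting $t \to 0^+$ yields $v^T \g^2 f(x) v \ge \mu \|v\|_2^2$. As $x$ and $v$ were arbitrary, this is exactly $\g^2 f(x) \se \mu I$ on $\X$.

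For the ``if'' direction, suppose $\g^2 f(z) \se \mu I$ for every $z \in \X$. Given $x, y \in \X$, convexity of $\X$ guarantees that the segment $z_s \defeq x + s(y-x)$ lies in $\X$ for all $s \in [0,1]$, so I may integrate the univariate restriction $g(s) \defeq f(z_s)$ twice (Taylor's theorem with remainder) to get
\[ f(y) = f(x) + \g f(x)^T(y-x) + \int_0^1 (1-s)\,(y-x)^T \g^2 f(z_s)(y-x)\,ds \ge f(x) + \g f(x)^T(y-x) + \frac{\mu}{2}\|y-x\|_2^2, \]
using $\g^2 f(z_s) \se \mu I$ and $\int_0^1 (1-s)\,ds = \tfrac12$. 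By (i) this shows $f$ is $\mu$-strongly convex on $\X$, completing the equivalence.

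I do not expect a real obstacle: this is the standard second-order characterization of strong convexity and smoothness, and full proofs are in \cite{Nes98}. The only points needing a little care are using openness of $\X$ to keep $x + tv$ inside $\X$ in the limiting argument of the ``only if'' direction, and using convexity of $\X$ to keep the connecting segment inside $\X$ when invoking Taylor's theorem in the ``if'' direction --- both of which are hypotheses of the lemma.
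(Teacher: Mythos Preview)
Your proposal is correct and is exactly the standard textbook argument. The paper does not actually supply its own proof of \cref{lemma:equiv}: the preliminaries section explicitly states that proofs of these optimization facts can be found in \cite{Nes98}, so there is nothing further to compare against.
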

Smoothness allows us to relate function error and the norm of the gradient.
\begin{lemma}
\label{lemma:smoothgrad}
Let $\X \subseteq \R^n$ be an open convex set, and let $f:\R^n \to \R$ be $L_2$-smooth on $\X.$ Define $x^* = \argmin_{x \in \R^n} f(x)$, and assume that $x^*$ exists and $x^* \in \X.$ Then for all $x \in X$ we have that
\[ \|\g f(x)\|_2^2 \le 2L_2(f(x)-f(x^*)). \]
\end{lemma}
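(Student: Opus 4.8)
The plan is to run the textbook ``one gradient step'' argument, leveraging the $L_2$-smoothness upper bound together with the fact that $x^*$ is a \emph{global} minimizer of $f$ over $\R^n$. First I would fix $x \in \X$ and look at the point obtained by a single gradient-descent step with step size $1/L_2$, namely $\bar{x} \defeq x - \tfrac{1}{L_2}\g f(x)$.

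Next I would apply the characterization of smoothness stated just above (for $x,y \in \X$, $\,f(y) \le f(x) + \g f(x)^T(y-x) + \tfrac{L_2}{2}\|y-x\|_2^2$) with $y = \bar x$. Substituting $\bar x - x = -\tfrac{1}{L_2}\g f(x)$ and simplifying the quadratic in $\g f(x)$ gives $f(\bar x) \le f(x) - \tfrac{1}{2L_2}\|\g f(x)\|_2^2$. Since $x^* = \argmin_{z\in\R^n} f(z)$, we have $f(x^*) \le f(\bar x)$, hence $f(x^*) \le f(x) - \tfrac{1}{2L_2}\|\g f(x)\|_2^2$, and rearranging yields $\|\g f(x)\|_2^2 \le 2L_2(f(x) - f(x^*))$, as desired.

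The only point requiring care — and the main obstacle — is that the smoothness inequality is only assumed on $\X$, so I must ensure $\bar x \in \X$ (or replace it by a suitable nearby point of $\X$). Since $\X$ is open and $x \in \X$, the ray segment $\{x - t\g f(x) : t \in [0,r)\}$ lies in $\X$ for some $r > 0$; along it the Lipschitz bound on $\g f$ gives $\tfrac{d}{dt} f(x - t\g f(x)) = -\g f(x - t\g f(x))^T \g f(x) \le -\|\g f(x)\|_2^2 + L_2 t \|\g f(x)\|_2^2$, and integrating yields $f(x - t\g f(x)) \le f(x) - (t - \tfrac{L_2}{2}t^2)\|\g f(x)\|_2^2$ for $t \in [0,r)$. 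In the regime relevant to our applications the step $t = 1/L_2$ stays in $\X$ (in particular $x^* \in \X$ and the relevant iterates are near it), so plugging in $t = 1/L_2$ and combining with $f(x^*) \le f(x - \tfrac1{L_2}\g f(x))$ recovers exactly the chain of inequalities above; this is the step I would flag as needing the openness and containment hypotheses on $\X$.
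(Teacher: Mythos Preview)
The paper does not give its own proof of this lemma; it is stated in the optimization preliminaries of \cref{sec:optprelim} with the blanket remark that ``Proofs for the results stated can be found in \cite{Nes98}.'' Your argument is exactly the standard one-step descent argument from that reference: evaluate the smoothness upper bound at $y = x - \tfrac{1}{L_2}\g f(x)$ and combine with $f(x^*) \le f(y)$.

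The one substantive point you raise that goes beyond the cited textbook statement is the domain issue: the paper only assumes $L_2$-smoothness on the open convex set $\X$, so the descent point $\bar x = x - \tfrac{1}{L_2}\g f(x)$ need not lie in $\X$, and the smoothness inequality is only asserted for $x,y\in\X$. You flag this correctly. Your proposed workaround---integrating the Lipschitz bound on $\g f$ along the ray $x - t\g f(x)$ for $t$ small enough to stay in $\X$---is fine as far as it goes, but the final sentence (``in the regime relevant to our applications the step $t=1/L_2$ stays in $\X$'') is an appeal to context rather than a proof of the lemma as stated. That is acceptable here, since the paper itself treats the lemma as a black-box preliminary and only invokes it in settings where the relevant iterates are well inside $\X$; but if you wanted a self-contained proof under the exact hypotheses, you would need either to assume smoothness on all of $\R^n$ (as Nesterov does), or to argue more carefully using the additional standing assumption in \cref{sec:optprelim} that $f$ is convex on $\R^n$.
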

Strong convexity allows us to relate function error and distance to the optimal point.
\begin{lemma}
\label{lemma:strong}
Let $\X \subseteq \R^n$ be an open convex set, and let $f:\R^n \to \R$ be $\mu$-strongly convex on $\X.$ Define $x^* = \argmin_{x \in \R^n} f(x)$, and assume that $x^*$ exists and $x^* \in \X.$ Then for all $x \in \X$ we have that
\[ \|x-x^*\|_2^2 \le \frac{2(f(x)-f(x^*))}{\mu}. \]
\end{lemma}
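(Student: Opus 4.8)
The plan is to read off the bound directly from the first-order characterization of $\mu$-strong convexity recorded immediately above this lemma, namely that for all $x, y \in \X$,
\[ f(y) \ge f(x) + \g f(x)^T(y-x) + \frac{\mu}{2}\|y-x\|_2^2. \]
First I would specialize this inequality to $x \gets x^*$ and $y \gets x$; this is legitimate since $x \in \X$ is given and $x^* \in \X$ is assumed, yielding
\[ f(x) \ge f(x^*) + \g f(x^*)^T(x-x^*) + \frac{\mu}{2}\|x-x^*\|_2^2. \]

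The next step is to kill the linear term. Since $x^*$ minimizes $f$ over all of $\R^n$ (in particular it is an unconstrained local minimizer) and $f$ is differentiable, first-order optimality gives $\g f(x^*) = 0$. Plugging this in leaves $f(x) \ge f(x^*) + \frac{\mu}{2}\|x-x^*\|_2^2$, and rearranging gives exactly $\|x-x^*\|_2^2 \le \frac{2(f(x)-f(x^*))}{\mu}$, which is the claim. If one wished to avoid differentiability entirely, the identical bound follows straight from the definition of $\mu$-strong convexity applied to the pair $x, x^* \in \X$ with interpolation parameter $t \in (0,1)$: using $f(tx + (1-t)x^*) \ge f(x^*)$ (the segment $[x^*,x]$ lies in $\X$ by convexity), cancelling $f(x^*)$, dividing by $t > 0$, and letting $t \to 0^+$ again produces $f(x^*) \le f(x) - \frac{\mu}{2}\|x-x^*\|_2^2$.

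I do not expect any real obstacle here: the statement is the standard quadratic-growth consequence of strong convexity, and the only points that need a word of justification are that both $x^*$ and $x$ lie in the open convex set $\X$ so the first-order characterization (or the defining inequality) may be invoked, and that $\g f(x^*) = 0$ because $x^*$ is an unconstrained minimizer. Both are immediate from the hypotheses.
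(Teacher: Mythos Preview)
Your argument is correct and is the standard proof of this fact. Note that the paper does not actually supply its own proof of this lemma: in \cref{sec:optprelim} the authors state that ``Proofs for the results stated can be found in \cite{Nes98}.'' Your derivation---specializing the first-order strong-convexity inequality at the unconstrained minimizer $x^*$ and using $\g f(x^*)=0$---is exactly the textbook argument one would find there.
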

\section{Proof of \cref{thm:mainopt}}
\label{sec:opt}
In this section we show \cref{thm:mainopt} through a series of reductions, where we work with a more general space of regression problems. We first show that smoothed quadratic and $\ell_p$ regression problems with bounded entries may be reduced to solving smoothed quadratic and $\ell_p^p$ regression, so that the $\ell_p$ norm piece is instead raised to the $p$ power, largely following the approach of \cite{LS19} Section B.7.
\begin{lemma}
\label{lemma:reduce1}
Let $A \in \R^{n\times m}$ be a matrix and $d \in \R^n$ a vector, all with entries bounded by $2^{\poly(\log m)}.$ Assume that all nonzero singular values of $A$ are between $2^{-\poly(\log m)}$ and $2^{\poly(\log m)}$. For $1 \le i \le m$ let $0 \le a_i \le 2^{\poly(\log m)}$ be constants and $q_i:\R\to\R$ be functions such that $|q_i(0)|, |q_i'(0)| \le 2^{\poly(\log m)}$ and $a_i/4 \le q_i''(x) \le 4a_i$ for all $x\in\R$. For $1 \le i \le m$ let $0 \le b_i \le 2^{\poly(\log m)}$ be constants and $h_i:\R\to\R$ be functions such that $h_i(0) = h_i'(0) = 0$ and $b_i/4 \le h_i''(x) \le 4b_i$ for all $x\in\R$. For an even integer $p \le \log m$ define \[ \val(x) \defeq \sum_{i=1}^m q_i(x_i) + \left(\sum_{i=1}^m h_i(x_i)^p \right)^{1/p} \text{ and } OPT \defeq \min_{Ax=d} \val(x). \] We can compute an $x'$ with $Ax'=d$ and $\val(x') \le OPT + \err$ in $\O(1)$ oracle calls which for $0 \le W \le 2^{\poly(\log m)}$ and
\begin{equation} \val_{p,W} \defeq \sum_{i=1}^m q_i(x_i) + W\sum_{i=1}^m h_i(x_i)^p \text{ and } OPT_{p,W} \defeq \min_{Ax=d} \val_{p,W}(x) \label{eq:solvenext} \end{equation} computes a $x'$ with $Ax'=d$ and $\val_{p,W}(x') \le OPT_{p,W} + \err.$
\end{lemma}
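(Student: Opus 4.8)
The plan is to reduce the smoothed quadratic-and-$\ell_p$ regression problem (objective $\val$, with the $\ell_p$ norm to the first power) to $\O(1)$ calls to the smoothed quadratic-and-$\ell_p^p$ oracle (objective $\val_{p,W}$, with the $\ell_p$ piece raised to the $p$-th power), by a binary search over the scaling parameter $W$. The key observation is that $t^{1/p}$ is a concave function of $t\ge 0$, so one can linearize it: for any target value $T>0$, minimizing $\sum_i q_i(x_i) + \big(\sum_i h_i(x_i)^p\big)^{1/p}$ is closely related to minimizing $\sum_i q_i(x_i) + \lambda \sum_i h_i(x_i)^p + (\text{const depending on }\lambda, T)$ for a suitable multiplier $\lambda$, where $\lambda$ plays the role of $W$ (up to the normalization $\lambda = W$). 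Concretely, using $t^{1/p} = \min_{\tau > 0} \big( \frac{1}{p}\tau^{1-p} t + (1 - \frac{1}{p})\tau \big)$, one sees that $\val(x) = \min_\tau \big( \val_{p, W(\tau)} (x) + (1-\tfrac1p)\tau \big)$ for $W(\tau) = \tfrac1p \tau^{1-p}$, and exchanging the two minimizations shows $OPT = \min_\tau \big( OPT_{p, W(\tau)} + (1-\tfrac1p)\tau \big)$. So it suffices to (approximately) find the optimal $\tau$, equivalently the optimal $W$, and then a near-optimal solution of $\val_{p,W}$ at that $W$ is near-optimal for $\val$.

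The steps, in order: (i) establish the concavity/linearization identity above and verify that $W$ ranges over an interval bounded by $2^{\poly(\log m)}$ — this uses that $A$'s nonzero singular values, $d$, and all the coefficients are quasi-polynomially bounded, so $\sum_i h_i(x_i)^p$ at the optimum of $\val$ is bounded above and below (when nonzero) by $2^{\pm\poly(\log m)}$, which pins down the relevant range of $\tau$ and hence $W$; (ii) handle the degenerate case where the optimal $\sum_i h_i(x_i)^p = 0$ (i.e. the $\ell_p$ piece can be driven to zero), which is detected by one oracle call with $W$ large and just reduces to pure quadratic minimization; (iii) set up a binary search on $\log W$ over this range: for a candidate $W$, call the oracle to get $x'$ with $\val_{p,W}(x') \le OPT_{p,W} + \err$, compute the value $\sum_i h_i(x_i')^p$, and compare it against the target value $\tau^p$ predicted by the stationarity condition $W = \frac1p \tau^{1-p}$ (equivalently, check whether the subgradient of $t \mapsto t^{1/p}$ at the current $\ell_p^p$-value matches $W$); (iv) because the map is monotone, $\O(\poly\log m) = \O(1)$ iterations of binary search localize $W$ to within a $(1+\err)$ multiplicative factor, and a standard convexity argument (the outer function $\tau \mapsto OPT_{p,W(\tau)} + (1-\tfrac1p)\tau$ is convex, its minimum is flat to second order near the optimum) shows the resulting $x'$ satisfies $\val(x') \le OPT + \err$ after possibly one final oracle call at the located $W$; (v) carefully propagate the $\err$ error bounds through these steps, noting that all the intermediate quantities are quasi-polynomially bounded so additive $\err$ errors do not blow up. This is essentially the argument of \cite{LS19} Section B.7, adapted to general convex $q_i$ and $h_i$ with stable second derivatives rather than the specific electric-energy objective there.

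\textbf{The main obstacle} I expect is the error analysis in the binary search, rather than the conceptual reduction. One has to be careful that: (a) the oracle for $\val_{p,W}$ returns only an \emph{approximate} minimizer, so the measured value $\sum_i h_i(x_i')^p$ is only approximately the true optimal $\ell_p^p$-value at that $W$, and one must argue this does not mislead the binary search — here the convexity/monotonicity of the relevant maps and the quasi-polynomial bounds on all quantities should give enough slack; (b) one must verify that the binary search terminates after $\O(1)$ (i.e. $\poly\log m$) steps, which requires that the range of $\log W$ is $\poly\log m$ wide and that each step halves the uncertainty, and that achieving multiplicative accuracy $\err$ in $W$ suffices to get additive accuracy $\err$ in $\val$ — this last point uses that near its minimum the convex function $\tau \mapsto OPT_{p,W(\tau)} + (1-\tfrac1p)\tau$ deviates only quadratically, so a relative error $\eta$ in $\tau$ costs only $O(\eta^2)$ times a bounded constant in objective value; (c) one must ensure the oracle's input conditions hold, in particular that the coefficients $W b_i$ and the shifted functions remain quasi-polynomially bounded across all the $W$ values queried. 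The degenerate zero case (step (ii)) is a minor but genuine edge case that must be dispatched separately since the linearization identity's multiplier $\tau$ would be $0$ there. None of these steps involves deep new ideas; the work is in the bookkeeping, which is why I would present it largely by reference to the analogous argument in \cite{LS19}.
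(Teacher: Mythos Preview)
Your proposal is correct and takes essentially the same approach as the paper. The paper's proof is a direct black-box application of Lemma~B.3 from \cite{LS19} (which packages exactly the binary search over $W$ you describe, built on the same linearization $t^{1/p} = \min_{\tau>0}\big(\tfrac1p\tau^{1-p}t + (1-\tfrac1p)\tau\big)$), and the bulk of its text is spent verifying that lemma's preconditions---including a regularization step (adding $\nu x_i^2$ to each $q_i$ and perturbing $d$ away from zero) to force strong convexity and a quasi-polynomial lower bound on the $\ell_p$ piece, which is the paper's device for avoiding your degenerate case~(ii).
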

\begin{proof}
We will apply Lemma B.3 from \cite{LS19}. Thus we must define the functions $f, g, h$ and choose constants $C_0,\chi,T_1,T_2,\mu_f,L_g,L_h,Z_1,Z_2,H_1,H_2,\eps_1$ satisfying its constraints. For simplicity, we assume that $A$ has rank $n$, so that $AA^T$ is invertible. In the case where $A=B^T$, a graph incidence matrix, the nullspace is simply the $1$ vector, and the analysis can proceed similarly. We pick $C_0=1$ and $\chi = \{x\in\R^n : \|x\|_\infty < 2^{\poly(\log m)}\},$ which is valid because all derivatives of $q_i,h_i$ and condition number of $A$ are bounded by $2^{\poly(\log m)}$.

\paragraph{Regularizing the objective.} Set $\nu = 2^{-\poly(\log m)}$ to be some sufficiently small parameter, and replace each $q_i(x) \to q_i(x) + \nu x^2$. Clearly, for all $x \in \chi$, the value of the objective is affected by at most $\nu\|x\|_2^2 = 2^{-\poly(\log m)}$ for sufficiently small $\nu$. From this point forwards, we assume that $q_i''(x) \ge 2\nu$ for all $x$. We also assume that the demand $d$ has all components at least $\nu$, as changing $d \to d+\nu1$ affects the objective value by at most $2^{-\poly(\log m)}.$

\paragraph{Reduction to unconstrained problem and choice of $f,g,h$.} We first reduce to the unconstrained case by removing the constraint $Ax=d$. Define $x_0 = A^T(AA^T)^{-1} d$, so that $Ax_0 = d$, $P\in \R^{m\times (m-n)}$ be an isomorphism onto the nullspace of $A$, which may be computed by inverting an arbitrary $n \times n$ minor of $A$. Specifically, if $A = \begin{bmatrix} X & Y \end{bmatrix}$ where $X \in \R^{n\times(m-n)}$ and $Y\in \R^{n\times n}$ is invertible, we set $P = \begin{bmatrix} I_{m-n} \\ Y^{-1}X \end{bmatrix}.$ In the case $A=B^T$, we may take $P$ to be determined by the case where $Y$ corresponds to a tree. Thus, we may replace the condition $Ax=d$ with $x=Py+x_0$ for some $y\in\R^{m-n}$. Our choice of $f,g,h$ are there
\[ f(y) \defeq \sum_{i=1}^m q_i([Py+x_0]_i) \enspace \text{ and } \enspace g(y) \defeq \left(\sum_{i=1}^m h_i([Py+x_0]_i)^p \right)^{1/p} \enspace \text{ and } \enspace h(x) = x^p. \]

\paragraph{Choice of remaining parameters.} As we have regularized each $q_i(x)$, we have that $\g^2 f(x) \se 2P^T\nu P \se 2^{-\poly(\log m)}I$, as $\nu \ge 2^{-\poly(\log m)}$ and $P^TP \se 2^{-\poly(\log m)}I$ by the condition number bound on $A$. Thus, we may set $\mu_f = 2^{-\poly(\log m)}$ by \cref{lemma:equiv}. As all entries of $A$ are bounded by $2^{\poly(\log m)}$ and all entries of $d$ are at least $\nu$ in absolute value by our reduction, all $x$ with $Ax=d$ are at least $2^{-\poly(\log m)}$ in some coordinate. Thus $f(y),g(y) \ge 2^{-\poly(\log m)}$, so we may set $T = 2^{-\poly(\log m)}.$ By our choice of $\chi$, $f(y),g(y) \le 2^{\poly(\log m)}$ for all $y\in\chi$, so we set $T_2 = 2^{\poly(\log m)}.$

We may set $L_g = 2^{\poly(\log m)}$, and for $p\le\log m$ and our choice $h(x) = x^p$ and $T_1,T_2$ we may set $L_h = 2^{\poly(\log m)}.$ We set $H_1 = h(T_1) \ge 2^{-\poly(\log m)}$ and $H_2 = h(T_2) \le 2^{\poly(\log m)}.$ Finally, we set $\eps_1 = 2^{-\poly(\log m)}.$ We set $Z_1 = 0$ and $Z_2 = \frac{C_0}{h'(T_1)} \le 2^{\poly(\log m)}.$ These parameters satisfy all desired properties, and
\[ \log \max\{ L_g,L_h,Z_2,H_2\} = \O(1) \enspace \text{ and } \enspace \log \min\{ \mu_f,\eps_1\} = -\O(1). \] This way, Lemma B.3 line (21) tells us it suffices to make $O\left(\log \frac{H_2Z_2L_gL_h}{\mu_f\eps_1}\right) = \O(1)$ oracle calls with accuracy parameter $\frac{\mu_f\eps_1^2}{100Z_2^2L_g^4L_h^2} \ge 2^{-\poly(\log m)}$, as desired.

\paragraph{Finishing the proof.} We now argue that the oracle described in \cref{lemma:reduce1} satisfies lines (19) and (20) in Lemma B.3 of \cite{LS19}. Line (19) follows by definition, and line (20) follows by line (19), $2^{\poly(\log m)}$ smoothness of the objective, and \cref{lemma:smoothgrad}. Thus, applying Lemma B.3 gives a $y$ with \[ \|\g f(y) + \g g(y)\|_2 \le \eps_1. \] Let $y^* = \argmin_y f(y) + g(y).$ By convexity we have that
\begin{align*} (f(y) + g(y)) - (f(y^*) + g(y^*)) &\le \left(\g f(y) + \g g(y)\right)^T(y-y^*) \\ &\le \|\g f(y) + \g g(y)\|_2\|y-y^*\|_2 \le 2^{-\poly(\log m)}, \end{align*} by our choice $\eps = 2^{-\poly(\log m)}$ that $\|y-y^*\|_2 \le 2^{\poly(\log m)}$ from our choice of $\chi$.
\end{proof}
We now show that objectives as in \cref{eq:solvenext} may be iteratively refined.
\begin{lemma}
\label{lemma:reduce2}
Let $A \in \R^{n\times m}$ be a matrix and $d \in \R^n$ a vector, all with entries bounded by $2^{\poly(\log m)}.$ Assume that all nonzero singular values of $A$ are between $2^{-\poly(\log m)}$ and $2^{\poly(\log m)}$. For $1 \le i \le m$ let $0 \le a_i \le 2^{\poly(\log m)}$ be constants and $q_i:\R\to\R$ be functions such that $|q_i(0)|, |q_i'(0)| \le 2^{\poly(\log m)}$ and $a_i/4 \le q_i''(x) \le 4a_i$ for all $x\in\R$. For $1 \le i \le m$ let $0 \le b_i \le 2^{\poly(\log m)}$ be constants and $h_i:\R\to\R$ be functions such that $h_i(0) = h_i'(0) = 0$ and $b_i/4 \le h_i''(x) \le 4b_i$ for all $x\in\R$. For an even integer $p \le \log m$ define $\val_{p,W}(x)$ and $OPT_{p,W}$ as in \cref{eq:solvenext}. We can compute a $x'$ with $Ax'=d$ and $\val_{p,W}(x') \le OPT_{p,W} + \err$ with one call to a solver for $(A^TA)^\dagger d$ and $\O(2^{22p})$ oracle calls which for $g \in \R^m, r \in \R_{\ge0}^m$, all entries bounded by $2^{\poly(\log m)}$, and \[ \val_{g,r,b}(x) \defeq \sum_{i=1}^m g_ix_i + \left(\sum_{i=1}^m r_ix_i^2\right) + \sum_{i=1}^m b_i^px_i^{2p} \enspace \text{ and } \enspace OPT_{g,r,b} = \min_{Ax=0} \val_{g,r,b}(x) \] computes an $x'$ with $Ax'=0$ and $\val_{g,r,b}(x') \le OPT_{g,r,b} + \err.$
\end{lemma}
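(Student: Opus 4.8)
The plan is to run the iterative-refinement framework of \cref{lemma:refineprogress} on the per-coordinate functions $\Phi_i(x) \defeq q_i(x) + W h_i(x)^p$, so that $\val_{p,W}(x) = \sum_i \Phi_i(x_i)$. At the current feasible iterate $f$ (with $Af = d$) I would take $g_i \defeq \Phi_i'(f_i) = q_i'(f_i) + pW h_i(f_i)^{p-1} h_i'(f_i)$ and split the per-coordinate error $\Phi_i(f_i+\Delta_i) - \Phi_i(f_i) - g_i\Delta_i$ into its $q_i$ piece and its $W h_i^p$ piece. Applying \cref{eq:2iter} with $c_1 = a_i/4,\, c_2 = 4a_i$ sandwiches the $q_i$ piece between $\tfrac{a_i}{8}\Delta_i^2$ and $2a_i\Delta_i^2$; applying \cref{eq:piter} with $c_1 = b_i/4,\, c_2 = 4b_i$ (note $h_i(0)=h_i'(0)=0$, as required, and the even integer ``$2p$'' now plays the role of the exponent ``$p$'' in \cref{lemma:refineprogress}) sandwiches the $W h_i^p$ piece between $2^{-16p} W b_i^p\big(f_i^{2p-2}\Delta_i^2 + \Delta_i^{2p}\big)$ and $2^{6p} W b_i^p\big(f_i^{2p-2}\Delta_i^2 + \Delta_i^{2p}\big)$, after simplifying $(32 b_i)^{-2p}(b_i/4)^{3p} = 2^{-16p} b_i^p$ and $(64 b_i)^p = 2^{6p} b_i^p$.

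Setting $r_i \defeq a_i + W b_i^p f_i^{2p-2}$ and $s_i \defeq W b_i^p$, these two sandwiches combine — using the crude bounds $\tfrac{a_i}{8} \ge 2^{-16p} a_i$ and $2 a_i \le 2^{6p} a_i$ — to give exactly the hypothesis \cref{eq:bounded} of \cref{lemma:refineprogress} with $C_1 = 2^{-16p}$ and $C_2 = 2^{6p}$, hence $C_2/C_1 = 2^{22p}$. (Convexity of $\Phi_i$, needed by \cref{lemma:refineprogress}, follows since $h_i \ge 0$ by \cref{lemma:inth}, so $h_i^p$ is convex.) The refinement step then requires (approximately) minimizing $g^T\Delta + C_1\big(\sum_i r_i\Delta_i^2 + \sum_i s_i\Delta_i^{2p}\big)$ over circulations $A\Delta = 0$; folding $C_1$ into $r_i, s_i$ and rewriting $C_1 s_i \Delta_i^{2p} = (2^{-16}W^{1/p} b_i)^p \Delta_i^{2p}$ (with $2^{-16}W^{1/p} b_i \le 2^{\poly(\log m)}$ since $W \le 2^{\poly(\log m)}$), this is precisely an instance of $\val_{g,r,b}$ and is handled by the given oracle. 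By \cref{lemma:refineprogress} one such step followed by a $\tfrac{C_1}{C_2}$-sized update multiplies the optimality gap $\val_{p,W}(f) - OPT_{p,W}$ by $1 - 2^{-22p}$, so $\O(2^{22p})$ oracle calls suffice once the initial gap is at most $2^{\poly(\log m)}$.

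The remaining work is standard bookkeeping. I would take the initial feasible point $f^{(0)} = A^T(AA^T)^\dagger d$, obtained from the single allotted solve for $(A^TA)^\dagger d$ (in the incidence-matrix case a Laplacian solve, with $A^Tf^{(0)} = d$); bound the initial gap by $2^{\poly(\log m)}$ using $q_i, W h_i^p \ge 0$ together with the polynomial magnitude bounds on all parameters and on the working region; and argue that every iterate stays in a region of radius $2^{\poly(\log m)}$ since $\val_{p,W}$ is coercive and nonincreasing along the process, which in turn keeps $g_i, r_i, s_i \le 2^{\poly(\log m)}$, as the oracle requires. Finally, since the oracle solves $\val_{g,r,b}$ only to additive error $\err$ rather than exactly, I would note, as in \cite{KPSW19,LS19}, that an approximate minimizer of the surrogate still yields the same geometric decrease up to an extra $\err$ per step; summed over $\O(2^{22p})$ steps this is absorbed into the final $\err$ by solving each subproblem to accuracy $\err$.

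I expect the main obstacle to be not any single step but the careful constant tracking: one must check that combining the two disparate sandwiches — one purely quadratic with spread $16$, one coming from \cref{lemma:iterh} with a $2^{16p}$-versus-$2^{6p}$ spread — genuinely fits the single-$(C_1,C_2)$ template of \cref{lemma:refineprogress} with $C_2/C_1$ no worse than $2^{22p}$, and that the folding/rescaling and feasibility reductions preserve every $2^{\poly(\log m)}$ magnitude bound the oracle demands; given that, the convergence conclusion is immediate from \cref{lemma:refineprogress}.
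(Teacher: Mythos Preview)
Your proposal is correct and follows essentially the same route as the paper: apply \cref{lemma:iterh} separately to the $q_i$ piece (via \cref{eq:2iter}) and the $Wh_i^p$ piece (via \cref{eq:piter}) to obtain the two-sided bound \cref{eq:bounded} with $C_1=2^{-16p}$, $C_2=2^{6p}$, then invoke \cref{lemma:refineprogress} to get a $(1-2^{-22p})$-contraction per oracle call, starting from $A^T(AA^T)^\dagger d$. Your treatment is in fact slightly more careful than the paper's write-up in tracking the factor $W$ through $r_i$ and $s_i$ and in discussing the effect of the oracle's additive $\err$, but the argument is otherwise the same.
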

\begin{proof}
We use \cref{algo:reduce2}, which reduces \cref{eq:solvenext} to calls to $\OracleToP$.

\begin{algorithm}[h]
\caption{$\ReduceToP(A,d,q,h,a,b)$. Takes matrix $A\in\R^{n\times m}$, vector $d$, constants $a_i,b_i$, and functions $q_i,h_i$ for $1\le i \le m$. Computes $x$ with $Ax=d$ and $\val_{p,W}(x) \le OPT_{p,W} + \err$ with $\O(2^{22p})$ calls to $\OracleToP(A,g,r,b)$, which computes an $x$ with $Ax=0$ and $\val_{g,r,b}(x') \le OPT_{g,r,b} + \err.$}
Initialize $x \assign A^T(AA^T)^\dagger d$. \label{line:electric} \\
\For{$1 \le t \le \O(2^{22p})$}{\label{line:iteratefor}
	Initialize $r$. \\
	\For{$1 \le i \le m$}{
		$r_i = a_i + b_i^px_i^{2p-2}.$ \label{line:pickr} \\
	}
	$g \assign \g \val_{p,W}(x)$. \label{line:pickg} \\
	$\tD \assign \OracleToP(A,g,2^{-16p}r,2^{-16}b).$ \\
	$x \assign x + 2^{-22p}\tD.$ \label{line:iteration} \\
}
Return $x$.
\label{algo:reduce2}
\end{algorithm}

Note that all entries of $x = A^T(AA^T)^\dagger d$ are bounded by $2^{\poly(\log m)}$ because all nonzero singular values of $A$ are between $2^{-\poly(\log m)}$ and $2^{\poly(\log m)}$. Therefore, $\val_{p,W}(x) \le 2^{\poly(\log m)}$.

We show that iteration in line \ref{line:iteration} decreases the value of the objective on $x$ multiplicatively towards $OPT$.
Applying \cref{lemma:iterh} \cref{eq:2iter} to $q_i$ with $c_1 = a_i/4$ and $c_2 = 4a_i$ gives
\begin{equation} \frac18a_i\D_i^2 \le q_i(x_i+\D_i)-q_i(x_i)-q_i'(x_i)\D_i \le 2a_i\D_i^2. \label{eq:add1} \end{equation}
Applying \cref{lemma:iterh} \cref{eq:piter} to $h_i$ with $c_1 = b_i/4$ and $c_2 = 4b_i$ gives
\begin{equation} 2^{-16p}b_i^p\left(x_i^{2p-2}\D_i^2 + \D_i^{2p} \right) \le h_i(x_i+\D_i)^p-h_i(x_i)^p-p\cdot h_i(x_i)^{p-1}h_i'(x_i) \le 2^{6p}b_i^p\left(x_i^{2p-2}\D_i^2 + \D_i^{2p} \right). \label{eq:add2} \end{equation}
Adding \cref{eq:add1,eq:add2} for all $i$ and using our choice of $g,r$ in line \ref{line:pickg}, \ref{line:pickr} gives that for $\D$
\begin{equation} 2^{-16p}\left(\sum_{i=1}^m r_i\D_i^2 + \sum_{i=1}^m b_i^p\D_i^{2p}\right) \le \val_{p,W}(x+\D)-\val_{p,W}(x)-g^T\D \le 2^{6p}\left(\sum_{i=1}^m r_i\D_i^2 + \sum_{i=1}^m b_i^p\D_i^{2p}\right) \label{eq:add3} \end{equation}
Applying \cref{lemma:refineprogress} to \cref{eq:add3} gives us that
\[ \val_{p,W}(x+2^{-22p}\D)-OPT_{p,W} \le (1-2^{-22p})\left(\val_{p,W}(x)-OPT_{p,W}\right) + \err. \] As $\val_{p,W}(x) \le 2^{\poly(\log m)}$ initially, performing this iteration $\O(2^{22p})$ times as in line \ref{line:iteratefor} results in $\val_{p,W}(x) \le OPT_{p,W} + \err$ at the end, as desired.
\end{proof}
Now, the proof of \cref{thm:mainopt} follows from verifying the conditions of \cref{lemma:reduce1,lemma:reduce2} and applying \cref{thm:smoothflow}.
\begin{proof}[Proof of \cref{thm:mainopt}]
We check the conditions of \cref{lemma:reduce1,lemma:reduce2}. By Lemma D.1 of \cite{LS19} all resistances and residual capacities are polynomially bounded, hence all quantities encountered during the algorithm are $2^{\poly(\log m)}$. For $A = B^T$, we have that $AA^T = B^TB$ is a graph Laplacian, and hence has polynomially bounded singular values. We may compute an initial point $x = B(B^TB)^\dagger d$ as in line \ref{line:electric} by computing an electric flow. All remaining conditions follow directly for the choice $b_i=1$.

We now analyze the runtime. We set $b_i=1$ to apply \cref{lemma:reduce1,lemma:reduce2} to \cref{thm:mainopt}. \cref{thm:smoothflow} allows us to solve the objectives desired by $\OracleToP(B^T,g,r,b)$ in $m^{1+o(1)}$ time for $b = 1$ and $p = 2\left\lceil\sqrt{\log m}\right\rceil$. Additionally, $\O(2^{22p}) = m^{o(1)}$ for $p = 2\left\lceil\sqrt{\log m}\right\rceil$, hence the total runtime is $m^{1+o(1)}$ as desired.
\end{proof}

\end{document}